\providecommand{\U}[1]{\protect\rule{.1in}{.1in}}
\newtheorem{theorem}{Theorem}
\newtheorem{corollary}[theorem]{Corollary}
\newtheorem{definition}[theorem]{Definition}
\newtheorem{lemma}[theorem]{Lemma}
\newtheorem{proposition}[theorem]{Proposition}
\newtheorem{remark}[theorem]{Remark}
\newenvironment{proof}[1][Proof]{\noindent\textbf{#1.} }{\ \rule{0.5em}{0.5em}}
\numberwithin{equation}{section}
\begin{document}

\title{\textbf{Quantum Markov chains, sufficiency of quantum channels, and R\'{e}nyi
information measures}}
\author{Nilanjana Datta\thanks{Statistical Laboratory, Centre for Mathematical
Sciences, University of Cambridge, Wilberforce Road, Cambridge CB3 0WB, UK}
\and Mark M. Wilde\thanks{Hearne Institute for Theoretical Physics, Department of
Physics and Astronomy, Center for Computation and Technology, Louisiana State
University, Baton Rouge, Louisiana 70803, USA}}
\maketitle

\begin{abstract}
A short quantum Markov chain is a tripartite state $\rho_{ABC}$ such that
system $A$ can be recovered perfectly by acting on system $C$ of the reduced
state $\rho_{BC}$. Such states have conditional mutual information
$I(A;B|C)$\ equal to zero and are the only states with this property. A
quantum channel $\mathcal{N}$\ is sufficient for two states $\rho$ and
$\sigma$ if there exists a recovery channel using which one can perfectly
recover $\rho$ from $\mathcal{N}(\rho)$ and $\sigma$ from $\mathcal{N}%
(\sigma)$. The relative entropy difference $D(\rho\Vert\sigma)-D(\mathcal{N}%
(\rho)\Vert\mathcal{N}(\sigma))$ is equal to zero if and only if $\mathcal{N}%
$\ is sufficient for $\rho$ and $\sigma$. In this paper, we show that these
properties extend to R\'{e}nyi generalizations of these information measures
which were proposed in [Berta \textit{et al.}, J.~Math.~Phys.~\textbf{56},
022205, (2015) and Seshadreesan \textit{et al.}, J.~Phys.~A \textbf{48},
395303, 2015], thus providing an alternate characterization of short quantum
Markov chains and sufficient quantum channels. These results give further
support to these quantities as being legitimate R\'{e}nyi generalizations of
the conditional mutual information and the relative entropy difference. Along
the way, we solve some open questions of Ruskai and Zhang, regarding the trace
of particular matrices that arise in the study of monotonicity of relative
entropy under quantum operations and strong subadditivity of the von Neumann entropy.

\end{abstract}

\section{Introduction}

Markov chains and sufficient statistics are two fundamental notions in
probability \cite{Feller,Norris} and statistics \cite{stats}. Three random
variables $X$, $Y$, and $Z$ constitute a three-step Markov chain (denoted as
$X-Y-Z$) if $X$ and $Z$ are independent when conditioned on $Y$. In
particular, if $p_{XYZ}(x,y,z)$ is their joint probability distribution, then%
\begin{align}
p_{XYZ}(x,y,z)  &  =p_{X}(x)\ p_{Y|X}(y|x)\ p_{Z|Y}(z|y)\nonumber\\
&  =p_{X|Y}(x|y)\ p_{Z|Y}(z|y)\ p_{Y}(y). \label{eq:MC}%
\end{align}
In the information-theoretic framework, such a Markov chain corresponds to a
\emph{{recoverability condition}} in the following sense. Consider $X$, $Y$,
and $Z$ to be the inputs and outputs of two channels (i.e., stochastic maps)
$p_{Y|X}$ and $p_{Z|Y}$, as in the figure below. \begin{figure}[ptb]
\begin{center}
\includegraphics[
width=3.00in
]
{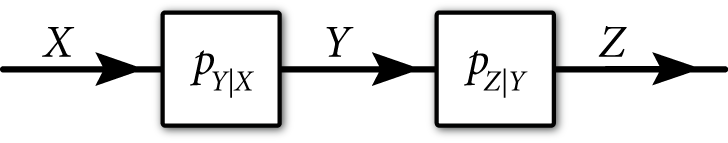}
\end{center}
\end{figure}If $X-Y-Z$ is a three-step Markov chain, then the input $X$, if
lost, can be recovered from $Y$ alone (without any knowledge of $Z$) by the
action of the stochastic map $p_{X|Y}$, as is evident from (\ref{eq:MC}).

It is well known that such a Markov chain $X-Y-Z$ can be characterized by an
information measure \cite{book1991cover}, namely, the \emph{{conditional
mutual information}} $I(X;Z|Y)$. For any three random variables $X$, $Y$ and
$Z$, it is defined as%
\begin{equation}
I(X;Z|Y)\equiv H(XY)+H(ZY)-H(Y)-H(XYZ),
\end{equation}
where $H(W)\equiv-\sum_{w}p_{W}(w)\log p_{W}(w)$ is the Shannon entropy of a
random variable $W\sim p_{W}(w)$. It is non-negative and equal to zero if and
only if $X-Y-Z$ is a Markov chain.

In statistics, for a given sample of independent and identically distributed
data conditioned on an unknown parameter $\theta$, a \emph{{sufficient
statistic}} is a function of the sample whose value contains all the
information needed to compute any estimate of the parameter. One can extend
this notion to that of a sufficient channel (or sufficient stochastic map), as
discussed in \cite{Petz1986,Petz1988,Mosonyi2004,M05}. A channel $T\equiv
T_{Y|X}$ is sufficient for two input distributions $p_{X}$ and $q_{X}$ if
there exists another channel (a \emph{{recovery channel}}) such that both
these inputs can be recovered perfectly by sending the outputs of the channel
$T_{Y|X}$ corresponding to them through it. This notion of channel sufficiency
is likewise characterized by an information measure, namely, the relative
entropy difference%
\begin{equation}
D(p_{X}\Vert q_{X})-D(T(p_{X})\Vert T(q_{X})), \label{eq:cl-rel-ent-diff}%
\end{equation}
where $T(p_{X})$ and $T(q_{X})$ are the distributions obtained after the
action of the channel, and $D(p_{X}\Vert q_{X})$ denotes the relative entropy
(or Kullback-Leibler divergence)~\cite{book1991cover} between $p_{X}$ and
$q_{X}$. It is defined as%
\begin{equation}
D(p_{X}\Vert q_{X})\equiv\sum_{x}p_{X}(x)\log\left(  \frac{p_{X}(x)}{q_{X}%
(x)}\right)  ,
\end{equation}
if for all $x$, $q_{X}(x)\neq0$ if $p_{X}(x)\neq0$. It is equal to $+\infty$
otherwise. The notion of recoverability provides a connection between the
notions of Markov chains and sufficient channels.

The generalization of the above ideas to quantum information theory has been a
topic of continuing and increasing interest (see,
e.g.,~\cite{HMPB11,BSW14,FR14} and references therein). In the quantum
setting, density operators play a role analogous to that of probability
distributions in the classical case, and in \cite{HJPW04}, a quantum Markov
chain $A-C-B$\ was defined to be a tripartite density operator $\rho_{ABC}$
with conditional (quantum) mutual information $I(A;B|C)_{\rho}$ equal to zero,
where%
\begin{equation}
I(A;B|C)_{\rho}\equiv H(AC)_{\rho}+H(BC)_{\rho}-H(C)_{\rho}-H(ABC)_{\rho},
\label{eq:CMI}%
\end{equation}
and $H(F)_{\sigma}\equiv-$Tr$\left\{  \sigma_{F}\log\sigma_{F}\right\}  $
denotes the von Neumann entropy of a density operator $\sigma_{F}$. (We take
the convention $A-C-B$ for a quantum Markov chain because we are often
interested in quantum correlations between Alice ($A$) and Bob ($B$), which
are potentially mediated by a third party, here labeled by $C$.) Strong
subadditivity of the von Neumann entropy guarantees that the conditional
mutual information $I(A;B|C)_{\rho}$ is non-negative for all density operators
\cite{PhysRevLett.30.434,LR73}, and it is equal to zero if and only if there
is a decomposition of $\mathcal{H}_{C}$ as%
\begin{equation}
\mathcal{H}_{C}=\bigoplus\limits_{j}\mathcal{H}_{C_{L_{j}}}\otimes
\mathcal{H}_{C_{R_{j}}} \label{eq:h-c-direct-sum}%
\end{equation}
such that%
\begin{equation}
\rho_{ABC}=\bigoplus\limits_{j}q(j)\rho_{AC_{L_{j}}}^{j}\otimes\rho_{C_{R_{j}%
}B}^{j}, \label{eq:direct-sum-structure}%
\end{equation}
for a probability distribution $\{q(j)\}$ and sets of density operators
$\{\rho_{AC_{L_{j}}}^{j},\rho_{C_{R_{j}}B}^{j}\}$ \cite{HJPW04}. Following
\cite{HJPW04}, we call such states \emph{short quantum Markov chains} $A-C-B$.
In analogy with the classical case, $I(A;B|C)_{\rho}=0$ is equivalent to the
full state $\rho_{ABC}$ being recoverable after the loss of system $A$ by the
action of a quantum recovery channel $\mathcal{R}_{C\rightarrow AC}^{P}$\ on
system $C$ alone:%
\begin{equation}
I(A;B|C)_{\rho}=0\Leftrightarrow\rho_{ABC}=\mathcal{R}_{C\rightarrow AC}%
^{P}(\rho_{BC}),
\end{equation}
where%
\begin{equation}
\mathcal{R}_{C\rightarrow AC}^{P}(\cdot)\equiv\rho_{AC}^{\frac{1}{2}}\rho
_{C}^{-\frac{1}{2}}(\cdot)\rho_{C}^{-\frac{1}{2}}\rho_{AC}^{\frac{1}{2}}
\label{eq:special-Petz}%
\end{equation}
is a special case of the so-called \emph{Petz recovery channel}%
~\cite{Petz1986,Petz1988}. Note that this channel acts as the identity on
system $B$ and the superscript $P$ refers to Petz.

As a generalization of the classical notion of a sufficient channel, several
works have discussed and studied the notion of a sufficient quantum channel
\cite{Petz1986,Petz1988,Mosonyi2004,M05}. A definition of this concept is as follows:

\begin{definition}
[Sufficiency of a quantum channel]\label{def:channel-suff}Let $\rho$ and
$\sigma$ be density operators acting on a Hilbert space $\mathcal{H}$, and let
$\mathcal{N}$ be a quantum channel taking these density operators to density
operators acting on a Hilbert space $\mathcal{K}$. Then the quantum channel
$\mathcal{N}$\ is sufficient for them if one can perfectly recover $\rho$ from
$\mathcal{N(}\rho)$ and $\sigma$ from $\mathcal{N(}\sigma)$ by the action of a
quantum recovery channel $\mathcal{R}$, i.e., if there exists an $\mathcal{R}$
such that%
\begin{equation}
\rho=\left(  \mathcal{R\circ N}\right)  (\rho),\ \ \ \ \ \ \ \ \sigma=\left(
\mathcal{R\circ N}\right)  (\sigma). \label{eq:recover-rho}%
\end{equation}
We define sufficiency in the same way even if $\rho$ and $\sigma$ are
arbitrary positive semi-definite operators.
\end{definition}

If \eqref{eq:recover-rho} is true for some recovery channel $\mathcal{R}$, it
is known that the following Petz recovery channel $\mathcal{R}_{\sigma
,\mathcal{N}}^{P}$ satisfies (\ref{eq:recover-rho}) as well \cite{Petz1988}:
\begin{equation}
\mathcal{R}_{\sigma,\mathcal{N}}^{P}(\omega)\equiv\sigma^{\frac{1}{2}%
}\mathcal{N}^{\dag}\left(  [\mathcal{N}(\sigma)]^{-\frac{1}{2}}\omega
\lbrack\mathcal{N}(\sigma)]^{-\frac{1}{2}}\right)  \sigma^{\frac{1}{2}}.
\label{eq:Petz-map}%
\end{equation}
(Note that the Petz recovery channel in (\ref{eq:special-Petz}) is a special
case of (\ref{eq:Petz-map}) with $\sigma=\rho_{AC}$ and $\mathcal{N}%
=\operatorname{Tr}_{A}$.) As a generalization of the classical case, the
sufficiency of a quantum channel is characterized by the following information
measure, the relative entropy difference%
\begin{equation}
D(\rho\Vert\sigma)-D(\mathcal{N(}\rho)\Vert\mathcal{N(}\sigma)),
\label{eq:rel-ent-diff}%
\end{equation}
where $D(\rho\Vert\sigma)$ denotes the \emph{{quantum relative entropy}}
\cite{U62}. It is defined as%
\begin{equation}
D(\rho\Vert\sigma)\equiv\text{Tr}\left\{  \rho\left[  \log\rho-\log
\sigma\right]  \right\}  , \label{rel-entropy}%
\end{equation}
whenever the support of $\rho$ is contained in the support of $\sigma$ and it
is equal to $+\infty$ otherwise. The relative entropy difference in
(\ref{eq:rel-ent-diff}) is non-negative due to the monotonicity of relative
entropy under quantum channels \cite{Lindblad1975,U77}, and it is equal to
zero if and only if $\mathcal{N}$ is sufficient for $\rho$ and $\sigma$, i.e.,
if the Petz recovery channel $\mathcal{R}_{\sigma,\mathcal{N}}^{P}$ satisfies
(\ref{eq:recover-rho}) \cite{Petz1986,Petz1988}. Further, Mosonyi and Petz
have shown that the relative entropy difference in (\ref{eq:rel-ent-diff}) is
equal to zero if and only if $\rho$, $\sigma$, and $\mathcal{N}$ have the
explicit form recalled below in Theorem~\ref{thm:mos-petz-decomp} of
Section~\ref{sec:prelim} \cite{Mosonyi2004,M05}, which generalizes the result
stated in (\ref{eq:h-c-direct-sum})-(\ref{eq:direct-sum-structure}).

Due to its operational interpretation in the quantum Stein's lemma
\cite{HP91}, the quantum relative entropy plays a central role in quantum
information theory. In particular, fundamental limits on the performance of
information-processing tasks in the so-called \textquotedblleft asymptotic,
memoryless (or i.i.d.) setting\textquotedblright\ are given in terms of
quantities derived from the quantum relative entropy.

There are, however, other generalized relative entropies (or divergences)
which are also of operational significance. Important among these are the
\emph{R\'{e}nyi relative entropies}~\cite{Renyi,P86} and the more recently
defined \emph{sandwiched R\'{e}nyi relative entropies}~\cite{MDSFT13, WWY13}.
For $\alpha\in(0,1)$, the R\'{e}nyi relative entropies arise in the quantum
Chernoff bound \cite{chernoff}, which characterizes the minimum probability of
error in discriminating two different quantum states in the setting of
asymptotically many copies. Moreover, in analogy with the operational
interpretation of their classical counterparts, the R\'{e}nyi relative
entropies can be viewed as generalized cutoff rates in quantum binary state
discrimination~\cite{milan}. The sandwiched R\'{e}nyi relative entropies find
application in the strong converse domain of a number of settings dealing with
hypothesis testing or channel capacity \cite{WWY13,MO13,GW13,TWW14,CMW14,HT14}.

This motivates the introduction of R\'{e}nyi generalizations of the
conditional mutual information. Two of these generalizations, defined in
\cite{BSW14}, are given as follows:%
\begin{align}
I_{\alpha}(A;B|C)_{\rho}  &  \equiv\frac{1}{\alpha-1}\log\text{Tr}\left\{
\rho_{ABC}^{\alpha}\rho_{AC}^{\frac{1-\alpha}{2}}\rho_{C}^{\frac{\alpha-1}{2}%
}\rho_{BC}^{1-\alpha}\rho_{C}^{\frac{\alpha-1}{2}}\rho_{AC}^{\frac{1-\alpha
}{2}}\right\}  ,\label{eq:Renyi-CMI}\\
\widetilde{I}_{\alpha}(A;B|C)_{\rho}  &  \equiv\frac{2\alpha}{\alpha-1}%
\log\left\Vert \rho_{ABC}^{\frac{1}{2}}\rho_{AC}^{\frac{1-\alpha}{2\alpha}%
}\rho_{C}^{\frac{\alpha-1}{2\alpha}}\rho_{BC}^{\frac{1-\alpha}{2\alpha}%
}\right\Vert _{2\alpha}, \label{eq:sand-Renyi-cmi}%
\end{align}
where $\alpha\in\left(  0,1\right)  \cup\left(  1,\infty\right)  $ denotes the
R\'{e}nyi parameter. (Note that we use the notation $\left\Vert A\right\Vert
_{\alpha}\equiv\operatorname{Tr}\{(\sqrt{A^{\dag}A})^{\alpha}\}^{1/\alpha}$
even for $\alpha\in\left(  0,1\right)  $, when it is not a norm.) Both these
quantities converge to (\ref{eq:CMI}) in the limit $\alpha\rightarrow1$, they
are non-negative,$\,$ and obey several properties of the conditional mutual
information defined in (\ref{eq:CMI}), as shown in \cite{BSW14}. In
\cite{SBW14}, the authors proposed some definitions for R\'{e}nyi
generalizations of a relative entropy difference, two of which are as follows:%
\begin{align}
\Delta_{\alpha}(\rho,\sigma,\mathcal{N})  &  \equiv\frac{1}{\alpha-1}%
\log\text{Tr}\left\{  \rho^{\alpha}\sigma^{\frac{1-\alpha}{2}}\mathcal{N}%
^{\dag}\left(  \mathcal{N}(\sigma)^{\frac{\alpha-1}{2}}\mathcal{N}%
(\rho)^{1-\alpha}\mathcal{N}(\sigma)^{\frac{\alpha-1}{2}}\right)
\sigma^{\frac{1-\alpha}{2}}\right\}  ,\label{eq:Renyi-rel-ent-diff}\\
\widetilde{\Delta}_{\alpha}(\rho,\sigma,\mathcal{N})  &  \equiv\frac{\alpha
}{\alpha-1}\log\left\Vert \rho^{\frac{1}{2}}\sigma^{\frac{1-\alpha}{2\alpha}%
}\mathcal{N}^{\dag}\left(  \mathcal{N}(\sigma)^{\frac{\alpha-1}{2\alpha}%
}\mathcal{N(}\rho)^{\frac{1-\alpha}{\alpha}}\mathcal{N(}\sigma)^{\frac
{\alpha-1}{2\alpha}}\right)  \sigma^{\frac{1-\alpha}{2\alpha}}\rho^{\frac
{1}{2}}\right\Vert _{\alpha}. \label{eq:Renyi-rel-ent-diff-sand}%
\end{align}
The quantities above converge to (\ref{eq:rel-ent-diff}) in the limit
$\alpha\rightarrow1$ \cite{SBW14}.

The quantities defined in (\ref{eq:Renyi-CMI}%
)-(\ref{eq:Renyi-rel-ent-diff-sand}) can be expressed in terms of R\'{e}nyi
relative entropies, namely the $\alpha$-R\'{e}nyi relative entropy and the
$\alpha$-sandwiched R\'{e}nyi relative entropy defined in
Section~\ref{Sec-rel-entropies}~\cite{BSW14}. The corresponding expressions
for the R\'{e}nyi generalizations of the conditional mutual information are%
\begin{align}
I_{\alpha}(A;B|C)_{\rho}  &  =D_{\alpha}\left(  \rho_{ABC}\middle\Vert\left(
\rho_{AC}^{\frac{1-\alpha}{2}}\rho_{C}^{\frac{\alpha-1}{2}}\rho_{BC}%
^{1-\alpha}\rho_{C}^{\frac{\alpha-1}{2}}\rho_{AC}^{\frac{1-\alpha}{2}}\right)
^{\frac{1}{1-\alpha}}\right)  ,\\
\widetilde{I}_{\alpha}(A;B|C)_{\rho}  &  =\widetilde{D}_{\alpha}\left(
\rho_{ABC}\middle\Vert\left(  \rho_{AC}^{\frac{1-\alpha}{2\alpha}}\rho
_{C}^{\frac{\alpha-1}{2\alpha}}\rho_{BC}^{\frac{1-\alpha}{\alpha}}\rho
_{C}^{\frac{\alpha-1}{2\alpha}}\rho_{AC}^{\frac{1-\alpha}{2\alpha}}\right)
^{\frac{\alpha}{1-\alpha}}\right)  ,
\end{align}
respectively, and those for the relative entropy difference are given in
(\ref{eq:ren-rel-ent-diff-1}) and (\ref{eq:ren-rel-ent-diff-2}), respectively.

Before proceeding, we should note that the conditional mutual information is a
special case of a relative entropy difference. Indeed, one can check that%
\begin{equation}
I(A;B|C)_{\rho}=D(\rho\Vert\sigma)-D(\mathcal{N}(\rho)\Vert\mathcal{N}%
(\sigma)) \label{eq:rel-ent-cmi}%
\end{equation}
for the choices%
\begin{equation}
\rho=\rho_{ABC},\ \ \ \ \ \ \ \ \sigma=\rho_{AC}\otimes I_{B}%
,\ \ \ \ \ \ \ \ \mathcal{N}=\operatorname{Tr}_{A}. \label{eq:rel-ent-cmi-1}%
\end{equation}
This reduction extends as well to the R\'{e}nyi quantities for all $\alpha
\in(0,1)\cup(1,\infty)$:%
\begin{align}
I_{\alpha}(A;B|C)_{\rho}  &  =\Delta_{\alpha}(\rho_{ABC},\rho_{AC}\otimes
I_{B},\operatorname{Tr}_{A}),\\
\widetilde{I}_{\alpha}(A;B|C)_{\rho}  &  =\widetilde{\Delta}_{\alpha}%
(\rho_{ABC},\rho_{AC}\otimes I_{B},\operatorname{Tr}_{A}),
\end{align}
as pointed out in \cite{SBW14}. This realization is helpful in simplifying
some of the arguments in this paper.

\section{Summary of results}

As highlighted in the Introduction, an important property of the conditional
(quantum) mutual information of a tripartite quatum state is that it is always
non-negative and vanishes if and only if the state is a short quantum Markov
chain. The relative entropy difference of a pair of quantum states and a
quantum channel is also non-negative and vanishes if and only if the channel
is sufficient for the pair of states. Consequently, it is reasonable to
require that R\'{e}nyi generalizations of these information measures are also
non-negative and vanish under the same necessary and sufficient conditions as
mentioned above.

In this paper, we prove these properties for the quantities $I_{\alpha}$ and
$\widetilde{I}_{\alpha}$, and the quantities $\Delta_{\alpha}$ and
$\widetilde{\Delta}_{\alpha}$ defined in the Introduction. This contributes
further evidence that $I_{\alpha}( A;B|C) _{\rho}$ and $\widetilde{I}_{\alpha
}( A;B|C) _{\rho}$ are legitimate R\'{e}nyi generalizations of the conditional
mutual information $I( A;B|C) _{\rho}$ and that $\Delta_{\alpha}( \rho
,\sigma,\mathcal{N}) $ and $\widetilde{\Delta}_{\alpha}( \rho,\sigma
,\mathcal{N}) $ are legitimate R\'{e}nyi generalizations of the relative
entropy difference $D( \rho\Vert\sigma) -D( \mathcal{N}( \rho) \Vert
\mathcal{N}( \sigma) ) $. In particular, we prove the following:

\begin{enumerate}
\item $I_{\alpha}( A;B|C) _{\rho}=0$ for some $\alpha\in\left(  0,1\right)
\cup\left(  1,2\right)  $ if and only if $\rho_{ABC}$ is a short quantum
Markov chain with a decomposition as in (\ref{eq:direct-sum-structure}).

\item $\widetilde{I}_{\alpha}( A;B|C) _{\rho}=0$ for some $\alpha\in\left(
1/2,1\right)  \cup\left(  1,\infty\right)  $ if and only if $\rho_{ABC}$ is a
short quantum Markov chain with a decomposition as in
(\ref{eq:direct-sum-structure}).

\item $\Delta_{\alpha}( \rho,\sigma,\mathcal{N}) $ is non-negative for
$\alpha\in\left(  0,1\right)  \cup\left(  1,2\right)  $ and $\widetilde
{\Delta}_{\alpha}( \rho,\sigma,\mathcal{N}) $ is non-negative for $\alpha
\in\left(  1/2,1\right)  \cup\left(  1,\infty\right)  $.

\item $\Delta_{\alpha}( \rho,\sigma,\mathcal{N}) =0$ for some $\alpha
\in\left(  0,1\right)  \cup\left(  1,2\right)  $ if and only if the quantum
channel $\mathcal{N}$ is sufficient for states $\rho$ and $\sigma$, so that
$\mathcal{N}$, $\rho$, and $\sigma$ decompose as in
(\ref{eq:sufficiency-states-decompose}%
)-(\ref{eq:sufficiency-channel-decompose}).

\item $\widetilde{\Delta}_{\alpha}( \rho,\sigma,\mathcal{N}) =0$ for some
$\alpha\in\left(  1/2,1\right)  \cup\left(  1,\infty\right)  $ if and only if
the quantum channel $\mathcal{N}$ is sufficient for states $\rho$ and $\sigma
$, so that $\mathcal{N}$, $\rho$, and $\sigma$ decompose as in
(\ref{eq:sufficiency-states-decompose}%
)-(\ref{eq:sufficiency-channel-decompose}).

\item Generalizations of the conditional mutual information and the relative
entropy difference arising from the so-called min- and max-relative
entropies~\cite{min-max-ND, Dupuis} (which play an important role in one-shot
information theory) satisfy identical properties.
\end{enumerate}

Along the way, we resolve some open questions stated in \cite{R02,Z14b}. Let
$\rho_{ABC}$ be a positive definite density operator. We prove that%
\begin{equation}
\operatorname{Tr}\left\{  \left(  \rho_{AC}^{\frac{1-\alpha}{2}}\rho
_{C}^{\frac{\alpha-1}{2}}\rho_{BC}^{1-\alpha}\rho_{C}^{\frac{\alpha-1}{2}}%
\rho_{AC}^{\frac{1-\alpha}{2}}\right)  ^{\frac{1}{1-\alpha}}\right\}  \leq1.
\end{equation}
for all $\alpha\in\left(  0,1\right)  \cup\left(  1,2\right)  $, and%
\begin{equation}
\operatorname{Tr}\left\{  \left(  \rho_{AC}^{\frac{1-\alpha}{2\alpha}}\rho
_{C}^{\frac{\alpha-1}{2\alpha}}\rho_{BC}^{\frac{1-\alpha}{\alpha}}\rho
_{C}^{\frac{\alpha-1}{2\alpha}}\rho_{AC}^{\frac{1-\alpha}{2\alpha}}\right)
^{\frac{\alpha}{1-\alpha}}\right\}  \leq1,
\end{equation}
for all $\alpha\in\left(  1/2,1\right)  \cup\left(  1,\infty\right)  $. Let
$\rho$ and $\sigma$ be positive definite density operators and let
$\mathcal{N}$ be a strict completely positive trace preserving (CPTP) map
(that is, a CPTP map such that $\mathcal{N}(X)$ is positive definite whenever
$X$ is positive definite). We prove that%
\begin{equation}
\operatorname{Tr}\left\{  \left[  \sigma^{\frac{1-\alpha}{2}}\mathcal{N}%
^{\dag}\left(  \mathcal{N}(\sigma)^{\frac{\alpha-1}{2}}\mathcal{N}%
(\rho)^{1-\alpha}\mathcal{N}(\sigma)^{\frac{\alpha-1}{2}}\right)
\sigma^{\frac{1-\alpha}{2}}\right]  ^{\frac{1}{1-\alpha}}\right\}  \leq1,
\label{eq:N-rho-sigma-leq-1}%
\end{equation}
for $\alpha\in\left(  0,1\right)  \cup(1,2)$, and%
\begin{equation}
\operatorname{Tr}\left\{  \left[  \sigma^{\frac{1-\alpha}{2\alpha}}%
\mathcal{N}^{\dag}\left(  \mathcal{N}(\sigma)^{\frac{\alpha-1}{2\alpha}%
}\mathcal{N}(\rho)^{\frac{1-\alpha}{\alpha}}\mathcal{N}(\sigma)^{\frac
{\alpha-1}{2\alpha}}\right)  \sigma^{\frac{1-\alpha}{2\alpha}}\right]
^{\frac{\alpha}{1-\alpha}}\right\}  \leq1, \label{eq:2-N-rho-sigma-leq-1}%
\end{equation}
for $\alpha\in\left(  1/2,1\right)  \cup(1,\infty)$. By taking the limit
$\alpha\rightarrow1$, the inequalities in \eqref{eq:N-rho-sigma-leq-1} and
\eqref{eq:2-N-rho-sigma-leq-1} imply that%
\begin{equation}
\operatorname{Tr}\left\{  \exp\left\{  \log\sigma+\mathcal{N}^{\dag}\left(
\log\mathcal{N}(\rho)-\log\mathcal{N}(\sigma)\right)  \right\}  \right\}
\leq1.
\end{equation}

The rest of the paper is devoted to establishing these claims. We begin by
recalling some mathematical preliminaries and known results, and follow by
establishing the latter claims first and then move on to the former ones.

\section{Preliminaries}

\label{sec:prelim} Let $\mathcal{B}(\mathcal{H})$ denote the algebra of
bounded linear operators acting on a Hilbert space $\mathcal{H}$. We restrict
to finite-dimensional Hilbert spaces throughout this paper. We denote the
support of an operator~$A$ by supp$(A)$. For a Hermitian operator $A$, by
$A^{-1}$ we mean the inverse restricted to supp$(A)$, so that $AA^{-1}%
=A^{-1}A$ is the orthogonal projection onto supp$(A)$. More generally, for a
function~$f$ and Hermitian operator $A$ with spectral decomposition
$A=\sum_{i}\lambda_{i}|i\rangle\langle i|$, we define $f(A)$ to be
$\sum_{i:\lambda_{i}\neq0}f(\lambda_{i})|i\rangle\langle i|$. Let
$\mathcal{B}(\mathcal{H})_{+}$ denote the subset of positive semidefinite
operators, and let $\mathcal{B}(\mathcal{H})_{++}$ denote the subset of
positive definite operators. We also write $X\geq0$ if $X\in\mathcal{B}%
(\mathcal{H})_{+}$ and $X>0$ if $X\in\mathcal{B}(\mathcal{H})_{++}$. An
operator $\rho$ is in the set $\mathcal{S}(\mathcal{H})$ of density operators
(or states) if $\rho\in\mathcal{B}(\mathcal{H})_{+}$ and Tr$\{\rho\}=1$, and
an\ operator $\rho$ is in the set $\mathcal{S}(\mathcal{H})_{++}$\ of positive
definite density operators if $\rho\in\mathcal{B}(\mathcal{H})_{++}$ and
Tr$\{\rho\}=1$. Throughout much of the paper, for technical convenience and
simplicity, we consider states in $\mathcal{S}(\mathcal{H})_{++}$. For
$\alpha\geq1$, we define the $\alpha$-norm of an operator $X$ as%
\begin{equation}
\left\Vert X\right\Vert _{\alpha}\equiv\left[  \text{Tr}\{|X|^{\alpha
}\}\right]  ^{1/\alpha}, \label{eq:a-norm}%
\end{equation}
where $|X|\equiv\sqrt{X^{\dag}X}$, and we use the same notation even for the
case $\alpha\in(0,1)$, when it is not a norm. The fidelity~$F(\rho,\sigma)$ of
two states $\rho,\sigma\in\mathcal{S}(\mathcal{H})$ is defined as
\begin{equation}
F(\rho,\sigma)\equiv\left\Vert \sqrt{\rho}\sqrt{\sigma}\right\Vert _{1}^{2}.
\end{equation}

A quantum channel is given by a completely positive, trace-preserving (CPTP)
map $\mathcal{N}:\mathcal{B}(\mathcal{H})\mapsto\mathcal{B}(\mathcal{K})$,
with $\mathcal{H}$ and $\mathcal{K}$ being the input and output Hilbert spaces
of the channel, respectively. Let $\left\langle C,D\right\rangle
\equiv\operatorname{Tr}\left\{  C^{\dag}D\right\}  $ denote the
Hilbert-Schmidt inner product of $C,D\in\mathcal{B}(\mathcal{H})$. The adjoint
of the quantum channel $\mathcal{N}$ is a completely positive unital map
$\mathcal{N}^{\dagger}:\mathcal{B}(\mathcal{K})\mapsto\mathcal{B}%
(\mathcal{H})$ defined through the following relation:%
\begin{equation}
\left\langle B,\mathcal{N}(A)\right\rangle =\langle\mathcal{N}^{\dagger
}(B),A\rangle,
\end{equation}
for all $A\in\mathcal{B}(\mathcal{H)}$ and $B\in\mathcal{B}(\mathcal{K})$. A
linear map is said to be a strict CPTP map if it is CPTP and if $\mathcal{N}%
(A)\in\mathcal{B}(\mathcal{K})_{++}$ for all $A\in\mathcal{B}(\mathcal{H}%
)_{++}$. Note that a CPTP\ map is strict if and only if $\mathcal{N}(I)>0$
\cite[Section~2.2]{B07}. We denote the identity channel as id but often
suppress it for notational simplicity.

The set $\left\{  U^{i}\right\}  $ of Heisenberg-Weyl unitaries acting on a
finite-dimensional Hilbert space $\mathcal{H}$ of dimension $d$ has the
property that%
\begin{equation}
\frac{1}{d^{2}}\sum_{i}U^{i}X\left(  U^{i}\right)  ^{\dag}=\text{Tr}\left\{
X\right\}  \frac{I}{d},
\end{equation}
for any operator $X$ acting on $\mathcal{H}$.

\subsection{Generalized relative entropies}

\label{Sec-rel-entropies} The following relative entropies of a density
operator $\rho$ with respect to a positive semidefinite operator $\sigma$ play
an important role in this paper. In what follows, we restrict the definitions
to the case in which $\rho$ and $\sigma$ satisfy the condition
$\operatorname{supp}\rho\subseteq\operatorname{supp}\sigma$, with the
understanding that they are equal to $+\infty$ when $\alpha>1$ and
$\operatorname{supp}\rho\not \subseteq \operatorname{supp}\sigma$. The
$\alpha$-R\'{e}nyi relative entropy is defined for $\alpha\in\left(
0,1\right)  \cup\left(  1,\infty\right)  $ as follows \cite{P86}:%
\begin{equation}
D_{\alpha}(\rho\Vert\sigma)\equiv\frac{1}{\alpha-1}\log\text{Tr}\left\{
\rho^{\alpha}\sigma^{1-\alpha}\right\}  .
\end{equation}
The $\alpha$-sandwiched R\'{e}nyi relative entropy is defined for $\alpha
\in\left(  0,1\right)  \cup\left(  1,\infty\right)  $ as follows
\cite{MDSFT13,WWY13}:%
\begin{equation}
\widetilde{D}_{\alpha}(\rho\Vert\sigma)\equiv\frac{1}{\alpha-1}\log
\text{Tr}\left\{  \left(  \sigma^{\frac{1-\alpha}{2\alpha}}\rho\sigma
^{\frac{1-\alpha}{2\alpha}}\right)  ^{\alpha}\right\}  .
\end{equation}
Both quantities above reduce to the quantum relative entropy in
(\ref{rel-entropy}) in the limit$~\alpha\rightarrow1$.

A fundamental property of the quantum relative entropy is that it is monotone
with respect to quantum channels (also known as the \emph{{data-processing
inequality}}):%
\begin{equation}
D(\rho\Vert\sigma)\geq D\left(  \mathcal{N}(\rho)\Vert\mathcal{N}%
(\sigma)\right)  , \label{DPI}%
\end{equation}
where $\mathcal{N}$ is a quantum channel. It is known that the data-processing
inequality is satisfied by the $\alpha$-R\'{e}nyi relative entropy for
$\alpha\in\lbrack0,1)\cup(1,2]$ {\cite{P86}}, and for the $\alpha$-sandwiched
R\'{e}nyi relative entropy for $\alpha\in\lbrack1/2,1)\cup(1,\infty]$
\cite{FL13,B13,MDSFT13,WWY13,MO13}.

Two special cases of the $\alpha$-sandwiched R\'{e}nyi relative entropy are of
particular significance in one-shot information theory~\cite{renato-thesis,
marco-thesis}, namely the min-relative entropy~\cite{Dupuis} and the
max-relative entropy~\cite{min-max-ND}. These are defined as follows:%
\begin{equation}
D_{\min}(\rho\Vert\sigma)\equiv\widetilde{D}_{1/2}(\rho\Vert\sigma)=-\log
F(\rho,\sigma),
\end{equation}
and
\begin{equation}
D_{\max}(\rho\Vert\sigma)\equiv\inf\{\lambda:\rho\leq2^{\lambda}\sigma
\}=\lim_{\alpha\rightarrow\infty}\widetilde{D}_{\alpha}(\rho\Vert\sigma).
\end{equation}
The relative entropies defined above, satisfy the following lemma
\cite{MDSFT13}:

\begin{lemma}
\label{lem:min-max-0}For $\omega\in\mathcal{S}( \mathcal{H}) $ and $\tau
\in\mathcal{B}( \mathcal{H}) _{+} $, such that $\operatorname{Tr}\left\{
\omega\right\}  \geq\operatorname{Tr}\left\{  \tau\right\}  $,%
\begin{align}
D_{\alpha}(\omega\Vert\tau)  &  \geq0\text{ for }\alpha\in\left(  0,1\right)
\cup\left(  1,2\right)  ,\\
\widetilde{D}_{\alpha}(\omega\Vert\tau)  &  \geq0\text{ for }\alpha\in\left(
1/2,1\right)  \cup\left(  1,\infty\right)  ,\\
D_{\min}(\omega\Vert\tau)  &  \geq0,\label{dmin-0}\\
D_{\max}(\omega\Vert\tau)  &  \geq0, \label{dmax-0}%
\end{align}
with equalities holding if and only if $\omega=\tau$.
\end{lemma}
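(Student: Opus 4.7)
The plan is to reduce every inequality to the same trivial one-dimensional statement by applying the data-processing inequality (DPI) to the trace channel $\operatorname{Tr}:\mathcal{B}(\mathcal{H})\to\mathbb{C}$, which sends $\omega\mapsto 1$ and $\tau\mapsto t:=\operatorname{Tr}\{\tau\}\leq 1$. Since the excerpt has just recalled that each of the four divergences obeys DPI in the stated range of $\alpha$ (Petz for $D_\alpha$ on $(0,1)\cup(1,2)$; Frank--Lieb/Beigi/MDSFT/WWY for $\widetilde{D}_\alpha$ on $[1/2,1)\cup(1,\infty]$, which covers $D_{\min}$ and $D_{\max}$ as the limiting cases), we immediately obtain
\[
D_\alpha(\omega\Vert\tau)\geq D_\alpha(1\Vert t),\qquad \widetilde{D}_\alpha(\omega\Vert\tau)\geq \widetilde{D}_\alpha(1\Vert t),
\]
and analogously for $D_{\min}$ and $D_{\max}$. (For $\alpha>1$ with $\operatorname{supp}\omega\not\subseteq\operatorname{supp}\tau$ the claim is vacuous.) I would then compute the scalar right-hand sides straight from the definitions; in all four cases one finds the value $-\log t$, which is non-negative precisely because $t\leq 1$. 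That settles the four inequalities in one blow.

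For the equality case, the implication $\omega=\tau\Rightarrow D_{\bullet}(\omega\Vert\tau)=0$ is immediate by substitution. For the converse, the chain of inequalities above forces $-\log t=0$, hence $\operatorname{Tr}\{\tau\}=1$, so $\tau$ is itself a density operator. At that point the question reduces to the standard fact that each of these divergences is strictly positive on pairs of distinct \emph{states}. For $\alpha>1$ I would invoke monotonicity of the R\'{e}nyi divergences in $\alpha$ to bound $D_\alpha(\omega\Vert\tau)$ and $\widetilde{D}_\alpha(\omega\Vert\tau)$ below by the Umegaki relative entropy $D(\omega\Vert\tau)$, then apply Klein's inequality. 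For $\alpha<1$ (including $D_{\min}$) one unpacks the definition: vanishing of $D_\alpha$ is $\operatorname{Tr}\{\omega^\alpha\tau^{1-\alpha}\}=1$, and for states this forces $\omega=\tau$ via the operator-H\"older / Lieb concavity inequality; the fidelity case is handled by the equality condition in $F(\omega,\tau)\leq 1$. Finally, $D_{\max}(\omega\Vert\tau)=0$ means $\omega\leq\tau$, and combined with $\operatorname{Tr}\omega=\operatorname{Tr}\tau=1$ this gives $\omega=\tau$.

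The main obstacle is not the non-negativity at all---that is essentially a one-line DPI argument once one spots the trace channel---but rather the clean equality analysis for $\widetilde{D}_\alpha$ at $\alpha\in(1/2,1)$, where the ``dominates $D$'' trick is unavailable and a direct Lieb concavity argument is less convenient. Here I would fall back on the equality case of the Araki--Lieb--Thirring inequality that underlies the DPI proof in \cite{MDSFT13}, which characterizes when $\widetilde{D}_\alpha(\omega\Vert\tau)$ vanishes for states.
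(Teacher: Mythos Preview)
Your approach is correct and is in fact the same one the paper relies on. The paper does not prove this lemma in the main text at all---it simply cites \cite{MDSFT13}---but in Appendix~\ref{app:alt-proofs} it spells out exactly your DPI-via-trace-channel argument (see the display around \eqref{eq:renyi-non-neg}) for the special case $D_\alpha$ with $\alpha\in(1,2)$, and it handles the equality condition there by invoking \cite[Theorem~5.1]{HMPB11}. So your non-negativity argument matches the paper's reasoning verbatim.

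One small refinement: your equality analysis for $\widetilde{D}_\alpha$ on $(1/2,1)$ can be streamlined, avoiding the Araki--Lieb--Thirring equality case. Once you have established $\operatorname{Tr}\tau=1$, use monotonicity of $\widetilde{D}_\alpha$ in $\alpha$ (which the paper also takes from \cite{MDSFT13}) to get $0=\widetilde{D}_\alpha(\omega\Vert\tau)\geq \widetilde{D}_{1/2}(\omega\Vert\tau)=D_{\min}(\omega\Vert\tau)=-\log F(\omega,\tau)\geq 0$, whence $F(\omega,\tau)=1$ and $\omega=\tau$. Similarly, for $D_\alpha$ with $\alpha\in(0,1)$, instead of the somewhat vague ``operator-H\"older / Lieb concavity'' appeal, you can either invoke the $f$-divergence equality conditions in \cite[Theorem~5.1]{HMPB11} applied to the trace channel (which is how the paper argues in the appendix), or note that $D_\alpha(\omega\Vert\tau)=D_{1-\alpha}(\tau\Vert\omega)$ via the symmetry of $\operatorname{Tr}\{\omega^\alpha\tau^{1-\alpha}\}$ and reduce to a range you have already handled.
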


In proving our results, we also employ the notion of a \emph{{quantum
$f$-divergence}}, first introduced by Petz~\cite{P86}. It can be conveniently
expressed as follows~\cite{FQAEP}:

\begin{definition}
For $A\in\mathcal{B}( \mathcal{H}) _{+}$, $B\in\mathcal{B}\left(
\mathcal{H}\right)  _{++}$ and an operator convex function $f$ on $[0,\infty
)$, the $f$-divergence of $A$ with respect to $B$ is given by
\begin{equation}
S_{f}(A\Vert B)=\langle\Gamma|\left(  \sqrt{B}\otimes I\right)  f\left(
B^{-1}\otimes A^{T}\right)  \left(  \sqrt{B}\otimes I\right)  |\Gamma\rangle,
\end{equation}
where $|\Gamma\rangle=\sum_{i}|i\rangle\otimes|i\rangle$, and $\{|i\rangle\}$
is an orthonormal basis of $\mathcal{H}$ with respect to which the transpose
is defined.
\end{definition}

\begin{remark}
\label{rem:renyi-f-div}Special cases of this are as follows:

\begin{enumerate}
\item The trace expression $\operatorname{Tr}\left\{  \rho^{\alpha}%
\sigma^{1-\alpha}\right\}  $ of the $\alpha$-R\'{e}nyi relative entropy, for
the choice $f(x)=x^{\alpha}$ for $\alpha\in\left(  1,2\right)  $, and
$-\operatorname{Tr}\left\{  \rho^{\alpha}\sigma^{1-\alpha}\right\}  $ for the
choice $f(x)=-x^{\alpha}$ for $\alpha\in\left(  0,1\right)  $.

\item The quantum relative entropy for the choice $f(x)=x\log x$.
\end{enumerate}
\end{remark}

The equivalence relations given in Lemma~\ref{lem-equiv} below follow directly
from \cite[Theorem~5.1]{HMPB11}.

\begin{lemma}
\label{lem-equiv} For $A,B\in\mathcal{B}(\mathcal{H})_{++}$ and a strict CPTP
map $\mathcal{N}$ acting on $\mathcal{B}\left(  \mathcal{H}\right)  $, the
following conditions are equivalent:{%
\begin{align}
S_{f}(\mathcal{N}(A)\Vert\mathcal{N}(B))  &  =S_{f}(A\Vert B)\quad
{\hbox{for the functions in Remark~\ref{rem:renyi-f-div}}},\label{item1}\\
{\mathcal{N}}^{\dagger}\left[  \log\mathcal{N}(A)-\log\mathcal{N}(B)\right]
&  =\log A-\log B,\label{item2}\\
&  \mathcal{N}\text{ is sufficient for }A\text{ and }B. \label{eq:item3}%
\end{align}
}
\end{lemma}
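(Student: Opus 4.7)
The plan is to reduce the lemma to a single invocation of \cite[Theorem~5.1]{HMPB11}, which provides a comprehensive characterization of the equality case in the data-processing inequality for quantum $f$-divergences associated with non-linear operator convex functions. That theorem asserts, for $A,B\in\mathcal{B}(\mathcal{H})_{++}$ and a strict CPTP map $\mathcal{N}$, the equivalence of three conditions: equality $S_f(\mathcal{N}(A)\Vert\mathcal{N}(B))=S_f(A\Vert B)$ for some non-linear operator convex $f$ on $[0,\infty)$; the Connes-cocycle identity $\mathcal{N}^{\dagger}[\log\mathcal{N}(A)-\log\mathcal{N}(B)]=\log A-\log B$; and sufficiency of $\mathcal{N}$ for $A$ and $B$ (in the sense that a recovery channel exists, which by \cite{Petz1988} is equivalent to the Petz recovery map succeeding, matching Definition~\ref{def:channel-suff}).

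First I would verify that each of the three functions highlighted in Remark~\ref{rem:renyi-f-div} satisfies the hypotheses of that theorem: $f(x)=x^{\alpha}$ is operator convex and non-linear on $[0,\infty)$ for $\alpha\in(1,2)$; the function $f(x)=-x^{\alpha}$ is operator convex and non-linear for $\alpha\in(0,1)$, because $x\mapsto x^{\alpha}$ is operator concave on $[0,\infty)$ in that range by L\"owner's theorem; and $f(x)=x\log x$ is operator convex and non-linear. Since $\mathcal{N}$ is strict and $A,B$ are positive definite, $\mathcal{N}(A)$ and $\mathcal{N}(B)$ are positive definite as well, so all $f$-divergences in sight are finite and the operator functional calculi appearing in $S_f$ and in \eqref{item2} are well defined.

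Second, I would apply \cite[Theorem~5.1]{HMPB11} to each of the three functions in turn, obtaining the chain of implications $\eqref{item1}\Leftrightarrow\eqref{eq:item3}\Leftrightarrow\eqref{item2}$, uniformly in the choice of $f$ from Remark~\ref{rem:renyi-f-div}. Note that this is what makes \eqref{item1} meaningful as stated with a \emph{single} label covering all three functions: since sufficiency is a condition on the pair $(A,B,\mathcal{N})$ that does not reference $f$, the theorem shows that the equality case is the same for every admissible $f$, so that equality for one function in the list forces equality for all of them.

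No independent argument beyond this translation is required. The only mildly delicate point is the verification of operator convexity for $-x^{\alpha}$ with $\alpha\in(0,1)$ and the reconciliation of \cite{HMPB11}'s formulation of sufficiency (typically phrased via the Petz map) with Definition~\ref{def:channel-suff}; both are standard. In particular, there is no serious obstacle — the heavy lifting, namely the proof that equality in the $f$-divergence monotonicity implies sufficiency, is done in \cite{HMPB11}, and Lemma~\ref{lem-equiv} simply packages the consequences that will be used later in the Rényi analysis.
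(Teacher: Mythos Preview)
Your proposal is correct and matches the paper's own treatment: the paper does not give an independent proof of Lemma~\ref{lem-equiv} but simply states that the equivalences follow directly from \cite[Theorem~5.1]{HMPB11}. Your explicit verification that the functions in Remark~\ref{rem:renyi-f-div} are non-linear and operator convex, and your remark that sufficiency is $f$-independent so equality for one admissible $f$ forces equality for all, just spell out what the paper leaves implicit in that citation.
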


\begin{lemma}
\label{lem-MC} If $\rho_{ABC}$ is a positive definite density operator such
that
\begin{equation}
\label{eq-equal}\log\rho_{ABC}=\log\rho_{AC}+\log\rho_{BC}-\log\rho_{C},
\end{equation}
then it is a short quantum Markov chain $A-C-B$.
\end{lemma}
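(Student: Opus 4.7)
The plan is to use the hypothesis (\ref{eq-equal}) to show that $I(A;B|C)_\rho = 0$, and then invoke the Hayden--Jozsa--Petz--Winter characterization stated in (\ref{eq:h-c-direct-sum})--(\ref{eq:direct-sum-structure}) to conclude that $\rho_{ABC}$ admits the required short-quantum-Markov-chain decomposition.

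To carry this out I would multiply both sides of (\ref{eq-equal}) by $\rho_{ABC}$ and take the trace, treating each factor $\log\rho_{AC}$, $\log\rho_{BC}$, $\log\rho_C$ as implicitly tensored with the identity on the missing systems (which is how (\ref{eq-equal}) must be read to make sense on $\mathcal{H}_{ABC}$). Since $\operatorname{Tr}[\rho_{ABC}(\log\rho_{AC}\otimes I_B)] = \operatorname{Tr}[\rho_{AC}\log\rho_{AC}] = -H(AC)_\rho$, and analogously for the other two marginal terms, this collapses to $-H(ABC)_\rho = -H(AC)_\rho - H(BC)_\rho + H(C)_\rho$. By (\ref{eq:CMI}) this is precisely $I(A;B|C)_\rho = 0$, and the biconditional recalled just after (\ref{eq:direct-sum-structure}) then delivers the claimed direct-sum decomposition.

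An alternative route, more in the spirit of Lemma~\ref{lem-equiv}, is to apply that lemma with $\mathcal{N} = \operatorname{Tr}_A$, $A = \rho_{ABC}$, and $B = \rho_{AC}\otimes I_B$; all positivity and strictness hypotheses hold because $\rho_{ABC} > 0$ forces every marginal to be positive definite, and the partial trace of a positive definite operator is positive definite. The adjoint of the partial trace is $Y_{BC} \mapsto I_A \otimes Y_{BC}$, and $\operatorname{Tr}_A(\rho_{AC}\otimes I_B) = \rho_C \otimes I_B$, so (\ref{eq-equal}) is a verbatim instance of condition (\ref{item2}). Lemma~\ref{lem-equiv} then gives sufficiency of $\operatorname{Tr}_A$ for the pair, and via the Petz formula (\ref{eq:Petz-map}) this reduces in the present setting to $\rho_{ABC} = \mathcal{R}^P_{C\to AC}(\rho_{BC})$ with $\mathcal{R}^P_{C\to AC}$ as in (\ref{eq:special-Petz}), yielding the conclusion by the same recovery characterization.

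There is no real conceptual obstacle here; the key observation is that (\ref{eq-equal}) is a one-line restatement of either $I(A;B|C)_\rho = 0$ (after pairing with $\rho_{ABC}$) or of the $\operatorname{Tr}_A$-sufficiency condition (\ref{item2}), depending on which route one prefers. The only step requiring care is the bookkeeping of the implicit identity tensor factors when unfolding $\log\rho_{AC}$, $\log\rho_{BC}$, and $\log\rho_C$ as operators on $\mathcal{H}_{ABC}$.
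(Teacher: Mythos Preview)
Your primary approach is correct and is exactly the paper's argument: the paper simply cites \cite{R02} for the implication from (\ref{eq-equal}) to $I(A;B|C)_\rho=0$ and then invokes \cite{HJPW04}, whereas you spell out the trace computation that yields $I(A;B|C)_\rho=0$. Your alternative route via Lemma~\ref{lem-equiv} is also valid and is a nice observation, though not the path the paper takes.
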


\begin{proof}
The identity (\ref{eq-equal}) is known from \cite{R02} to be a condition for
the conditional mutual information $I(A;B|C)_{\rho}$ of $\rho_{ABC}$ to be
equal to zero, which, by \cite{HJPW04}, implies that $\rho_{ABC}$ is a short
quantum Markov chain $A-C-B$.
\end{proof}

\begin{theorem}
[\cite{Mosonyi2004,M05}]\label{thm:mos-petz-decomp}Let $\rho\in\mathcal{S}(
\mathcal{H}) _{++}$, $\sigma\in\mathcal{B}(\mathcal{H})_{++}$, and
$\mathcal{N}$ be a strict CPTP\ map. Then $\mathcal{N}$ is sufficient for
$\rho$ and $\sigma$ (as in \eqref{eq:recover-rho}) if and only if the
following conditions hold

\begin{enumerate}
\item There exist decompositions of $\mathcal{H}$ and $\mathcal{K}$ as
follows:%
\begin{equation}
\mathcal{H}=\bigoplus\limits_{j}\mathcal{H}_{L_{j}}\otimes\mathcal{H}_{R_{j}%
},\ \ \ \ \ \ \ \ \mathcal{K}=\bigoplus\limits_{j}\mathcal{K}_{L_{j}}%
\otimes\mathcal{K}_{R_{j}}, \label{eq:suff-decomp}%
\end{equation}
where $\dim\left(  \mathcal{H}_{L_{j}}\right)  =\dim\left(  \mathcal{K}%
_{L_{j}}\right)  $ for all $j$.

\item With respect to the decomposition in \eqref{eq:suff-decomp}, $\rho$ and
$\sigma$ can be written as follows:%
\begin{equation}
\rho=\bigoplus\limits_{j}p( j) \rho_{L_{j}}^{j}\otimes\tau_{R_{j}}%
^{j},\ \ \ \ \ \ \ \ \sigma=\bigoplus\limits_{j}q( j) \sigma_{L_{j}}%
^{j}\otimes\tau_{R_{j}}^{j}, \label{eq:sufficiency-states-decompose}%
\end{equation}
for some probability distribution $\{p( j) \}$, positive reals $\{q( j) \}$,
sets of states $\{\rho_{L_{j}}^{j}\}$ and $\{\tau_{R_{j}}^{j}\}$ and set of
positive definite operators $\{\sigma_{L_{j}}^{j}\}$.

\item With respect to the decomposition in \eqref{eq:suff-decomp}, the quantum
channel $\mathcal{N}$ can be written as%
\begin{equation}
\mathcal{N}=\bigoplus\limits_{j}\mathcal{U}_{j}\otimes\mathcal{N}_{j}^{R},
\label{eq:sufficiency-channel-decompose}%
\end{equation}
where $\left\{  \mathcal{U}_{j}:\mathcal{B}(\mathcal{H}_{L_{j}})\rightarrow
\mathcal{B}(\mathcal{K}_{L_{j}})\right\}  $ is a set of unitary channels and
$\{\mathcal{N}_{j}^{R}:\mathcal{B}(\mathcal{H}_{R_{j}})\mapsto\mathcal{B}%
(\mathcal{K}_{R_{j}})\}$ is a set of quantum channels. Furthermore, with
respect to the decomposition in \eqref{eq:suff-decomp}, the adjoint of
$\mathcal{N}$ acts as%
\begin{equation}
\mathcal{N}^{\dag}=\bigoplus\limits_{j}\mathcal{U}_{j}^{\dag}\otimes\left(
\mathcal{N}_{j}^{R}\right)  ^{\dag}.
\end{equation}

\end{enumerate}
\end{theorem}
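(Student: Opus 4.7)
The plan is to prove the two directions separately: the \textbf{if} direction is a direct calculation, while the \textbf{only if} direction requires a fixed-point $\ast$-algebra argument.

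For the \textbf{if} direction, assume $\rho$, $\sigma$, and $\mathcal{N}$ decompose as in \eqref{eq:suff-decomp}--\eqref{eq:sufficiency-channel-decompose}, and define
\[
\mathcal{R} \;\equiv\; \bigoplus_j \mathcal{U}_j^\dag \otimes \mathcal{R}_j^R,
\]
where each $\mathcal{R}_j^R$ is any CPTP map with $\mathcal{R}_j^R(\mathcal{N}_j^R(\tau_{R_j}^j))=\tau_{R_j}^j$ (for instance, the Petz recovery of $\mathcal{N}_j^R$ with reference $\tau_{R_j}^j$, which recovers its own reference by construction). Because the right factor $\tau_{R_j}^j$ is common to $\rho$ and $\sigma$ inside each block and the $\mathcal{U}_j$ are unitary, a direct block-by-block calculation shows $(\mathcal{R}\circ \mathcal{N})(\rho)=\rho$ and $(\mathcal{R}\circ \mathcal{N})(\sigma)=\sigma$, which verifies sufficiency.

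For the \textbf{only if} direction, assume $\mathcal{N}$ is sufficient. The Petz theorem recalled just before the statement gives that $\mathcal{R}^P_{\sigma,\mathcal{N}}$ itself is a valid recovery, so $\mathcal{E} \equiv \mathcal{R}^P_{\sigma,\mathcal{N}} \circ \mathcal{N}$ is CPTP with $\mathcal{E}(\rho)=\rho$, $\mathcal{E}(\sigma)=\sigma$, and $\sigma$ as a faithful fixed point; in the Heisenberg picture, $\mathcal{E}^\dag$ is CP and unital, with $\operatorname{Tr}\{\sigma\,\cdot\,\}$ a faithful invariant functional. A standard fixed-point theorem (Petz, building on Takesaki) then gives that
\[
\mathcal{M}_\sigma \;\equiv\; \{X\in \mathcal{B}(\mathcal{H}) : \mathcal{E}^\dag(X)=X\}
\]
is a $\ast$-subalgebra of $\mathcal{B}(\mathcal{H})$, and the structure theorem for finite-dimensional $\ast$-subalgebras yields a unitary identification $\mathcal{H}\cong \bigoplus_j \mathcal{H}_{L_j}\otimes \mathcal{H}_{R_j}$ with $\mathcal{M}_\sigma = \bigoplus_j \mathcal{B}(\mathcal{H}_{L_j})\otimes I_{R_j}$. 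A short modular-theoretic argument using $\mathcal{E}(\sigma)=\sigma$ then forces $\sigma=\bigoplus_j q(j)\sigma_{L_j}^j\otimes \tau_{R_j}^j$, with $\tau_{R_j}^j$ pinned down by $\sigma$; repeating this with $\mathcal{E}(\rho)=\rho$ gives $\rho=\bigoplus_j p(j)\rho_{L_j}^j\otimes \tau_{R_j}^j$ with the \emph{same} right factor $\tau_{R_j}^j$, proving \eqref{eq:sufficiency-states-decompose}. For the channel, $\mathcal{N}$ restricted to $\mathcal{M}_\sigma$ is an injective (sufficiency supplies a left inverse), trace-preserving $\ast$-homomorphism into $\mathcal{B}(\mathcal{K})$; applying the structure theorem on the output forces $\mathcal{K}\cong \bigoplus_j \mathcal{K}_{L_j}\otimes \mathcal{K}_{R_j}$ with $\dim\mathcal{H}_{L_j}=\dim\mathcal{K}_{L_j}$ and unitary channels $\mathcal{U}_j$ on the left factors, while on the right factors (which carry $\tau_{R_j}^j$) $\mathcal{N}$ is an unconstrained family $\mathcal{N}_j^R$, yielding \eqref{eq:sufficiency-channel-decompose} and its adjoint.

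The principal obstacle is establishing multiplicativity of $\mathcal{M}_\sigma$: one must upgrade the equality case of the Kadison--Schwarz inequality $\mathcal{E}^\dag(X^\dag X)\geq \mathcal{E}^\dag(X)^\dag \mathcal{E}^\dag(X)$ on fixed points to genuine multiplicativity $\mathcal{E}^\dag(XY)=X\,\mathcal{E}^\dag(Y)$, so that $\mathcal{M}_\sigma$ is a $\ast$-subalgebra rather than merely a self-adjoint operator system. The modular identification of the common right factor $\tau_{R_j}^j$ shared by $\rho$ and $\sigma$ is a second delicate step. Throughout, strictness of $\mathcal{N}$ and positive-definiteness of $\rho$ and $\sigma$ are used to guarantee that the Petz map is everywhere defined, that $\sigma$ is a faithful invariant state for $\mathcal{E}^\dag$, and that the centralizer arguments that locate $\tau_{R_j}^j$ remain non-degenerate.
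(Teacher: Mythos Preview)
The paper does not supply its own proof of this statement: Theorem~\ref{thm:mos-petz-decomp} is quoted from \cite{Mosonyi2004,M05} and stated without proof, serving purely as background input to the later arguments. So there is nothing in the paper to compare your proposal against.

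On its own merits, your sketch follows the standard Mosonyi--Petz route and is broadly sound. The \textbf{if} direction is correct as written. For the \textbf{only if} direction, the strategy of passing to the fixed-point set of $\mathcal{E}^\dag=(\mathcal{R}^P_{\sigma,\mathcal{N}}\circ\mathcal{N})^\dag$, invoking the Kadison--Schwarz equality case to obtain a $\ast$-subalgebra, and then reading off the block structure is exactly how the cited references proceed. Two points deserve more care than you give them. First, the claim that $\mathcal{N}$ restricted to $\mathcal{M}_\sigma$ is a $\ast$-homomorphism is not automatic: what you actually know is that $\mathcal{E}^\dag$ is multiplicative on $\mathcal{M}_\sigma$, and transferring this to $\mathcal{N}$ (rather than $\mathcal{N}^\dag$) on the Schr\"odinger side requires an additional argument via the corresponding fixed-point algebra on the output side and the intertwining relation between the two. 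Second, the ``short modular-theoretic argument'' that forces $\rho$ and $\sigma$ to share the \emph{same} right factor $\tau_{R_j}^j$ is precisely the nontrivial content of the Mosonyi--Petz analysis; it goes through the Takesaki theorem on conditional expectations and the explicit form of the Petz map, and is not as short as you suggest. These are genuine technical steps rather than fatal gaps, but your write-up underplays them.
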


\subsection{Trace inequalities}

The following lemma is a consequence of \cite{epstein1973} (see also
\cite[Theorem~1.1]{Carlen2008}):

\begin{lemma}
\label{lem:concave-norm}For $A\in\mathcal{B}( \mathcal{H}) $, $B\in
\mathcal{B}( \mathcal{H}) _{+}$, and $p\in\left(  0,1\right)  $, the map
$B\mapsto\operatorname{Tr}\{\left(  AB^{p}A^{\dag}\right)  ^{1/p}\}$ is
concave. For invertible $A\in\mathcal{B}\left(  \mathcal{H}\right)  $,
$B\in\mathcal{B}( \mathcal{H}) _{++}$, and $p\in\left(  -1,0\right)  $, the
map $B\mapsto\operatorname{Tr}\{\left(  AB^{p}A^{\dag}\right)  ^{1/p}\}$ is concave.
\end{lemma}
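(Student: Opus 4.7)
My plan is to deduce both statements directly from Epstein's concavity theorem \cite{epstein1973}, as restated in \cite[Theorem~1.1]{Carlen2008}. That theorem asserts, for any operator $K$ and any exponent $p$ in the appropriate range, the concavity of the map
\begin{equation}
B \longmapsto \operatorname{Tr}\bigl\{(K^{\ast} B^{p} K)^{1/p}\bigr\}
\end{equation}
on the relevant cone of positive operators. For the first statement ($p \in (0,1)$, $A \in \mathcal{B}(\mathcal{H})$, $B \in \mathcal{B}(\mathcal{H})_{+}$), I would simply set $K = A^{\dagger}$ so that $K^{\ast} = A$, at which point the statement is immediate. No invertibility hypothesis on $A$ is needed, because $K^{\ast} B^{p} K = A B^{p} A^{\dagger}$ is positive semidefinite whenever $B$ is, and raising it to the positive power $1/p > 1$ is well-defined on $\mathcal{B}(\mathcal{H})_{+}$.

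For the second statement ($p \in (-1,0)$, $A$ invertible, $B \in \mathcal{B}(\mathcal{H})_{++}$), I would invoke the negative-exponent extension of Epstein's result. The key analytic input is the integral representation
\begin{equation}
B^{p} = \frac{\sin(\pi|p|)}{\pi}\int_{0}^{\infty} t^{p}\, (B+t)^{-1}\, dt,
\end{equation}
valid for $p \in (-1,0)$ and $B > 0$, which exhibits $B \mapsto B^{p}$ as a weighted average of operator-convex functions. Invertibility of $A$ together with $B > 0$ guarantees that $A B^{p} A^{\dagger}$ is positive definite, so the outer power $X \mapsto X^{1/p}$ with $1/p < -1$ is well-defined. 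The concavity of $B \mapsto \operatorname{Tr}\{(A B^{p} A^{\dagger})^{1/p}\}$ is then obtained from the same Herglotz/Pick-function machinery that underlies the positive case, applied to the analytic continuation of $z \mapsto \operatorname{Tr}\{(A B^{z} A^{\dagger})^{1/z}\}$ across $z=0$ from the strip containing $(0,1)$ to the strip containing $(-1,0)$.

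The main obstacle in a self-contained presentation is precisely this negative-exponent case: composing the operator-convex inner map $B \mapsto B^{p}$ with the scalar convex, monotone-decreasing outer map $X \mapsto \operatorname{Tr}(X^{1/p})$ (for $1/p < -1$) does not, by generic convex-analysis rules, produce a concave function, and the naive substitution $C = B^{-1}$ fails because inversion is only operator convex and not affine. This is why we rely on Epstein's complex-analytic argument, or equivalently on the unified treatment in \cite[Theorem~1.1]{Carlen2008}, which handles both the positive and negative exponent regimes in one stroke.
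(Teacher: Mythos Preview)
Your proposal is correct and takes essentially the same approach as the paper: the paper does not give a self-contained proof of this lemma at all, but simply records it as a consequence of \cite{epstein1973} (see also \cite[Theorem~1.1]{Carlen2008}), which is exactly the source you invoke. Your additional commentary on the negative-exponent case (integral representation, Pick-function machinery) goes beyond what the paper provides, but it is consistent with how those cited results are established.
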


\begin{lemma}
\label{lem:trace-ineq-gen}Let $\rho\in\mathcal{S}(\mathcal{H})$, $\sigma
\in\mathcal{B}(\mathcal{H})_{+}$, and let $\mathcal{N}:\mathcal{B}%
(\mathcal{H})\mapsto\mathcal{B}(\mathcal{K})$ be a CPTP\ map. For $\alpha
\in\left(  0,1\right)  $%
\begin{equation}
\operatorname{Tr}\left\{  \left[  \sigma^{\frac{1-\alpha}{2}}\mathcal{N}%
^{\dag}\left(  \mathcal{N}(\sigma)^{\frac{\alpha-1}{2}}\mathcal{N}%
(\rho)^{1-\alpha}\mathcal{N}(\sigma)^{\frac{\alpha-1}{2}}\right)
\sigma^{\frac{1-\alpha}{2}}\right]  ^{\frac{1}{1-\alpha}}\right\}  \leq1.
\label{eq:trace-ineq-rel-ent-diff}%
\end{equation}
For $\alpha\in\left(  1/2,1\right)  $%
\begin{equation}
\operatorname{Tr}\left\{  \left[  \sigma^{\frac{1-\alpha}{2\alpha}}%
\mathcal{N}^{\dag}\left(  \mathcal{N}(\sigma)^{\frac{\alpha-1}{2\alpha}%
}\mathcal{N}(\rho)^{\frac{1-\alpha}{\alpha}}\mathcal{N}(\sigma)^{\frac
{\alpha-1}{2\alpha}}\right)  \sigma^{\frac{1-\alpha}{2\alpha}}\right]
^{\frac{\alpha}{1-\alpha}}\right\}  \leq1. \label{eq:other-trace-ineq-petz}%
\end{equation}
Let $\rho\in\mathcal{S}(\mathcal{H})_{++}$, $\sigma\in\mathcal{B}%
(\mathcal{H})_{++}$, and let $\mathcal{N}:\mathcal{B}(\mathcal{H}%
)\mapsto\mathcal{B}(\mathcal{K})$ be a CPTP\ map such that $\mathcal{N}%
(\rho)\in\mathcal{S}(\mathcal{K})_{++}$, $\mathcal{N}(\sigma)\in
\mathcal{B}(\mathcal{K})_{++}$. For these choices, the inequality in
\eqref{eq:trace-ineq-rel-ent-diff} holds if $\alpha\in(1,2)$, and the
inequality in \eqref{eq:other-trace-ineq-petz} holds if $\alpha\in(1,\infty)$.
\end{lemma}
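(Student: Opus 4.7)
The plan is to unify all four inequalities by setting $s=1-\alpha$ for the first display and $s=(1-\alpha)/\alpha$ for the second; in each case $s\in(0,1)$ for the ``$\alpha<1$'' range and $s\in(-1,0)$ for the ``$\alpha>1$'' range, so Lemma~\ref{lem:concave-norm} applies in both regimes. First I would fix a Stinespring isometry $V:\mathcal{H}\to\mathcal{K}\otimes\mathcal{E}$ with $V^{\dag}V=I$, $\mathcal{N}(X)=\operatorname{Tr}_{E}(VXV^{\dag})$, and $\mathcal{N}^{\dag}(Y)=V^{\dag}(Y\otimes I_{E})V$. Setting $K:=(\mathcal{N}(\sigma)^{-s/2}\otimes I_{E})V\sigma^{s/2}$, a direct expansion shows that the inner expression inside the outer trace equals $K^{\dag}(\mathcal{N}(\rho)\otimes I_{E})^{s}K$, so the LHS equals $h(\rho)$, where
\[
h(\mu)\;:=\;\operatorname{Tr}\bigl\{[K^{\dag}(\mathcal{N}(\mu)\otimes I_{E})^{s}K]^{1/s}\bigr\}.
\]

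Next I would observe that $h$ is concave in $\mu$: Lemma~\ref{lem:concave-norm} gives concavity of $Y\mapsto\operatorname{Tr}\{(K^{\dag}Y^{s}K)^{1/s}\}$ (on the PSD cone for $s\in(0,1)$, and on the positive-definite part for $s\in(-1,0)$, where the strictness hypotheses ensure $K$ is injective), and $\mu\mapsto\mathcal{N}(\mu)\otimes I_{E}$ is linear. The anchor for the argument is the identity $h(\sigma)=\operatorname{Tr}(\sigma)$, which drops out by substituting $\mu=\sigma$: the two outer factors $\mathcal{N}(\sigma)^{-s/2}$ sandwich $\mathcal{N}(\sigma)^{s}$ to give $I_{\mathcal{K}}$ on $\operatorname{supp}\mathcal{N}(\sigma)$, after which $V^{\dag}V=I_{\mathcal{H}}$ collapses the bracket to $\sigma^{s}$ whose $(1/s)$-th trace is $\operatorname{Tr}(\sigma)$.

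The crux is to verify that the directional derivative of $h$ at $\sigma$ coincides with the trace functional, $Dh(\sigma)(\eta)=\operatorname{Tr}(\eta)$. Applying $\tfrac{d}{dt}\operatorname{Tr}(X(t)^{1/s})=\tfrac{1}{s}\operatorname{Tr}(X(t)^{1/s-1}X'(t))$ together with the Fr\'echet derivative of $X\mapsto X^{s}$ produces
\[
Dh(\sigma)(\eta)\;=\;\tfrac{1}{s}\operatorname{Tr}\bigl(\mathcal{N}(\sigma)^{1-s}\,D(X^{s})|_{X=\mathcal{N}(\sigma)}(\mathcal{N}(\eta))\bigr);
\]
since the prefactor $\mathcal{N}(\sigma)^{1-s}$ commutes with $\mathcal{N}(\sigma)$, the Daletskii--Krein double sum collapses to $s\operatorname{Tr}(\mathcal{N}(\sigma)^{1-s}\mathcal{N}(\sigma)^{s-1}\mathcal{N}(\eta))=s\operatorname{Tr}(\mathcal{N}(\eta))$, which by trace preservation of $\mathcal{N}$ equals $s\operatorname{Tr}(\eta)$; the outer $1/s$ cancels the $s$ and delivers $Dh(\sigma)(\eta)=\operatorname{Tr}(\eta)$. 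Concavity of $h$ then yields
\[
h(\rho)\;\le\;h(\sigma)+Dh(\sigma)(\rho-\sigma)\;=\;\operatorname{Tr}(\sigma)+\operatorname{Tr}(\rho-\sigma)\;=\;\operatorname{Tr}(\rho)\;=\;1,
\]
which is the claimed bound. The hardest step is this tangent computation: one must handle the non-commutativity between $\mathcal{N}(\sigma)$ and $\mathcal{N}(\eta)$ via Daletskii--Krein and check that the precise coefficient $\mathcal{N}(\sigma)^{1-s}$ makes the double sum collapse to exactly $s\operatorname{Tr}(\mathcal{N}(\eta))$. Extending from the positive-definite case to the looser positivity hypotheses of the first two displayed inequalities I would dispatch by perturbing $\sigma\mapsto\sigma+\varepsilon I$, $\rho\mapsto\rho+\varepsilon I$ and letting $\varepsilon\to0^{+}$.
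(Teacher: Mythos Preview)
Your approach is correct and genuinely different from the paper's, so it is worth spelling out the contrast.

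\textbf{What the paper does.} For $\alpha\in(0,1)$ the paper also Stinespring-dilates and rewrites the trace as $\operatorname{Tr}\{(K_{\alpha}[U(\sigma\otimes|0\rangle\langle 0|)U^{\dag}]^{1-\alpha}K_{\alpha}^{\dag})^{1/(1-\alpha)}\}$, but then it inserts a Heisenberg--Weyl twirl on the environment and uses Lemma~\ref{lem:concave-norm} as a \emph{Jensen} step: averaging over the twirl replaces $U(\sigma\otimes|0\rangle\langle 0|)U^{\dag}$ by $\mathcal{N}(\sigma)\otimes\pi_{E'}$, after which the expression collapses to $\operatorname{Tr}\{\mathcal{N}(\rho)\}\le 1$. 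For $\alpha\in(1,2)$ the paper does \emph{not} reuse concavity but instead substitutes $\beta=2-\alpha$ and applies Choi's operator inequality $\mathcal{N}^{\dag}(X)^{-1}\le\mathcal{N}^{\dag}(X^{-1})$ to reduce to the $\beta\in(0,1)$ case already proved. The sandwiched inequalities follow by the same reparametrization $\gamma=(2\alpha-1)/\alpha$ you use.

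\textbf{What you do.} You keep Lemma~\ref{lem:concave-norm} as the only analytic input but exploit it via the \emph{tangent-line} inequality $h(\rho)\le h(\sigma)+Dh(\sigma)(\rho-\sigma)$ rather than a twirl. The computation $Dh(\sigma)(\eta)=\operatorname{Tr}(\eta)$ is right; the step you compressed is that after cycling one obtains $\tfrac{1}{s}\operatorname{Tr}\big\{K\sigma^{1-s}K^{\dag}\,(L_{\mathcal{N}(\sigma),s}(\mathcal{N}(\eta))\otimes I_{E})\big\}$, and $K\sigma^{1-s}K^{\dag}=(\mathcal{N}(\sigma)^{-s/2}\otimes I_{E})V\sigma V^{\dag}(\mathcal{N}(\sigma)^{-s/2}\otimes I_{E})$, so tracing out $E$ turns $V\sigma V^{\dag}$ into $\mathcal{N}(\sigma)$ and yields exactly your displayed $\tfrac{1}{s}\operatorname{Tr}\{\mathcal{N}(\sigma)^{1-s}L_{\mathcal{N}(\sigma),s}(\mathcal{N}(\eta))\}$; the Daleckii--Krein collapse is then clean because $\mathcal{N}(\sigma)^{1-s}$ commutes with $\mathcal{N}(\sigma)$. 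This gives a single argument covering both $s\in(0,1)$ and $s\in(-1,0)$, whereas the paper needs the separate Choi-inequality trick for $\alpha>1$.

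\textbf{One point to tighten.} Lemma~\ref{lem:concave-norm} as stated in the paper takes $A\in\mathcal{B}(\mathcal{H})$ (square), and for $p\in(-1,0)$ requires $A$ invertible. Your $A=K^{\dag}:\mathcal{K}\otimes\mathcal{E}\to\mathcal{H}$ is rectangular; ``$K$ injective'' makes $K^{\dag}Y^{s}K>0$ so the functional is well-defined, but it does not literally put you inside the hypothesis of the lemma. For $s\in(0,1)$ the extension to rectangular $A$ is immediate by padding; for $s\in(-1,0)$ you should either cite the rectangular version of Epstein/Carlen directly, or note that on the open set $\{\mu:\mathcal{N}(\mu)>0\}$ one may work with $A_{\varepsilon}$ a square invertible perturbation of $K^{\dag}$ and pass to the limit. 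With that caveat, your route is a legitimate and arguably more unified alternative to the paper's averaging-plus-Choi proof.
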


\begin{proof}
We begin by proving (\ref{eq:trace-ineq-rel-ent-diff}) for $\alpha\in\left(
0,1\right)  $. By Stinespring's dilation theorem~\cite{S55}, a given quantum
channel $\mathcal{N}$ can be realized as%
\begin{equation}
\mathcal{N}(\omega)=\text{Tr}_{E^{\prime}}\left\{  U\left(  \omega
\otimes|0\rangle\langle0|_{E}\right)  U^{\dag}\right\}  \quad\forall
\,\,\omega\in\mathcal{B}\left(  \mathcal{H}\right)  ,
\end{equation}
for some unitary $U$ taking $\mathcal{H\otimes H}_{E}$ to $\mathcal{K\otimes
H}_{E^{\prime}}$ and fixed state $|0\rangle_{E}$ in an auxiliary Hilbert
space$~{\mathcal{H}}_{E}$. Furthermore, it suffices to take $\dim\left(
\mathcal{H}_{E}\right)  \leq\dim(\mathcal{H})\dim\left(  \mathcal{K}\right)  $
because the number of Kraus operators for the channel $\mathcal{N}$\ can
always be taken less than or equal to $\dim(\mathcal{H})\dim\left(
\mathcal{K}\right)  $ and this is the dimension needed for an environment
system $\mathcal{H}_{E}$. The adjoint of $\mathcal{N}$ is given by%
\begin{equation}
\mathcal{N}^{\dag}\left(  \tau\right)  =\langle0|_{E}U^{\dag}\left(
\tau\otimes I_{E^{\prime}}\right)  U|0\rangle_{E}\quad\forall\,\,\tau
\in\mathcal{B}\left(  \mathcal{K}\right)  ,
\end{equation}
Then%
\begin{align}
&  \text{Tr}\left\{  \left(  \sigma^{\frac{1-\alpha}{2}}\mathcal{N}^{\dag
}\left(  \mathcal{N}(\sigma)^{\frac{\alpha-1}{2}}\mathcal{N}(\rho)^{1-\alpha
}\mathcal{N}(\sigma)^{\frac{\alpha-1}{2}}\right)  \sigma^{\frac{1-\alpha}{2}%
}\right)  ^{\frac{1}{1-\alpha}}\right\} \nonumber\\
&  =\text{Tr}\left\{  \left(  \sigma^{\frac{1-\alpha}{2}}\langle0|_{E}U^{\dag
}\left(  \left[  \left(  \mathcal{N}\left(  \sigma\right)  ^{\frac{\alpha
-1}{2}}\mathcal{N}(\rho)^{1-\alpha}\mathcal{N}(\sigma)^{\frac{\alpha-1}{2}%
}\right)  \right]  \otimes I_{E^{\prime}}\right)  U|0\rangle_{E}\sigma
^{\frac{1-\alpha}{2}}\right)  ^{\frac{1}{1-\alpha}}\right\} \nonumber\\
&  =\text{Tr}\left\{  \left(  AA^{\dag}\right)  ^{\frac{1}{1-\alpha}}\right\}
\end{align}
where%
\begin{align}
A  &  =\sigma^{\frac{1-\alpha}{2}}\langle0|_{E}U^{\dag}K_{\alpha}^{\dag},\\
K_{\alpha}  &  \equiv K_{\alpha}(\rho,\sigma,\mathcal{N})\equiv\mathcal{N}%
(\rho)^{\frac{1-\alpha}{2}}\mathcal{N}(\sigma)^{\frac{\alpha-1}{2}}\otimes
I_{E^{\prime}}. \label{eq:k_alpha}%
\end{align}
Then the above is equal to%
\begin{equation}
\text{Tr}\left\{  \left(  A^{\dag}A\right)  ^{\frac{1}{1-\alpha}}\right\}
=\text{Tr}\left\{  \left(  K_{\alpha}U|0\rangle_{E}\sigma^{\frac{1-\alpha}{2}%
}\sigma^{\frac{1-\alpha}{2}}\langle0|_{E}U^{\dag}K_{\alpha}^{\dag}\right)
^{\frac{1}{1-\alpha}}\right\}  , \label{eq:connect}%
\end{equation}
because the eigenvalues of $AA^{\dag}$ and $A^{\dag}A$ are the same for any
operator $A$. Using that%
\begin{align}
U|0\rangle_{E}\sigma^{\frac{1-\alpha}{2}}\sigma^{\frac{1-\alpha}{2}}%
\langle0|_{E}U^{\dag}  &  =U\left(  \sigma^{1-\alpha}\otimes|0\rangle
\langle0|_{E}\right)  U^{\dag}\nonumber\\
&  =U\left(  \sigma\otimes|0\rangle\langle0|_{E}\right)  ^{1-\alpha}U^{\dag
}\nonumber\\
&  =\left[  U\left(  \sigma\otimes|0\rangle\langle0|_{E}\right)  U^{\dag
}\right]  ^{1-\alpha},
\end{align}
the right hand side of (\ref{eq:connect}) is equal to%
\begin{equation}
\text{Tr}\left\{  \left(  K_{\alpha}\left[  U\left(  \sigma\otimes
|0\rangle\langle0|_{E}\right)  U^{\dag}\right]  ^{1-\alpha}K_{\alpha}^{\dag
}\right)  ^{\frac{1}{1-\alpha}}\right\}  . \label{eq:connect-up-2}%
\end{equation}
Let $\left\{  U_{E^{\prime}}^{i}\right\}  $ be a set of Heisenberg-Weyl
operators for the $E^{\prime}$ system and let $\pi_{E^{\prime}}$ denote the
maximally mixed state on system $E^{\prime}$. Now we use
Lemma~\ref{lem:concave-norm}\ to establish the inequality below:%
\begin{align}
&  \text{Tr}\left\{  \left(  K_{\alpha}\left[  U\left(  \sigma\otimes
|0\rangle\langle0|_{E}\right)  U^{\dag}\right]  ^{1-\alpha}K_{\alpha}^{\dag
}\right)  ^{\frac{1}{1-\alpha}}\right\} \nonumber\\
&  =\frac{1}{d_{E^{\prime}}^{2}}\sum_{i}\text{Tr}\left\{  \left(  K_{\alpha
}\left[  U_{E^{\prime}}^{i}U\left(  \sigma\otimes|0\rangle\langle
0|_{E}\right)  U^{\dag}\left(  U_{E^{\prime}}^{i}\right)  ^{\dag}\right]
^{1-\alpha}K_{\alpha}^{\dag}\right)  ^{\frac{1}{1-\alpha}}\right\} \\
&  \leq\text{Tr}\left\{  \left(  K_{\alpha}\left[  \frac{1}{d_{E^{\prime}}%
^{2}}\sum_{i}U_{E^{\prime}}^{i}U\left(  \sigma\otimes|0\rangle\langle
0|_{E}\right)  U^{\dag}\left(  U_{E^{\prime}}^{i}\right)  ^{\dag}\right]
^{1-\alpha}K_{\alpha}^{\dag}\right)  ^{\frac{1}{1-\alpha}}\right\} \\
&  =\text{Tr}\left\{  \left(  K_{\alpha}\left[  \text{Tr}_{E^{\prime}}\left\{
U\left(  \sigma\otimes|0\rangle\langle0|_{E}\right)  U^{\dag}\right\}
\otimes\pi_{E^{\prime}}\right]  ^{1-\alpha}K_{\alpha}^{\dag}\right)
^{\frac{1}{1-\alpha}}\right\}  .
\end{align}
Continuing, the last line above is equal to%
\begin{align}
&  \text{Tr}\left\{  \left(  K_{\alpha}\left[  \mathcal{N}(\sigma)\otimes
\pi_{E^{\prime}}\right]  ^{1-\alpha}K_{\alpha}^{\dag}\right)  ^{\frac
{1}{1-\alpha}}\right\} \nonumber\\
&  =\text{Tr}\left\{  \left(  K_{\alpha}\left[  \mathcal{N}(\sigma)^{1-\alpha
}\otimes\pi_{E^{\prime}}^{1-\alpha}\right]  K_{\alpha}^{\dag}\right)
^{\frac{1}{1-\alpha}}\right\} \label{eq:identities-1}\\
&  =\text{Tr}\left\{  \left(  \mathcal{N}(\rho)^{\frac{1-\alpha}{2}%
}\mathcal{N}(\sigma)^{\frac{\alpha-1}{2}}\mathcal{N}(\sigma)^{1-\alpha
}\mathcal{N}(\sigma)^{\frac{\alpha-1}{2}}\mathcal{N}(\rho)^{\frac{1-\alpha}%
{2}}\otimes\pi_{E^{\prime}}^{1-\alpha}\right)  ^{\frac{1}{1-\alpha}}\right\}
\\
&  =\text{Tr}\left\{  \left(  \mathcal{N}(\rho)^{\frac{1-\alpha}{2}%
}[\mathcal{N}(\sigma)]^{0}\mathcal{N}(\rho)^{\frac{1-\alpha}{2}}\otimes
\pi_{E^{\prime}}^{1-\alpha}\right)  ^{\frac{1}{1-\alpha}}\right\} \\
&  \leq\text{Tr}\left\{  \left(  \mathcal{N}(\rho)^{1-\alpha}\otimes
\pi_{E^{\prime}}^{1-\alpha}\right)  ^{\frac{1}{1-\alpha}}\right\} \\
&  =1, \label{eq:identities-2}%
\end{align}
where the inequality follows because%
\begin{equation}
\mathcal{N}(\rho)^{\frac{1-\alpha}{2}}\left[  \mathcal{N}(\sigma)\right]
^{0}\mathcal{N}(\rho)^{\frac{1-\alpha}{2}}\leq\mathcal{N}(\rho)^{1-\alpha},
\end{equation}
and%
\begin{equation}
\operatorname{Tr}\left\{  f(A)\right\}  \leq\operatorname{Tr}\left\{
f(B)\right\}  \label{eq:trace-monotone-1}%
\end{equation}
when $A\leq B$ and $f(x)\equiv x^{1/\left(  1-\alpha\right)  }$ is a monotone
non-decreasing function on $[0,\infty)$. The other inequality in
(\ref{eq:other-trace-ineq-petz}) follows by recognizing that%
\begin{equation}
\frac{1-\alpha}{\alpha}=1-\gamma, \label{eq:gamma-subs}%
\end{equation}
with $\gamma\equiv\left(  2\alpha-1\right)  /\alpha$, and $\gamma\in\left(
0,1\right)  $ when $\alpha\in(1/2,1)$.\ Thus, we can apply
(\ref{eq:trace-ineq-rel-ent-diff}) to conclude (\ref{eq:other-trace-ineq-petz}).

To prove the last two statements, we exploit Choi's inequality
\cite[Corollary~2.3]{choi1974} (see also \cite[Theorem~2.3.6]{B07}), which
states that%
\begin{equation}
\mathcal{M}(A)^{-1}\leq\mathcal{M}(A^{-1}),
\end{equation}
for $\mathcal{M}$ a strictly positive map and $A\in\mathcal{B}(\mathcal{H}%
)_{++}$.\ Consider that the adjoint $\mathcal{N}^{\dag}$\ of a channel
$\mathcal{N}$ is strictly positive because it is unital (recall that a map
$\mathcal{M}$\ is strictly positive if and only if $\mathcal{M}(I)>0$, which
is the case if $\mathcal{M}$ is unital). So let $\alpha\in(1,2)$ and consider
that%
\begin{multline}
\operatorname{Tr}\left\{  \left[  \sigma^{\frac{1-\alpha}{2}}\mathcal{N}%
^{\dag}\left(  \mathcal{N}(\sigma)^{\frac{\alpha-1}{2}}\mathcal{N}%
(\rho)^{1-\alpha}\mathcal{N}(\sigma)^{\frac{\alpha-1}{2}}\right)
\sigma^{\frac{1-\alpha}{2}}\right]  ^{\frac{1}{1-\alpha}}\right\} \\
=\operatorname{Tr}\left\{  \left[  \sigma^{-\frac{1-\beta}{2}}\mathcal{N}%
^{\dag}\left(  \mathcal{N}(\sigma)^{-\frac{\beta-1}{2}}\mathcal{N}%
(\rho)^{-\left(  1-\beta\right)  }\mathcal{N}(\sigma)^{-\frac{\beta-1}{2}%
}\right)  \sigma^{-\frac{1-\beta}{2}}\right]  ^{-\frac{1}{1-\beta}}\right\}  ,
\end{multline}
for $\beta=2-\alpha\in\left(  0,1\right)  $. Then%
\begin{align}
&  \left[  \sigma^{-\frac{1-\beta}{2}}\mathcal{N}^{\dag}\left(  \mathcal{N}%
(\sigma)^{-\frac{\beta-1}{2}}\mathcal{N}(\rho)^{-\left(  1-\beta\right)
}\mathcal{N}(\sigma)^{-\frac{\beta-1}{2}}\right)  \sigma^{-\frac{1-\beta}{2}%
}\right]  ^{-1}\nonumber\\
&  =\sigma^{\frac{1-\beta}{2}}\left[  \mathcal{N}^{\dag}\left(  \mathcal{N}%
(\sigma)^{-\frac{\beta-1}{2}}\mathcal{N}(\rho)^{-\left(  1-\beta\right)
}\mathcal{N}(\sigma)^{-\frac{\beta-1}{2}}\right)  \right]  ^{-1}\sigma
^{\frac{1-\beta}{2}}\\
&  \leq\sigma^{\frac{1-\beta}{2}}\mathcal{N}^{\dag}\left(  \left[
\mathcal{N}(\sigma)^{-\frac{\beta-1}{2}}\mathcal{N}(\rho)^{-\left(
1-\beta\right)  }\mathcal{N}(\sigma)^{-\frac{\beta-1}{2}}\right]
^{-1}\right)  \sigma^{\frac{1-\beta}{2}}\\
&  =\sigma^{\frac{1-\beta}{2}}\mathcal{N}^{\dag}\left(  \mathcal{N}%
(\sigma)^{\frac{\beta-1}{2}}\mathcal{N}(\rho)^{1-\beta}\mathcal{N}%
(\sigma)^{\frac{\beta-1}{2}}\right)  \sigma^{\frac{1-\beta}{2}},
\end{align}
where the equalities follow from the assumptions that $\rho\in\mathcal{S}%
(\mathcal{H})_{++}$, $\sigma\in\mathcal{B}(\mathcal{H})_{++}$, $\mathcal{N}%
(\rho)\in\mathcal{S}(\mathcal{K})_{++}$, and $\mathcal{N}(\sigma
)\in\mathcal{B}(\mathcal{K})_{++}$, and the inequality is an application of
Choi's inequality.\ By applying \eqref{eq:trace-monotone-1} and the result
above, we find that%
\begin{multline}
\operatorname{Tr}\left\{  \left[  \left[  \sigma^{-\frac{1-\beta}{2}%
}\mathcal{N}^{\dag}\left(  \mathcal{N}(\sigma)^{-\frac{\beta-1}{2}}%
\mathcal{N}(\rho)^{-\left(  1-\beta\right)  }\mathcal{N}(\sigma)^{-\frac
{\beta-1}{2}}\right)  \sigma^{-\frac{1-\beta}{2}}\right]  ^{-1}\right]
^{\frac{1}{1-\beta}}\right\} \\
\leq\operatorname{Tr}\left\{  \left[  \sigma^{\frac{1-\beta}{2}}%
\mathcal{N}^{\dag}\left(  \mathcal{N}(\sigma)^{\frac{\beta-1}{2}}%
\mathcal{N}(\rho)^{1-\beta}\mathcal{N}(\sigma)^{\frac{\beta-1}{2}}\right)
\sigma^{\frac{1-\beta}{2}}\right]  ^{\frac{1}{1-\beta}}\right\}  \leq1,
\end{multline}
where the last inequality is a consequence of what we have previously shown.
The other inequality (\ref{eq:other-trace-ineq-petz}) for $\alpha\in\left(
1,\infty\right)  $ follows from (\ref{eq:trace-ineq-rel-ent-diff}) for
$\alpha\in\left(  1,2\right)  $ by making the same identification as in
(\ref{eq:gamma-subs}): note here that $\gamma\in(1,2)$ if $\alpha\in
(1,\infty)$.
\end{proof}

\bigskip As a corollary, we establish the following trace inequality, which
was left as an open question in \cite{R02}:

\begin{corollary}
For $\rho\in\mathcal{S}(\mathcal{H})_{++}$, $\sigma\in\mathcal{B}%
(\mathcal{H})_{++}$, and a strict CPTP\ map $\mathcal{N}$ acting on
$\mathcal{S}(\mathcal{H})$, the following inequality holds:%
\begin{equation}
\operatorname{Tr}\left\{  \exp\left\{  \log\sigma+\mathcal{N}^{\dag}\left(
\log\mathcal{N}( \rho) -\log\mathcal{N}( \sigma) \right)  \right\}  \right\}
\leq1.
\end{equation}

\end{corollary}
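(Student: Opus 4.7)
The plan is to obtain the inequality as the $\alpha \to 1$ limit of the trace inequality \eqref{eq:trace-ineq-rel-ent-diff} of Lemma~\ref{lem:trace-ineq-gen}, which holds both for $\alpha \in (0,1)$ (by the first part of the lemma, no further positive-definiteness needed) and for $\alpha \in (1,2)$ (under the strict CPTP assumption on $\mathcal{N}$, which the corollary grants). Write $\epsilon \equiv 1-\alpha$ and let
\[
X_\epsilon \;\equiv\; \sigma^{\epsilon/2}\,\mathcal{N}^{\dag}\!\left(\mathcal{N}(\sigma)^{-\epsilon/2}\,\mathcal{N}(\rho)^{\epsilon}\,\mathcal{N}(\sigma)^{-\epsilon/2}\right)\sigma^{\epsilon/2},
\]
so that \eqref{eq:trace-ineq-rel-ent-diff} reads $\operatorname{Tr}\{X_\epsilon^{1/\epsilon}\}\le 1$ for $\epsilon$ in a punctured neighborhood of $0$.

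The first step is a Taylor expansion in $\epsilon$. Since $\rho$, $\sigma$, $\mathcal{N}(\rho)$, and $\mathcal{N}(\sigma)$ are all positive definite, the maps $\epsilon\mapsto \sigma^{\epsilon/2}$, $\epsilon\mapsto\mathcal{N}(\rho)^{\epsilon}$, and $\epsilon\mapsto\mathcal{N}(\sigma)^{-\epsilon/2}$ are real-analytic operator-valued functions of $\epsilon$ in a neighborhood of $0$, with
\[
\sigma^{\epsilon/2} = I + \tfrac{\epsilon}{2}\log\sigma + O(\epsilon^2),\qquad
\mathcal{N}(\rho)^{\epsilon} = I + \epsilon\log\mathcal{N}(\rho) + O(\epsilon^2),
\]
and analogously for $\mathcal{N}(\sigma)^{-\epsilon/2}$. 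Multiplying these expansions and using linearity and unitality of $\mathcal{N}^{\dag}$ (the identity $\mathcal{N}^{\dag}(I)=I$ kills the zeroth-order contribution correctly), I would expand
\[
\mathcal{N}^{\dag}\!\left(\mathcal{N}(\sigma)^{-\epsilon/2}\mathcal{N}(\rho)^{\epsilon}\mathcal{N}(\sigma)^{-\epsilon/2}\right) = I + \epsilon\,\mathcal{N}^{\dag}\!\left(\log\mathcal{N}(\rho)-\log\mathcal{N}(\sigma)\right) + O(\epsilon^2),
\]
and then sandwich with $\sigma^{\epsilon/2}$ to obtain $X_\epsilon = I + \epsilon M + O(\epsilon^2)$, where
\[
M \;\equiv\; \log\sigma + \mathcal{N}^{\dag}\!\left(\log\mathcal{N}(\rho)-\log\mathcal{N}(\sigma)\right).
\]

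The second step is to pass to the limit $\epsilon \to 0$ in $\operatorname{Tr}\{X_\epsilon^{1/\epsilon}\}$. For $|\epsilon|$ small, $X_\epsilon$ is positive definite (hence its functional calculus is well-defined), and from $X_\epsilon = I + \epsilon M + O(\epsilon^2)$ one gets $\log X_\epsilon = \epsilon M + O(\epsilon^2)$, so $\tfrac{1}{\epsilon}\log X_\epsilon \to M$ in operator norm. By continuity of the matrix exponential and the trace, $\operatorname{Tr}\{X_\epsilon^{1/\epsilon}\}=\operatorname{Tr}\{\exp(\tfrac{1}{\epsilon}\log X_\epsilon)\} \to \operatorname{Tr}\{e^M\}$. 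Combining with the uniform bound $\operatorname{Tr}\{X_\epsilon^{1/\epsilon}\}\le 1$ then yields the desired inequality.

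The main technical obstacle is the careful bookkeeping of the second-order remainder, specifically verifying that the $O(\epsilon^2)$ terms really are bounded uniformly in a neighborhood of $\epsilon=0$ (so that $\log X_\epsilon = \epsilon M + O(\epsilon^2)$ is a legitimate operator-norm estimate). This ultimately relies on the analyticity of fractional powers of positive definite operators and the boundedness of the completely positive unital map $\mathcal{N}^{\dag}$; the strict CPTP hypothesis on $\mathcal{N}$ (ensuring $\mathcal{N}(\rho),\mathcal{N}(\sigma)\in\mathcal{B}(\mathcal{K})_{++}$) is exactly what makes $\log\mathcal{N}(\rho)$ and $\log\mathcal{N}(\sigma)$, and therefore $M$, well-defined bounded operators.
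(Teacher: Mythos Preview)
Your proposal is correct and follows essentially the same route as the paper: take the limit $\alpha\to 1$ in the trace inequality \eqref{eq:trace-ineq-rel-ent-diff} of Lemma~\ref{lem:trace-ineq-gen}. The only difference is cosmetic---the paper invokes an external lemma (a generalized Lie--Trotter formula) for the operator limit $\left[\sigma^{\frac{1-\alpha}{2}}\mathcal{N}^{\dag}(\cdots)\sigma^{\frac{1-\alpha}{2}}\right]^{\frac{1}{1-\alpha}}\to\exp\{\log\sigma+\mathcal{N}^{\dag}(\log\mathcal{N}(\rho)-\log\mathcal{N}(\sigma))\}$, whereas you prove this limit directly via the Taylor expansion $X_\epsilon=I+\epsilon M+O(\epsilon^2)$ and continuity of $\exp$.
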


\begin{proof}
This follows by taking the limit $\alpha\nearrow1$ in
Lemma~\ref{lem:trace-ineq-gen} and using \cite[Lemma 24]{SBW14}:%
\begin{multline}
\exp\left\{  \log\sigma+\mathcal{N}^{\dag}\left(  \log\mathcal{N}(\rho
)-\log\mathcal{N}(\sigma)\right)  \right\} \\
=\lim_{\alpha\nearrow1}\left[  \sigma^{\frac{1-\alpha}{2}}\mathcal{N}^{\dag
}\left(  \mathcal{N}(\sigma)^{\frac{\alpha-1}{2}}\mathcal{N}(\rho)^{1-\alpha
}\mathcal{N}(\sigma)^{\frac{\alpha-1}{2}}\right)  \sigma^{\frac{1-\alpha}{2}%
}\right]  ^{\frac{1}{1-\alpha}}.
\end{multline}
The inequality in Lemma~\ref{lem:trace-ineq-gen} is preserved in the limit.
\end{proof}

We can also solve an open question from \cite{Z14b} for some values of
$\alpha$, simply by applying Lemma~\ref{lem:trace-ineq-gen} for the choices
$\rho=\rho_{ABC}$, $\sigma=\rho_{AC}\otimes I_{B}$, and $\mathcal{N}%
=\operatorname{Tr}_{A}$:

\begin{corollary}
\label{cor:trace-ineq}Let $\rho_{ABC}\in\mathcal{S}\left(  \mathcal{H}%
_{ABC}\right)  $. If $\alpha\in\left(  0,1\right)  $, then%
\begin{equation}
\operatorname{Tr}\left\{  \left(  \rho_{AC}^{\frac{1-\alpha}{2}}\rho
_{C}^{\frac{\alpha-1}{2}}\rho_{BC}^{1-\alpha}\rho_{C}^{\frac{\alpha-1}{2}}%
\rho_{AC}^{\frac{1-\alpha}{2}}\right)  ^{\frac{1}{1-\alpha}}\right\}  \leq1.
\label{eq:1st-trace}%
\end{equation}
If $\alpha\in\left(  1/2,1\right)  $, then%
\begin{equation}
\operatorname{Tr}\left\{  \left(  \rho_{AC}^{\frac{1-\alpha}{2\alpha}}\rho
_{C}^{\frac{\alpha-1}{2\alpha}}\rho_{BC}^{\frac{1-\alpha}{\alpha}}\rho
_{C}^{\frac{\alpha-1}{2\alpha}}\rho_{AC}^{\frac{1-\alpha}{2\alpha}}\right)
^{\frac{\alpha}{1-\alpha}}\right\}  \leq1. \label{eq:2nd-trace}%
\end{equation}
If $\rho_{ABC}\in\mathcal{S}\left(  \mathcal{H}_{ABC}\right)  _{++}$, then
\eqref{eq:1st-trace} holds for $\alpha\in\left(  1,2\right)  $ and
\eqref{eq:2nd-trace} holds for $\alpha\in\left(  1,\infty\right)  $.
\end{corollary}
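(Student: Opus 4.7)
The plan is to apply Lemma~\ref{lem:trace-ineq-gen} directly with the choices suggested in the text, namely $\rho = \rho_{ABC}$, $\sigma = \rho_{AC} \otimes I_B$, and $\mathcal{N} = \operatorname{Tr}_A$, and to verify that under these substitutions the general trace expression collapses to the one appearing in the corollary. The main step is to compute the partial trace and its adjoint: $\mathcal{N}(\rho) = \rho_{BC}$, $\mathcal{N}(\sigma) = \rho_C \otimes I_B$, and $\mathcal{N}^\dagger(Y_{BC}) = I_A \otimes Y_{BC}$. Since $\rho_{AC} \otimes I_B$ and $\rho_C \otimes I_B$ are products of commuting factors, any real power factors cleanly, e.g.\ $\sigma^{(1-\alpha)/2} = \rho_{AC}^{(1-\alpha)/2} \otimes I_B$ and similarly for $\mathcal{N}(\sigma)^{(\alpha-1)/2}$.

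Substituting these identities, I would show that
\begin{equation*}
\sigma^{\frac{1-\alpha}{2}} \mathcal{N}^\dagger\!\left( \mathcal{N}(\sigma)^{\frac{\alpha-1}{2}} \mathcal{N}(\rho)^{1-\alpha} \mathcal{N}(\sigma)^{\frac{\alpha-1}{2}} \right) \sigma^{\frac{1-\alpha}{2}} = \rho_{AC}^{\frac{1-\alpha}{2}} \rho_C^{\frac{\alpha-1}{2}} \rho_{BC}^{1-\alpha} \rho_C^{\frac{\alpha-1}{2}} \rho_{AC}^{\frac{1-\alpha}{2}},
\end{equation*}
where the $I_A$ and $I_B$ factors produced by $\mathcal{N}^\dagger$ and by the tensor structure of $\sigma$ are absorbed into the natural embedding of operators on $\mathcal{H}_{AC}$, $\mathcal{H}_{BC}$, $\mathcal{H}_C$ into $\mathcal{B}(\mathcal{H}_{ABC})$. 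Raising to the power $1/(1-\alpha)$ and taking the trace then yields exactly the left-hand side of \eqref{eq:1st-trace}, so Lemma~\ref{lem:trace-ineq-gen} delivers the bound for $\alpha \in (0,1)$. The inequality \eqref{eq:2nd-trace} follows by the same substitution applied to the second inequality of Lemma~\ref{lem:trace-ineq-gen}, which covers $\alpha \in (1/2, 1)$.

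For the positive definite case, I need to check that the hypotheses of the second half of Lemma~\ref{lem:trace-ineq-gen} are met: when $\rho_{ABC} \in \mathcal{S}(\mathcal{H}_{ABC})_{++}$, the marginals $\rho_{BC}$, $\rho_{AC}$, and $\rho_C$ are all positive definite on their respective spaces, so $\sigma = \rho_{AC} \otimes I_B \in \mathcal{B}(\mathcal{H}_{ABC})_{++}$, $\mathcal{N}(\rho) = \rho_{BC} \in \mathcal{S}(\mathcal{H}_{BC})_{++}$, and $\mathcal{N}(\sigma) = \rho_C \otimes I_B \in \mathcal{B}(\mathcal{H}_{BC})_{++}$. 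Once these are confirmed, the lemma gives \eqref{eq:1st-trace} for $\alpha \in (1,2)$ and \eqref{eq:2nd-trace} for $\alpha \in (1,\infty)$.

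There is no real obstacle here — the whole argument is a substitution — but the one point that deserves care is the bookkeeping with the $I_B$ factors and the interpretation of $\mathcal{N}^\dagger$ as tensoring with $I_A$. Writing out one case (say the first) in full and then remarking that the sandwiched case is identical after the substitution $\gamma = (2\alpha-1)/\alpha$ used in the proof of Lemma~\ref{lem:trace-ineq-gen} should be enough.
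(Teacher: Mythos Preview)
Your proposal is correct and follows exactly the paper's approach: the paper itself derives the corollary ``simply by applying Lemma~\ref{lem:trace-ineq-gen} for the choices $\rho=\rho_{ABC}$, $\sigma=\rho_{AC}\otimes I_{B}$, and $\mathcal{N}=\operatorname{Tr}_{A}$,'' and your substitution and hypothesis-checking are precisely the verification that makes this work.
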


\begin{remark}
Let $\rho_{ABC}\in\mathcal{S}\left(  \mathcal{H}_{ABC}\right)  _{++}$. The
well-known inequality \cite{LR73}%
\begin{equation}
\operatorname{Tr}\left\{  \exp\left\{  \log\rho_{AC}+\log\rho_{BC}-\log
\rho_{C}\right\}  \right\}  \leq1
\end{equation}
follows from Corollary~\ref{cor:trace-ineq} by taking the limit $\alpha
\rightarrow1$ and using the generalized Lie-Trotter product formula
\cite{S85}:%
\begin{equation}
\exp\left\{  \log\rho_{AC}+\log\rho_{BC}-\log\rho_{C}\right\}  =\lim
_{\alpha\rightarrow1}\left(  \rho_{AC}^{\frac{1-\alpha}{2}}\rho_{C}%
^{\frac{\alpha-1}{2}}\rho_{BC}^{1-\alpha}\rho_{C}^{\frac{\alpha-1}{2}}%
\rho_{AC}^{\frac{1-\alpha}{2}}\right)  ^{\frac{1}{1-\alpha}}.
\end{equation}

\end{remark}

The following proposition establishes some important properties of the
$\Delta_{\alpha}(\rho,\sigma,\mathcal{N})$ and $\widetilde{\Delta}_{\alpha
}(\rho,\sigma,\mathcal{N})$ quantities, which were left as open questions in
\cite{SBW14}:

\begin{proposition}
\label{lem:delta-positive}Let $\rho\in\mathcal{S}(\mathcal{H})$, $\sigma
\in\mathcal{B}(\mathcal{H})_{+}$, and let $\mathcal{N}$ be a CPTP\ map. For
all $\alpha\in\left(  0,1\right)  $%
\begin{equation}
\Delta_{\alpha}(\rho,\sigma,\mathcal{N})\geq0, \label{delta-non-neg}%
\end{equation}
with equality occurring if and only if%
\begin{equation}
\rho=\left[  \sigma^{\frac{1-\alpha}{2}}\mathcal{N}^{\dag}\left(
\mathcal{N}(\sigma)^{\frac{\alpha-1}{2}}\mathcal{N}(\rho)^{1-\alpha
}\mathcal{N}(\sigma)^{\frac{\alpha-1}{2}}\right)  \sigma^{\frac{1-\alpha}{2}%
}\right]  ^{\frac{1}{1-\alpha}}. \label{eq:Delta=0}%
\end{equation}
For all $\alpha\in\left(  1/2,1\right)  $%
\begin{equation}
\widetilde{\Delta}_{\alpha}(\rho,\sigma,\mathcal{N})\geq0,
\label{tilde-delta-non-neg}%
\end{equation}
with equality occurring if and only if%
\begin{equation}
\rho=\left[  \sigma^{\frac{1-\alpha}{2\alpha}}\mathcal{N}^{\dag}\left(
\mathcal{N}(\sigma)^{\frac{\alpha-1}{2\alpha}}\mathcal{N}(\rho)^{\frac
{1-\alpha}{\alpha}}\mathcal{N}(\sigma)^{\frac{\alpha-1}{2\alpha}}\right)
\sigma^{\frac{1-\alpha}{2\alpha}}\right]  ^{\frac{\alpha}{1-\alpha}}.
\label{eq:tilde-Delta=0}%
\end{equation}
Let $\rho\in\mathcal{S}(\mathcal{H})_{++}$, $\sigma\in\mathcal{B}%
(\mathcal{H})_{++}$, and let $\mathcal{N}:\mathcal{B}(\mathcal{H}%
)\mapsto\mathcal{B}(\mathcal{K})$ be a CPTP\ map such that $\mathcal{N}%
(\rho)\in\mathcal{S}(\mathcal{K})_{++}$, $\mathcal{N}(\sigma)\in
\mathcal{B}(\mathcal{K})_{++}$. For these choices, the inequality in
\eqref{delta-non-neg} and the equality condition in \eqref{eq:Delta=0}\ hold
if $\alpha\in(1,2)$, and the inequality in \eqref{tilde-delta-non-neg} and
equality condition in \eqref{eq:tilde-Delta=0}\ hold if $\alpha\in(1,\infty)$.
\end{proposition}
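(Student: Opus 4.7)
The plan is to recognize $\Delta_\alpha$ and $\widetilde{\Delta}_\alpha$ as ordinary Rényi divergences of $\rho$ against explicitly-constructed positive operators, and then apply the two lemmas that have already been established: the trace bounds of Lemma~\ref{lem:trace-ineq-gen} and the non-negativity-with-equality-condition of Lemma~\ref{lem:min-max-0}. Concretely, define
\begin{equation}
\tau \equiv \left[\sigma^{\frac{1-\alpha}{2}}\mathcal{N}^{\dag}\left(\mathcal{N}(\sigma)^{\frac{\alpha-1}{2}}\mathcal{N}(\rho)^{1-\alpha}\mathcal{N}(\sigma)^{\frac{\alpha-1}{2}}\right)\sigma^{\frac{1-\alpha}{2}}\right]^{\frac{1}{1-\alpha}},
\end{equation}
so that $\tau^{1-\alpha}$ is exactly the bracketed operator, and therefore $\Delta_\alpha(\rho,\sigma,\mathcal{N}) = D_\alpha(\rho\Vert\tau)$ by direct comparison of definitions. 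Similarly, define
\begin{equation}
\widetilde{\tau} \equiv \left[\sigma^{\frac{1-\alpha}{2\alpha}}\mathcal{N}^{\dag}\left(\mathcal{N}(\sigma)^{\frac{\alpha-1}{2\alpha}}\mathcal{N}(\rho)^{\frac{1-\alpha}{\alpha}}\mathcal{N}(\sigma)^{\frac{\alpha-1}{2\alpha}}\right)\sigma^{\frac{1-\alpha}{2\alpha}}\right]^{\frac{\alpha}{1-\alpha}},
\end{equation}
and observe that if $Y$ denotes the inner bracketed operator, then $\widetilde{\tau}^{(1-\alpha)/(2\alpha)} = Y^{1/2}$, so $\widetilde{D}_\alpha(\rho\Vert\widetilde{\tau}) = \tfrac{1}{\alpha-1}\log\operatorname{Tr}\{(Y^{1/2}\rho Y^{1/2})^{\alpha}\}$. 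Setting $Z \equiv Y^{1/2}\rho^{1/2}$ gives $Y^{1/2}\rho Y^{1/2} = ZZ^{\dag}$ and $\rho^{1/2}Y\rho^{1/2} = Z^{\dag}Z$, and since these share the same nonzero spectrum we obtain $\operatorname{Tr}\{(Y^{1/2}\rho Y^{1/2})^{\alpha}\} = \operatorname{Tr}\{(\rho^{1/2}Y\rho^{1/2})^{\alpha}\} = \Vert\rho^{1/2}Y\rho^{1/2}\Vert_{\alpha}^{\alpha}$, which is exactly the trace appearing in the definition of $\widetilde{\Delta}_\alpha$. Hence $\widetilde{\Delta}_\alpha(\rho,\sigma,\mathcal{N}) = \widetilde{D}_\alpha(\rho\Vert\widetilde{\tau})$.

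With these identifications in hand, the rest is immediate. By Lemma~\ref{lem:trace-ineq-gen}, $\operatorname{Tr}\{\tau\}\leq 1 = \operatorname{Tr}\{\rho\}$ for $\alpha\in(0,1)$, so Lemma~\ref{lem:min-max-0} gives $D_\alpha(\rho\Vert\tau)\geq 0$ with equality if and only if $\rho = \tau$, which is precisely (\ref{eq:Delta=0}). The argument for $\widetilde{\Delta}_\alpha$ on $\alpha\in(1/2,1)$ is identical, using the other inequality of Lemma~\ref{lem:trace-ineq-gen} and the sandwiched case of Lemma~\ref{lem:min-max-0}, yielding the equality condition (\ref{eq:tilde-Delta=0}). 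For the extensions to $\alpha\in(1,2)$ and $\alpha\in(1,\infty)$, the positive definiteness hypotheses on $\rho,\sigma,\mathcal{N}(\rho),\mathcal{N}(\sigma)$ are exactly what is needed so that every negative fractional power appearing in $\tau$ and $\widetilde{\tau}$ is well defined, so that $\tau,\widetilde{\tau}>0$ (and in particular $\operatorname{supp}\rho\subseteq\operatorname{supp}\tau$, making the support condition for $D_\alpha,\widetilde{D}_\alpha$ at $\alpha>1$ automatic), and so that the corresponding clauses of Lemma~\ref{lem:trace-ineq-gen} apply. The same two-step recipe then delivers both the inequality and the equality condition.

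There is no real obstacle here: all of the analytic work has been absorbed into Lemma~\ref{lem:trace-ineq-gen} (whose proof uses Stinespring dilation, the concavity lemma, and Choi's inequality for the $\alpha>1$ case), and the present statement is a bookkeeping reduction. The only care required is the verification that the inner exponents line up so that $\tau^{1-\alpha}$ and $\widetilde{\tau}^{(1-\alpha)/\alpha}$ reproduce precisely the expressions inside the traces defining $\Delta_\alpha$ and $\widetilde{\Delta}_\alpha$, and the cyclic identity $\operatorname{Tr}\{(Z^{\dag}Z)^{\alpha}\} = \operatorname{Tr}\{(ZZ^{\dag})^{\alpha}\}$ used to pass between the two natural ways of writing $\widetilde{D}_\alpha(\rho\Vert\widetilde{\tau})$.
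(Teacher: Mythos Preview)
Your proposal is correct and follows essentially the same approach as the paper's own proof: both recognize $\Delta_\alpha$ and $\widetilde{\Delta}_\alpha$ as $D_\alpha(\rho\Vert\tau)$ and $\widetilde{D}_\alpha(\rho\Vert\widetilde{\tau})$ respectively (the paper cites \cite{SBW14} for this identification, whereas you verify it directly, including the $ZZ^\dag$ vs.\ $Z^\dag Z$ argument for the sandwiched case), and then invoke Lemma~\ref{lem:trace-ineq-gen} for $\operatorname{Tr}\{\tau\},\operatorname{Tr}\{\widetilde{\tau}\}\leq 1$ together with Lemma~\ref{lem:min-max-0} for non-negativity and the equality condition.
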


\begin{proof}
Note that the quantities $\Delta_{\alpha}$ and $\widetilde{\Delta}_{\alpha}$
defined in (\ref{eq:Renyi-rel-ent-diff}) and (\ref{eq:Renyi-rel-ent-diff-sand}%
), respectively, can be expressed in terms of the $\alpha$-R\'{e}nyi relative
entropy and the $\alpha$-sandwiched R\'{e}nyi relative entropy as follows
\cite{SBW14}:%
\begin{align}
\Delta_{\alpha}(\rho,\sigma,\mathcal{N})  &  =D_{\alpha}\left(  \rho
\middle\Vert\left[  \sigma^{\frac{1-\alpha}{2}}\mathcal{N}^{\dag}\left(
\mathcal{N}(\sigma)^{\frac{\alpha-1}{2}}\mathcal{N}\left(  \rho\right)
^{1-\alpha}\mathcal{N}(\sigma)^{\frac{\alpha-1}{2}}\right)  \sigma
^{\frac{1-\alpha}{2}}\right]  ^{\frac{1}{1-\alpha}}\right)
,\label{eq:ren-rel-ent-diff-1}\\
\widetilde{\Delta}_{\alpha}(\rho,\sigma,\mathcal{N})  &  =\widetilde
{D}_{\alpha}\left(  \rho\middle\Vert\left[  \sigma^{\frac{1-\alpha}{2\alpha}%
}\mathcal{N}^{\dag}\left(  \mathcal{N}(\sigma)^{\frac{\alpha-1}{2\alpha}%
}\mathcal{N(}\rho)^{\frac{1-\alpha}{\alpha}}\mathcal{N(}\sigma)^{\frac
{\alpha-1}{2\alpha}}\right)  \sigma^{\frac{1-\alpha}{2\alpha}}\right]
^{\frac{\alpha}{1-\alpha}}\right)  . \label{eq:ren-rel-ent-diff-2}%
\end{align}
The non-negativity and equality conditions of $\Delta_{\alpha}(\rho
,\sigma,\mathcal{N})$ for $\alpha\in(0,1)\cup(1,2)$, and that of
$\widetilde{\Delta}_{\alpha}(\rho,\sigma,\mathcal{N})$ for $\alpha
\in(1/2,1)\cup(1,\infty)$, then follow by applying Lemmas~\ref{lem:min-max-0}
and \ref{lem:trace-ineq-gen}.
\end{proof}

Appendix~\ref{app:alt-proofs} gives alternate proofs of the inequality in
\eqref{delta-non-neg} for $\alpha\in(1,2)$ and the inequality in
\eqref{tilde-delta-non-neg} for $\alpha\in(1,\infty)$. These were the original
proofs appearing in a preprint version of this paper and might be of
independent interest. Appendix~\ref{app:alt-proofs} gives some other equality conditions.

\section{Sufficiency of quantum channels and R\'enyi generalizations of
relative entropy differences}

\begin{theorem}
\label{thm:channel-suff-result}Let $\rho\in\mathcal{S}(\mathcal{H})_{++}$,
$\sigma\in\mathcal{B}(\mathcal{H})_{++}$, and let $\mathcal{N}$ be a strict
CPTP\ map. Then $\Delta_{\alpha}(\rho,\sigma,\mathcal{N})=0$ and
$\widetilde{\Delta}_{\alpha}(\rho,\sigma,\mathcal{N})=0$ for all $\alpha
\in\left(  0,1\right)  \cup\left(  1,\infty\right)  $ if $\mathcal{N}$ is
sufficient for $\rho$ and $\sigma$. Furthermore, $\mathcal{N}$ is sufficient
for $\rho$ and $\sigma$ if $\Delta_{\alpha}(\rho,\sigma,\mathcal{N})=0$ for
some $\alpha\in\left(  0,1\right)  \cup\left(  1,2\right)  $ or if
$\widetilde{\Delta}_{\alpha}(\rho,\sigma,\mathcal{N})$ for some $\alpha
\in\left(  1/2,1\right)  \cup\left(  1,\infty\right)  $.
\end{theorem}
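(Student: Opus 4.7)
The plan is to prove the two directions of the equivalence separately, drawing on three ingredients already established in the excerpt: Proposition~\ref{lem:delta-positive} (which converts $\Delta_\alpha = 0$ or $\widetilde{\Delta}_\alpha = 0$ into an explicit operator equality), Lemma~\ref{lem-equiv} (which identifies sufficiency with a Petz $f$-divergence equality), and Theorem~\ref{thm:mos-petz-decomp} (the Mosonyi--Petz block decomposition of sufficient channels).

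For the forward implication, I would substitute the block forms $\rho = \bigoplus_j p(j)\rho_{L_j}^j \otimes \tau_{R_j}^j$, $\sigma = \bigoplus_j q(j)\sigma_{L_j}^j \otimes \tau_{R_j}^j$, and $\mathcal{N} = \bigoplus_j \mathcal{U}_j \otimes \mathcal{N}_j^R$ directly into the operator appearing inside the trace in the definition of $\Delta_\alpha$ and simplify blockwise. The sandwich $\mathcal{N}(\sigma)^{(\alpha-1)/2}\mathcal{N}(\rho)^{1-\alpha}\mathcal{N}(\sigma)^{(\alpha-1)/2}$ has $\mathcal{N}_j^R(\tau_{R_j}^j)$-exponents summing to zero within each block; applying $\mathcal{N}^\dag$ then uses unitarity of $\mathcal{U}_j$ and unitality of $(\mathcal{N}_j^R)^\dag$; finally, conjugating by $\sigma^{(1-\alpha)/2}$ cancels all $\sigma_{L_j}^j$-factors and all $q(j)$-factors pairwise, leaving exactly $\rho^{1-\alpha}$. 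Thus the trace defining $\Delta_\alpha$ is $\operatorname{Tr}\{\rho^\alpha \rho^{1-\alpha}\} = 1$, giving $\Delta_\alpha = 0$. Running the identical computation with exponents rescaled by $1/\alpha$ shows the inner operator inside $\widetilde{\Delta}_\alpha$ collapses to $\rho^{(1-\alpha)/\alpha}$, so the $\alpha$-norm expression evaluates to $(\operatorname{Tr}\{\rho\})^{1/\alpha} = 1$, and $\widetilde{\Delta}_\alpha = 0$.

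For the reverse implication, suppose $\Delta_\alpha(\rho,\sigma,\mathcal{N}) = 0$ for some $\alpha \in (0,1) \cup (1,2)$. Proposition~\ref{lem:delta-positive} converts this into the operator identity
\[
\rho^{1-\alpha} = \sigma^{(1-\alpha)/2}\mathcal{N}^\dag\!\left(\mathcal{N}(\sigma)^{(\alpha-1)/2}\mathcal{N}(\rho)^{1-\alpha}\mathcal{N}(\sigma)^{(\alpha-1)/2}\right)\sigma^{(1-\alpha)/2}.
\]
The key move is to extract from this a Petz $f$-divergence equality: conjugate both sides by $\sigma^{-(1-\alpha)/2}$ and take the trace against $\sigma$, using the adjoint relation $\operatorname{Tr}\{\sigma\,\mathcal{N}^\dag(X)\} = \operatorname{Tr}\{\mathcal{N}(\sigma)\,X\}$. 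The $\sigma$-factors on the left cyclically combine to $\sigma^\alpha$, while $\mathcal{N}(\sigma)\cdot\mathcal{N}(\sigma)^{(\alpha-1)/2}(\cdots)\mathcal{N}(\sigma)^{(\alpha-1)/2}$ on the right contracts to $\mathcal{N}(\sigma)^\alpha$, yielding
\[
\operatorname{Tr}\{\sigma^\alpha \rho^{1-\alpha}\} = \operatorname{Tr}\{\mathcal{N}(\sigma)^\alpha \mathcal{N}(\rho)^{1-\alpha}\}.
\]
This is precisely $S_f(\sigma\Vert\rho) = S_f(\mathcal{N}(\sigma)\Vert\mathcal{N}(\rho))$ for the Rényi functions of Remark~\ref{rem:renyi-f-div}, so Lemma~\ref{lem-equiv} gives sufficiency of $\mathcal{N}$ for $\sigma$ and $\rho$, equivalently for $\rho$ and $\sigma$ since Definition~\ref{def:channel-suff} is symmetric in the two states.

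The $\widetilde{\Delta}_\alpha$ case runs in parallel: conjugating the equality condition by $\sigma^{-(1-\alpha)/(2\alpha)}$ and tracing against $\sigma$ yields $\operatorname{Tr}\{\sigma^\beta \rho^{1-\beta}\} = \operatorname{Tr}\{\mathcal{N}(\sigma)^\beta \mathcal{N}(\rho)^{1-\beta}\}$ with $\beta = (2\alpha-1)/\alpha$, and one checks that $\alpha \in (1/2,1) \cup (1,\infty)$ is exactly the range for which $\beta \in (0,1) \cup (1,2)$, placing us inside the regime of Lemma~\ref{lem-equiv} once again. The main step to get right is the exponent bookkeeping --- recognising that the complicated operator equation produced by Proposition~\ref{lem:delta-positive} really does collapse, after a single multiplication by $\sigma$ and a trace, to a standard Petz $f$-divergence equality for an $f$ lying in the range where Lemma~\ref{lem-equiv} applies. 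Once this observation is in hand, the theorem is essentially a chain of results already assembled in the preceding sections.
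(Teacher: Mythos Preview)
Your proposal is correct and follows essentially the same route as the paper: the forward direction plugs the Mosonyi--Petz block decomposition of Theorem~\ref{thm:mos-petz-decomp} into the defining trace/norm and simplifies blockwise, while the reverse direction invokes the equality condition of Proposition~\ref{lem:delta-positive}, multiplies through by $\sigma$ and takes a trace to obtain $\operatorname{Tr}\{\sigma^{\alpha}\rho^{1-\alpha}\}=\operatorname{Tr}\{\mathcal{N}(\sigma)^{\alpha}\mathcal{N}(\rho)^{1-\alpha}\}$, and then appeals to Lemma~\ref{lem-equiv}. The paper's $\widetilde{\Delta}_{\alpha}$ reduction via the substitution $\gamma=(2\alpha-1)/\alpha$ is exactly your $\beta$, so the two arguments coincide in content and structure.
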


\begin{proof}
We begin by proving that $\Delta_{\alpha}(\rho,\sigma,\mathcal{N})=0$ for all
$\alpha\in\left(  0,1\right)  \cup\left(  1,\infty\right)  $ if $\mathcal{N}$
is sufficient for $\rho$ and $\sigma$. So suppose that $\mathcal{N}$ is
sufficient for $\rho$ and $\sigma$. Then according to Theorem$~$%
\ref{thm:mos-petz-decomp}, this implies that the decompositions in
(\ref{eq:sufficiency-states-decompose}%
)-(\ref{eq:sufficiency-channel-decompose}) hold. To simplify things, we
exploit the direct sum structure in (\ref{eq:sufficiency-states-decompose}%
)-(\ref{eq:sufficiency-channel-decompose}) and first evaluate the contribution
arising from the $j^{\text{th}}$ block, for any $j$. Then for a given block,
evaluating the formula%
\begin{equation}
\rho^{\alpha}\sigma^{\frac{1-\alpha}{2}}\mathcal{N}^{\dag}\left(
\mathcal{N}(\sigma)^{\frac{\alpha-1}{2}}\mathcal{N}(\rho)^{1-\alpha
}\mathcal{N}(\sigma)^{\frac{\alpha-1}{2}}\right)  \sigma^{\frac{1-\alpha}{2}}%
\end{equation}
gives the following (where we suppress the index $j$, for simplicity)%
\begin{multline}
\left(  p\rho_{L}\otimes\tau_{R}\right)  ^{\alpha}\left(  q\sigma_{L}%
\otimes\tau_{R}\right)  ^{\frac{1-\alpha}{2}}\times\nonumber\\
\left[  \mathcal{U}^{\dag}\otimes\left(  \mathcal{N}^{R}\right)  ^{\dag
}\right]  \left(  \left(  q\mathcal{U}(\sigma_{L})\otimes\mathcal{N}^{R}%
(\tau_{R})\right)  ^{\frac{\alpha-1}{2}}\left(  p\mathcal{U}(\rho_{L}%
)\otimes\mathcal{N}^{R}(\tau_{R})\right)  ^{1-\alpha}\left(  q\mathcal{U}%
(\sigma_{L})\otimes\mathcal{N}^{R}(\tau_{R})\right)  ^{\frac{\alpha-1}{2}%
}\right)  \times\\
\left(  q\sigma_{L}\otimes\tau_{R}\right)  ^{\frac{1-\alpha}{2}}%
\end{multline}%
\begin{multline}
=p\left(  \rho_{L}\otimes\tau_{R}\right)  ^{\alpha}\left(  \sigma_{L}%
\otimes\tau_{R}\right)  ^{\frac{1-\alpha}{2}}\times\\
\left[  \mathcal{U}^{\dag}\otimes\left(  \mathcal{N}^{R}\right)  ^{\dag
}\right]  \left(  \left(  \mathcal{U}(\sigma_{L})\otimes\mathcal{N}^{R}%
(\tau_{R})\right)  ^{\frac{\alpha-1}{2}}\left(  \mathcal{U}(\rho_{L}%
)\otimes\mathcal{N}^{R}(\tau_{R})\right)  ^{1-\alpha}\left(  \mathcal{U}%
(\sigma_{L})\otimes\mathcal{N}^{R}(\tau_{R})\right)  ^{\frac{\alpha-1}{2}%
}\right)  \times\\
\left(  \sigma_{L}\otimes\tau_{R}\right)  ^{\frac{1-\alpha}{2}}%
\end{multline}%
\begin{multline}
=p\left(  (\rho_{L})^{\alpha}(\sigma_{L})^{\frac{1-\alpha}{2}}\otimes\tau
_{R}^{\left(  \alpha+1\right)  /2}\right)  \times\\
\left[  \mathcal{U}^{\dag}\otimes\left(  \mathcal{N}^{R}\right)  ^{\dag
}\right]  \left(  \left(  \mathcal{U}(\sigma_{L})^{\frac{\alpha-1}{2}%
}\mathcal{U}(\rho_{L})^{1-\alpha}\mathcal{U}(\sigma_{L})^{\frac{\alpha-1}{2}%
}\otimes\mathcal{N}^{R}(\tau_{R})^{\frac{\alpha-1}{2}}\mathcal{N}^{R}(\tau
_{R})^{1-\alpha}\mathcal{N}^{R}(\tau_{R})^{\frac{\alpha-1}{2}}\right)
\right)  \times\\
\left(  \sigma_{L}\otimes\tau_{R}\right)  ^{\frac{1-\alpha}{2}}.
\end{multline}
Continuing, the last line is equal to%
\begin{align}
&  p\left(  (\rho_{L})^{\alpha}(\sigma_{L})^{\frac{1-\alpha}{2}}\otimes
\tau_{R}^{\left(  \alpha+1\right)  /2}\right)  \left[  \mathcal{U}^{\dag
}\otimes\left(  \mathcal{N}^{R}\right)  ^{\dag}\right]  \left(  \mathcal{U}%
\left[  (\sigma_{L})^{\frac{\alpha-1}{2}}(\rho_{L})^{1-\alpha}(\sigma
_{L})^{\frac{\alpha-1}{2}}\right]  \otimes I_{R}\right)  \left(  \sigma
_{L}\otimes\tau_{R}\right)  ^{\frac{1-\alpha}{2}}\nonumber\\
&  =p\left(  (\rho_{L})^{\alpha}(\sigma_{L})^{\frac{1-\alpha}{2}}\otimes
\tau_{R}^{\left(  \alpha+1\right)  /2}\right)  \left(  \left[  (\sigma
_{L})^{\frac{\alpha-1}{2}}(\rho_{L})^{1-\alpha}(\sigma_{L})^{\frac{\alpha
-1}{2}}\right]  \otimes I_{R}\right)  \left(  \sigma_{L}\otimes\tau
_{R}\right)  ^{\frac{1-\alpha}{2}}\nonumber\\
&  =p\left(  (\rho_{L})^{\alpha}(\sigma_{L})^{\frac{1-\alpha}{2}}(\sigma
_{L})^{\frac{\alpha-1}{2}}(\rho_{L})^{1-\alpha}(\sigma_{L})^{\frac{\alpha
-1}{2}}(\sigma_{L})^{\frac{1-\alpha}{2}}\otimes\tau_{R}^{\left(
\alpha+1\right)  /2}\tau_{R}^{\frac{1-\alpha}{2}}\right) \nonumber\\
&  =p\left(  (\rho_{L})^{\alpha}(\rho_{L})^{1-\alpha}\otimes\tau_{R}\right)
\nonumber\\
&  =p\rho_{L}\otimes\tau_{R}.
\end{align}
Taking the trace gives $p$ and thus we find that each block has trace equal to
$p\left(  j\right)  $, so that%
\begin{equation}
\Delta_{\alpha}(\rho,\sigma,\mathcal{N})=\frac{1}{\alpha-1}\log\sum_{j}p(j)=0.
\end{equation}
So this proves that $\Delta_{\alpha}(\rho,\sigma,\mathcal{N})=0$ for all
$\alpha\in\left(  0,1\right)  \cup\left(  1,\infty\right)  $ if $\mathcal{N}$
is sufficient for $\rho$ and $\sigma$.

We now prove that $\widetilde{\Delta}_{\alpha}(\rho,\sigma,\mathcal{N})=0$ for
all $\alpha\in\left(  0,1\right)  \cup\left(  1,\infty\right)  $ if
$\mathcal{N}$ is sufficient for $\rho$ and $\sigma$. As above, the
decompositions in (\ref{eq:sufficiency-states-decompose}%
)-(\ref{eq:sufficiency-channel-decompose}) hold and we exploit this direct sum
structure again. As in the previous proof, it suffices to evaluate the
contribution arising from each block $j$, for any $j$, in the direct-sum
decomposition. Evaluating the formula%
\begin{equation}
\rho^{\frac{1}{2}}\sigma^{\frac{1-\alpha}{2\alpha}}\mathcal{N}^{\dag}\left(
\mathcal{N}(\sigma)^{\frac{\alpha-1}{2\alpha}}\mathcal{N}(\rho)^{1-\alpha
}\mathcal{N}(\sigma)^{\frac{\alpha-1}{2\alpha}}\right)  \sigma^{\frac
{1-\alpha}{2\alpha}}\rho^{\frac{1}{2}},
\end{equation}
for the $j$th block, (where we once again suppress the index $j$) gives%
\begin{multline}
\left(  p\rho_{L}\otimes\tau_{R}\right)  ^{\frac{1}{2}}\left(  q\sigma
_{L}\otimes\tau_{R}\right)  ^{\frac{1-\alpha}{2\alpha}}\times\\
\left[  \mathcal{U}^{\dag}\otimes\left(  \mathcal{N}^{R}\right)  ^{\dag
}\right]  \left(  \left(  q\mathcal{U}(\sigma_{L})\otimes\mathcal{N}^{R}%
(\tau_{R})\right)  ^{\frac{\alpha-1}{2\alpha}}\left(  p\mathcal{U}(\rho
_{L})\otimes\mathcal{N}^{R}(\tau_{R})\right)  ^{\frac{1-\alpha}{\alpha}%
}\left(  q\mathcal{U}(\sigma_{L})\otimes\mathcal{N}^{R}(\tau_{R})\right)
^{\frac{\alpha-1}{2\alpha}}\right)  \times\\
\left(  q\sigma_{L}\otimes\tau_{R}\right)  ^{\frac{1-\alpha}{2\alpha}}\left(
p\rho_{L}\otimes\tau_{R}\right)  ^{\frac{1}{2}}. \label{eq:jth-block-starting}%
\end{multline}
Then proceeding very similarly as before, we conclude that
(\ref{eq:jth-block-starting}) is equal to%
\begin{equation}
p^{\frac{1}{\alpha}}\rho_{L}^{\frac{1}{\alpha}}\otimes\tau_{R}^{\frac
{1}{\alpha}}.
\end{equation}
We evaluate evaluate the norm $\left\Vert \cdot\right\Vert _{\alpha}^{\alpha}%
$\ of the $j$th block to be%
\begin{equation}
\left\Vert p^{\frac{1}{\alpha}}\rho_{L}^{\frac{1}{\alpha}}\otimes\tau
_{R}^{\frac{1}{\alpha}}\right\Vert _{\alpha}^{\alpha}=p.
\end{equation}
As a consequence, we find that%
\begin{equation}
\widetilde{\Delta}_{\alpha}(\rho,\sigma,\mathcal{N})=\frac{1}{\alpha-1}%
\log\left(  \sum_{j}p(j)\right)  =0,
\end{equation}
where we used that $\left\Vert A\oplus B\right\Vert _{\alpha}^{\alpha
}=\left\Vert A\right\Vert _{\alpha}^{\alpha}+\left\Vert B\right\Vert _{\alpha
}^{\alpha}$.

Now suppose that $\Delta_{\alpha}(\rho,\sigma,\mathcal{N})=0$ for some
$\alpha\in\left(  0,1\right)  \cup(1,2)$. From
Proposition~\ref{lem:delta-positive}, we have that%
\begin{equation}
\rho=\left[  \sigma^{\frac{1-\alpha}{2}}\mathcal{N}^{\dag}\left(
\mathcal{N}(\sigma)^{\frac{\alpha-1}{2}}\mathcal{N}(\rho)^{1-\alpha
}\mathcal{N}(\sigma)^{\frac{\alpha-1}{2}}\right)  \sigma^{\frac{1-\alpha}{2}%
}\right]  ^{\frac{1}{1-\alpha}}, \label{eq:step-ren-rel-ent-1}%
\end{equation}
which is equivalent to%
\begin{equation}
\sigma^{\frac{\alpha-1}{2}}\rho^{1-\alpha}\sigma^{\frac{\alpha-1}{2}%
}=\mathcal{N}^{\dag}\left(  \mathcal{N}(\sigma)^{\frac{\alpha-1}{2}%
}\mathcal{N}(\rho)^{1-\alpha}\mathcal{N}(\sigma)^{\frac{\alpha-1}{2}}\right)
.
\end{equation}
Multiply both sides by $\sigma$ and take the trace to get%
\begin{align}
\text{Tr}\left\{  \sigma\sigma^{\frac{\alpha-1}{2}}\rho^{1-\alpha}%
\sigma^{\frac{\alpha-1}{2}}\right\}   &  =\text{Tr}\left\{  \sigma
\mathcal{N}^{\dag}\left(  \mathcal{N}(\sigma)^{\frac{\alpha-1}{2}}%
\mathcal{N}(\rho)^{1-\alpha}\mathcal{N}(\sigma)^{\frac{\alpha-1}{2}}\right)
\right\} \nonumber\\
\Leftrightarrow\text{Tr}\left\{  \rho^{1-\alpha}\sigma^{\alpha}\right\}   &
=\text{Tr}\left\{  \mathcal{N}(\sigma)\mathcal{N}(\sigma)^{\frac{\alpha-1}{2}%
}\mathcal{N}(\rho)^{1-\alpha}\mathcal{N}(\sigma)^{\frac{\alpha-1}{2}}\right\}
\nonumber\\
\Leftrightarrow\text{Tr}\left\{  \rho^{1-\alpha}\sigma^{\alpha}\right\}   &
=\text{Tr}\left\{  \mathcal{N}(\rho)^{1-\alpha}\mathcal{N}(\sigma)^{\alpha
}\right\} \nonumber\\
\Leftrightarrow D_{\alpha}(\sigma\Vert\rho)  &  =D_{\alpha}(\mathcal{N}%
(\sigma)\Vert\mathcal{N}(\rho)). \label{eq:step-ren-rel-ent-last}%
\end{align}
The last line is an equality of $f$-divergences (see
Remark~\ref{rem:renyi-f-div}), which by Lemma~\ref{lem-equiv} allows us to
conclude that $\mathcal{N}$ is sufficient for $\rho$ and $\sigma$.

Suppose that $\widetilde{\Delta}_{\alpha}(\rho,\sigma,\mathcal{N})=0$ for some
$\alpha\in\left(  1/2,1\right)  \cup(1,\infty)$. From
Proposition~\ref{lem:delta-positive}, we have that%
\begin{equation}
\rho=\left[  \sigma^{\frac{1-\alpha}{2\alpha}}\mathcal{N}^{\dag}\left(
\mathcal{N}(\sigma)^{\frac{\alpha-1}{2\alpha}}\mathcal{N}(\rho)^{\frac
{1-\alpha}{\alpha}}\mathcal{N}(\sigma)^{\frac{\alpha-1}{2\alpha}}\right)
\sigma^{\frac{1-\alpha}{2\alpha}}\right]  ^{\frac{\alpha}{1-\alpha}},
\end{equation}
which can be rewritten as%
\begin{equation}
\rho=\left[  \sigma^{\frac{1-\gamma}{2}}\mathcal{N}^{\dag}\left(
\mathcal{N}(\sigma)^{\frac{\gamma-1}{2}}\mathcal{N}(\rho)^{1-\gamma
}\mathcal{N}(\sigma)^{\frac{\gamma-1}{2}}\right)  \sigma^{\frac{1-\gamma}{2}%
}\right]  ^{\frac{1}{1-\gamma}},
\end{equation}
with $\gamma$ defined in (\ref{eq:gamma-subs}), so that $\gamma\in
(0,1)\cup(1,2)$. By the development in
\eqref{eq:step-ren-rel-ent-1}-\eqref{eq:step-ren-rel-ent-last}, we can
conclude that $\mathcal{N}$ is sufficient for $\rho$ and $\sigma$.
\end{proof}

\bigskip

The following statement is a direct corollary of the correspondence in
\eqref{eq:rel-ent-cmi}-\eqref{eq:rel-ent-cmi-1} and
Theorem~\ref{thm:channel-suff-result}:

\begin{corollary}
\label{thm:Renyi-cmi-markov}Let $\rho_{ABC}\in\mathcal{S}\left(
\mathcal{H}_{ABC}\right)  _{++}$. Then $I_{\alpha}(A;B|C)_{\rho}=0$ and
$\widetilde{I}_{\alpha}(A;B|C)_{\rho}=0$ for all $\alpha\in(0,1)\cup
(1,\infty)$ if $\rho_{ABC}$ is a short quantum Markov chain $A-C-B$.
Furthermore, $\rho_{ABC}$ is a short quantum Markov chain $A-C-B\mathcal{\ }%
$if $I_{\alpha}(A;B|C)_{\rho}=0$ for some $\alpha\in\left(  0,1\right)
\cup\left(  1,2\right)  $ or if $\widetilde{I}_{\alpha}(A;B|C)_{\rho}=0$ for
some $\alpha\in\left(  1/2,1\right)  \cup\left(  1,\infty\right)  $.
\end{corollary}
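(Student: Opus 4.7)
The plan is to derive this corollary directly from Theorem~\ref{thm:channel-suff-result} by invoking the reduction in (\ref{eq:rel-ent-cmi})--(\ref{eq:rel-ent-cmi-1}). Concretely, I would set
\begin{equation}
\rho = \rho_{ABC}, \qquad \sigma = \rho_{AC} \otimes I_B, \qquad \mathcal{N} = \operatorname{Tr}_A,
\end{equation}
so that, as noted in the introduction, $I_{\alpha}(A;B|C)_{\rho} = \Delta_{\alpha}(\rho,\sigma,\mathcal{N})$ and $\widetilde{I}_{\alpha}(A;B|C)_{\rho} = \widetilde{\Delta}_{\alpha}(\rho,\sigma,\mathcal{N})$ for every $\alpha$ in the relevant range.

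Next I would verify that the hypotheses of Theorem~\ref{thm:channel-suff-result} are met under this substitution. Since $\rho_{ABC} \in \mathcal{S}(\mathcal{H}_{ABC})_{++}$, the marginal $\rho_{AC}$ is positive definite, so $\sigma = \rho_{AC} \otimes I_B$ lies in $\mathcal{B}(\mathcal{H}_{ABC})_{++}$, and $\rho_C$ is positive definite as well; moreover the partial trace $\operatorname{Tr}_A$ is manifestly a strict CPTP map (it sends positive definite operators on $ABC$ to positive definite operators on $BC$). Thus Theorem~\ref{thm:channel-suff-result} applies, and the vanishing of either R\'enyi quantity at a single admissible $\alpha$ is equivalent to $\operatorname{Tr}_A$ being sufficient for the pair $(\rho_{ABC}, \rho_{AC} \otimes I_B)$, while vanishing for all $\alpha \in (0,1)\cup(1,\infty)$ is guaranteed by this same sufficiency.

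The final step, and the only one requiring a small computation, is to identify this sufficiency condition with $\rho_{ABC}$ being a short quantum Markov chain $A-C-B$. For this I would compute the Petz recovery channel $\mathcal{R}_{\sigma,\mathcal{N}}^{P}$ from (\ref{eq:Petz-map}) for the above choices: using $\sigma^{1/2} = \rho_{AC}^{1/2} \otimes I_B$, $[\mathcal{N}(\sigma)]^{-1/2} = \rho_C^{-1/2} \otimes I_B$, and $(\operatorname{Tr}_A)^{\dag}(\cdot) = I_A \otimes (\cdot)$, one recovers precisely the channel $\mathcal{R}_{C \to AC}^{P}$ of (\ref{eq:special-Petz}). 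Then $\mathcal{R}_{\sigma,\mathcal{N}}^{P}(\rho_{BC}) = \rho_{ABC}$ is exactly the recoverability characterization of a short quantum Markov chain, and a direct check gives $\mathcal{R}_{\sigma,\mathcal{N}}^{P}(\rho_C \otimes I_B) = \rho_{AC} \otimes I_B$ automatically (using $\rho_C > 0$), so nothing extra is needed for the second recovery equation in Definition~\ref{def:channel-suff}.

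There is no serious obstacle here; the only point demanding any care is the bookkeeping that shows the Petz channel for $(\sigma,\mathcal{N}) = (\rho_{AC} \otimes I_B, \operatorname{Tr}_A)$ coincides with the channel $\mathcal{R}_{C \to AC}^{P}$ from (\ref{eq:special-Petz}) and that the second sufficiency condition is automatically satisfied. Once that identification is made, the corollary follows by stringing together Theorem~\ref{thm:channel-suff-result}, the equivalence between $\mathcal{R}_{C \to AC}^{P}(\rho_{BC}) = \rho_{ABC}$ and the short quantum Markov chain property, and the reduction (\ref{eq:rel-ent-cmi-1}).
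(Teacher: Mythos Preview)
Your proposal is correct and follows exactly the route the paper indicates: the paper's own proof is the single sentence that the corollary follows from the correspondence (\ref{eq:rel-ent-cmi})--(\ref{eq:rel-ent-cmi-1}) together with Theorem~\ref{thm:channel-suff-result}, and you have simply spelled out the bookkeeping (checking the hypotheses on $\sigma$ and $\mathcal{N}$, identifying the Petz channel with $\mathcal{R}^{P}_{C\to AC}$, and noting that $\mathcal{R}^{P}_{\sigma,\mathcal{N}}(\mathcal{N}(\sigma))=\sigma$ is automatic) that makes this reduction explicit.
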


\section{Quantum Markov chains, sufficiency of quantum channels, and min- and
max-information measures}

\label{sec:min-max}

\subsection{Min- and max- generalizations of a relative entropy difference}

We consider the following generalizations of a relative entropy difference,
motivated by the developments in \cite{BSW14,SBW14}.

\begin{definition}
\label{def:CMI-quantities}Let $\rho\in\mathcal{S}(\mathcal{H})_{++}$,
$\sigma\in\mathcal{B}(\mathcal{H})_{++}$, and let $\mathcal{N}$ be a strict
CPTP map. Then,
\begin{equation}
\Delta_{\min}(\rho,\sigma,\mathcal{N})\equiv D_{\min}(\rho\Vert\mathcal{R}%
_{\sigma,\mathcal{N}}(\mathcal{N(}\rho))),
\end{equation}
where $\mathcal{R}_{\sigma,\mathcal{N}}$ is the Petz recovery channel defined
in \eqref{eq:Petz-map} and
\begin{equation}
\Delta_{\max}(\rho,\sigma,\mathcal{N})\equiv D_{\max}(\rho\Vert\mathcal{R}%
_{\sigma,\mathcal{N}}(\mathcal{N(}\rho))).
\end{equation}

\end{definition}

\begin{theorem}
\label{thm:renyi-rel-ent-min-max-equality}For $\rho\in\mathcal{S}%
(\mathcal{H})_{++}$, $\sigma\in\mathcal{B}(\mathcal{H})_{++}$, and a strict
CPTP map $\mathcal{N}$,
\begin{equation}
\Delta_{\min}(\rho,\sigma,\mathcal{N})\geq0,\ \ \ \ \quad\Delta_{\max}%
(\rho,\sigma,\mathcal{N})\geq0,
\end{equation}
with equality holding if and only if $\mathcal{N}$ is sufficient for $\rho$
and $\sigma$.
\end{theorem}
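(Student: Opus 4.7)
The plan is to reduce this theorem directly to Lemma~\ref{lem:min-max-0} applied with $\omega=\rho$ and $\tau=\mathcal{R}_{\sigma,\mathcal{N}}^{P}(\mathcal{N}(\rho))$, then extract the equality condition and identify it with sufficiency of $\mathcal{N}$ via the Petz recovery channel. The Lemma is the right hammer because $\Delta_{\min}$ and $\Delta_{\max}$ are, by Definition~\ref{def:CMI-quantities}, just $D_{\min}$ and $D_{\max}$ evaluated on the pair $(\rho,\mathcal{R}_{\sigma,\mathcal{N}}^{P}(\mathcal{N}(\rho)))$, and the Lemma gives both non-negativity and the equality condition $\omega=\tau$ simultaneously for these divergences.

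First I will verify the hypothesis $\operatorname{Tr}\{\rho\}\geq\operatorname{Tr}\{\mathcal{R}_{\sigma,\mathcal{N}}^{P}(\mathcal{N}(\rho))\}$ by showing equality: since $\mathcal{N}$ is strict CPTP and $\sigma\in\mathcal{B}(\mathcal{H})_{++}$, the operator $\mathcal{N}(\sigma)$ is invertible, so using the definition \eqref{eq:Petz-map} and the fact that $\mathcal{N}^{\dagger}$ is unital, a direct computation gives
\begin{equation}
\operatorname{Tr}\!\left\{\mathcal{R}_{\sigma,\mathcal{N}}^{P}(\mathcal{N}(\rho))\right\}=\operatorname{Tr}\!\left\{\mathcal{N}(\sigma)\mathcal{N}(\sigma)^{-\frac{1}{2}}\mathcal{N}(\rho)\mathcal{N}(\sigma)^{-\frac{1}{2}}\right\}=\operatorname{Tr}\{\mathcal{N}(\rho)\}=1.
\end{equation}
Since $\rho$ is a state and $\mathcal{R}_{\sigma,\mathcal{N}}^{P}(\mathcal{N}(\rho))\in\mathcal{B}(\mathcal{H})_{+}$, Lemma~\ref{lem:min-max-0} (in particular \eqref{dmin-0} and \eqref{dmax-0}) immediately yields $\Delta_{\min}(\rho,\sigma,\mathcal{N})\geq 0$ and $\Delta_{\max}(\rho,\sigma,\mathcal{N})\geq 0$.

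For the equality clause, the same Lemma tells me that either $\Delta_{\min}=0$ or $\Delta_{\max}=0$ is equivalent to $\rho=\mathcal{R}_{\sigma,\mathcal{N}}^{P}(\mathcal{N}(\rho))$. If $\mathcal{N}$ is sufficient for $\rho$ and $\sigma$, then by the result of Petz recalled after Definition~\ref{def:channel-suff}, the Petz recovery channel itself satisfies \eqref{eq:recover-rho}, hence $\rho=\mathcal{R}_{\sigma,\mathcal{N}}^{P}(\mathcal{N}(\rho))$, giving the forward direction. Conversely, if $\rho=\mathcal{R}_{\sigma,\mathcal{N}}^{P}(\mathcal{N}(\rho))$, I observe that the second identity $\sigma=\mathcal{R}_{\sigma,\mathcal{N}}^{P}(\mathcal{N}(\sigma))$ holds automatically: an unpacking of \eqref{eq:Petz-map} combined with invertibility of $\mathcal{N}(\sigma)$ and unitality of $\mathcal{N}^{\dagger}$ reduces it to $\sigma^{1/2}\mathcal{N}^{\dagger}(I)\sigma^{1/2}=\sigma$. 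Hence both recovery conditions in \eqref{eq:recover-rho} hold with $\mathcal{R}=\mathcal{R}_{\sigma,\mathcal{N}}^{P}$, so $\mathcal{N}$ is sufficient for $\rho$ and $\sigma$.

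There is no serious obstacle here: the entire argument is a bookkeeping exercise that packages together the definition of $\Delta_{\min}/\Delta_{\max}$, the trace-preservation property of the Petz recovery channel (for inputs in the range of $\mathcal{N}$ on positive-definite states), Lemma~\ref{lem:min-max-0}, and the standard fact that sufficiency is witnessed by the Petz map. The only mildly delicate point is making sure all inverses in $\mathcal{R}_{\sigma,\mathcal{N}}^{P}$ are well defined, which is exactly guaranteed by the positive-definiteness hypotheses on $\sigma$ and by the strictness of $\mathcal{N}$.
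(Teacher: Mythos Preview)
Your proof is correct and follows essentially the same approach as the paper: apply Lemma~\ref{lem:min-max-0} to the pair $(\rho,\mathcal{R}_{\sigma,\mathcal{N}}^{P}(\mathcal{N}(\rho)))$ for non-negativity, and then identify the equality condition $\rho=\mathcal{R}_{\sigma,\mathcal{N}}^{P}(\mathcal{N}(\rho))$ with sufficiency of $\mathcal{N}$ via Petz's theorem. The only cosmetic difference is that the paper simply invokes that $\mathcal{R}_{\sigma,\mathcal{N}}^{P}$ is CPTP to conclude the recovered object is a density operator, whereas you compute the trace directly and spell out the $\sigma$-recovery identity by hand; both are fine.
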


\begin{proof}
The non-negativity conditions follow from the fact that $\mathcal{R}%
_{\sigma,\mathcal{N}}(\mathcal{N(}\rho))$ is a density operator (since
$\mathcal{R}_{\sigma,\mathcal{N}}$ is a CPTP map) and
Lemma~\ref{lem:min-max-0}. The equality conditions also follow from
Lemma~\ref{lem:min-max-0} and the fact that $\mathcal{N}$ is sufficient for
$\rho$ and $\sigma$ if and only if $\mathcal{R}_{\sigma,\mathcal{N}%
}(\mathcal{N(}\rho))=\rho$.
\end{proof}

\subsection{Min- and max- generalizations of conditional mutual information}

\begin{definition}
[\cite{BSW14}]For a tripartite state $\rho_{ABC}\in\mathcal{S}\left(
\mathcal{H}_{ABC}\right)  _{++}$ the max-conditional mutual information is
defined as follows:%
\begin{align}
I_{\max}(A;B|C)_{\rho} &  \equiv D_{\max}\left(  \rho_{ABC}\middle\Vert
\rho_{AC}^{\frac{1}{2}}\rho_{C}^{-\frac{1}{2}}\rho_{BC}\rho_{C}^{-\frac{1}{2}%
}\rho_{AC}^{\frac{1}{2}}\right)  \label{imax1}\\
&  =\inf\left\{  \lambda:\,\rho_{ABC}\leq\exp(\lambda)\rho_{AC}^{\frac{1}{2}%
}\rho_{C}^{-\frac{1}{2}}\rho_{BC}\rho_{C}^{-\frac{1}{2}}\rho_{AC}^{\frac{1}%
{2}}\right\}  \label{imax2}\\
&  =2\log\left\Vert \rho_{ABC}^{\frac{1}{2}}\rho_{AC}^{-\frac{1}{2}}\rho
_{C}^{\frac{1}{2}}\rho_{BC}^{-\frac{1}{2}}\right\Vert _{\infty},\label{imax3}%
\end{align}
and the min-conditional mutual information is defined as follows:%
\begin{align}
I_{\min}(A;B|C)_{\rho} &  \equiv D_{\min}\left(  \rho_{ABC}\middle\Vert
\rho_{AC}^{\frac{1}{2}}\rho_{C}^{-\frac{1}{2}}\rho_{BC}\rho_{C}^{-\frac{1}{2}%
}\rho_{AC}^{\frac{1}{2}}\right)  \label{imin1}\\
&  =-\log F\left(  \rho_{ABC},\rho_{AC}^{\frac{1}{2}}\rho_{C}^{-\frac{1}{2}%
}\rho_{BC}\rho_{C}^{-\frac{1}{2}}\rho_{AC}^{\frac{1}{2}}\right)
\label{imin2}\\
&  =-2\log\left\Vert \rho_{ABC}^{\frac{1}{2}}\rho_{AC}^{\frac{1}{2}}\rho
_{C}^{-\frac{1}{2}}\rho_{BC}^{\frac{1}{2}}\right\Vert _{1}.\label{imin3}%
\end{align}

\end{definition}

As observed in \cite{BSW14}, we have that%
\begin{equation}
I_{\max}(A;B|C)_{\rho},I_{\min}(A;B|C)_{\rho}\geq0, \label{eq:I_min_max_geq0}%
\end{equation}
due to Lemma~\ref{lem:min-max-0} and the fact that Tr$\{\rho_{AC}^{\frac{1}%
{2}}\rho_{C}^{-\frac{1}{2}}\rho_{BC}\rho_{C}^{-\frac{1}{2}}\rho_{AC}^{\frac
{1}{2}}\}=1$. These quantities are special cases of those from
Definition~\ref{def:CMI-quantities}, by using (\ref{eq:rel-ent-cmi-1}). As
such, the following is a corollary of
Theorem~\ref{thm:renyi-rel-ent-min-max-equality}:

\begin{corollary}
\label{thm-imax-imin} Let $\rho_{ABC}\in\mathcal{S}\left(  \mathcal{H}%
_{ABC}\right)  _{++}$. Then each of the following identities hold if and only
if $\rho_{ABC}$ is a short quantum Markov chain $A-C-B$:
\begin{align}
I_{\max}(A;B|C)  &  =0,\label{imax-a}\\
I_{\min}(A;B|C)  &  =0. \label{imin-b}%
\end{align}

\end{corollary}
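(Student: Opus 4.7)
The plan is to realize $I_{\max}(A;B|C)_\rho$ and $I_{\min}(A;B|C)_\rho$ as special cases of $\Delta_{\max}$ and $\Delta_{\min}$ from Definition~\ref{def:CMI-quantities}, and then invoke Theorem~\ref{thm:renyi-rel-ent-min-max-equality} verbatim. The substitution is the one already flagged in \eqref{eq:rel-ent-cmi-1}, namely
\[
\rho=\rho_{ABC},\qquad \sigma=\rho_{AC}\otimes I_B,\qquad \mathcal{N}=\operatorname{Tr}_A.
\]
Under the positive definiteness hypothesis on $\rho_{ABC}$, both $\sigma$ is positive definite and $\mathcal{N}$ is a strict CPTP map, so Theorem~\ref{thm:renyi-rel-ent-min-max-equality} applies.

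The first concrete step is to verify that, for this choice, the generic Petz channel $\mathcal{R}_{\sigma,\mathcal{N}}$ in \eqref{eq:Petz-map} collapses to the special channel $\mathcal{R}^{P}_{C\to AC}$ in \eqref{eq:special-Petz}. Using $(\rho_{AC}\otimes I_B)^{1/2}=\rho_{AC}^{1/2}\otimes I_B$, $\operatorname{Tr}_A(\rho_{AC}\otimes I_B)=\rho_C\otimes I_B$, and $(\operatorname{Tr}_A)^{\dagger}(Y_{BC})=I_A\otimes Y_{BC}$, a direct computation yields
\[
\mathcal{R}_{\sigma,\mathcal{N}}(\omega_{BC})=\rho_{AC}^{1/2}\rho_C^{-1/2}\omega_{BC}\rho_C^{-1/2}\rho_{AC}^{1/2},
\]
so in particular $\mathcal{R}_{\sigma,\mathcal{N}}(\mathcal{N}(\rho_{ABC}))=\rho_{AC}^{1/2}\rho_C^{-1/2}\rho_{BC}\rho_C^{-1/2}\rho_{AC}^{1/2}$, which is precisely the second argument of $D_{\max}$ and $D_{\min}$ appearing in \eqref{imax1} and \eqref{imin1}. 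Hence
\[
I_{\max}(A;B|C)_\rho=\Delta_{\max}(\rho_{ABC},\rho_{AC}\otimes I_B,\operatorname{Tr}_A),
\]
\[
I_{\min}(A;B|C)_\rho=\Delta_{\min}(\rho_{ABC},\rho_{AC}\otimes I_B,\operatorname{Tr}_A).
\]

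To conclude, I would apply Theorem~\ref{thm:renyi-rel-ent-min-max-equality}, which gives that both quantities are non-negative and vanish iff $\operatorname{Tr}_A$ is sufficient for the pair $(\rho_{ABC},\rho_{AC}\otimes I_B)$. By the Petz characterization recalled just after \eqref{eq:Petz-map}, such sufficiency is equivalent to $\mathcal{R}^{P}_{C\to AC}(\rho_{BC})=\rho_{ABC}$ (the companion recovery condition $\mathcal{R}^{P}_{C\to AC}(\rho_C\otimes I_B)=\rho_{AC}\otimes I_B$ being automatic by the formula above), and by the discussion around \eqref{eq:special-Petz} this identity is in turn equivalent to $\rho_{ABC}$ being a short quantum Markov chain $A-C-B$. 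There is no real obstacle; the only points requiring care are the reduction of the generic Petz formula to $\mathcal{R}^{P}_{C\to AC}$ and the observation that, for this particular $\sigma$ and $\mathcal{N}$, sufficiency is controlled entirely by recovery of $\rho_{ABC}$ from $\rho_{BC}$.
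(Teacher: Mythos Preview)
Your proposal is correct and follows exactly the route the paper takes: the paper states just before the corollary that $I_{\min}$ and $I_{\max}$ are special cases of $\Delta_{\min}$ and $\Delta_{\max}$ via the substitution \eqref{eq:rel-ent-cmi-1}, and then records the corollary as an immediate consequence of Theorem~\ref{thm:renyi-rel-ent-min-max-equality}. You simply spell out the two details the paper leaves implicit, namely that the general Petz map reduces to $\mathcal{R}^{P}_{C\to AC}$ and that sufficiency of $\operatorname{Tr}_A$ for $(\rho_{ABC},\rho_{AC}\otimes I_B)$ is equivalent to the short Markov chain condition.
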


\section{Conclusion and open questions}

We have shown that the $\alpha$-R\'{e}nyi\ quantities $\Delta_{\alpha}%
(\rho,\sigma,\mathcal{N})$ and $\widetilde{\Delta}_{\alpha}(\rho
,\sigma,\mathcal{N})$ from \cite{SBW14} are non-negative and equal to zero if
and only if $\mathcal{N}$ is sufficient for $\rho$ and $\sigma$. As a
consequence, we find that the $\alpha$-R\'{e}nyi conditional mutual
informations $I_{\alpha}(A;B|C)_{\rho}$ and $\widetilde{I}_{\alpha
}(A;B|C)_{\rho}$ from \cite{BSW14} are equal to zero if and only if
$\rho_{ABC}$ is a short quantum Markov chain $A-C-B$. Moreover, we have solved
some open questions from \cite{R02,Z14b}.

There are some interesting open questions to consider going forward from here.
We would like to know the equality conditions for monotonicity of the
sandwiched R\'{e}nyi relative entropies, i.e., for which triples $(\rho
,\sigma,\mathcal{N})$ is it true that%
\begin{equation}
\widetilde{D}_{\alpha}(\rho\Vert\sigma)=\widetilde{D}_{\alpha}\left(
\mathcal{N}(\rho)\Vert\mathcal{N}(\sigma)\right)  \ ?
\end{equation}
Apparently we cannot solve this question using the methods of \cite{HMPB11}
because we cannot represent the sandwiched R\'{e}nyi relative entropy as an
$f$-divergence. Presumably, this equality occurs if and only if $\mathcal{N}$
is sufficient for $\rho$ and $\sigma$, but it remains to be proved. Next, is
there a characterization of states for which $I_{\alpha}(A;B|C)_{\rho}$ and
$\widetilde{I}_{\alpha}(A;B|C)_{\rho}$ are nearly equal to zero? Also, is
there a characterization of triples $(\rho,\sigma,\mathcal{N})$ for which
$\Delta_{\alpha}(\rho,\sigma,\mathcal{N})$ and $\widetilde{\Delta}_{\alpha
}(\rho,\sigma,\mathcal{N})$ are nearly equal to zero? Presumably the former
has to do with $\rho_{ABC}$ being close to the \textquotedblleft Petz
recovered\textquotedblright\ $\rho_{BC}$ and the latter has to do with $\rho$
being close to the \textquotedblleft Petz recovered\textquotedblright\ $\rho$,
as recent developments in \cite{FR14,BLW14} might suggest.

\bigskip\textbf{Note}: After the posting of a preprint of this work to the
arXiv, there have been improvements of the results detailed here for values of
$\alpha<1$ \cite{W15,DW15,JRSWW15}, which address some of the open questions
mentioned above for these values of $\alpha$. In particular, the results of
\cite{W15} provide stronger lower bounds for $\widetilde{I}_{\alpha
}(A;B|C)_{\rho}$ and $\widetilde{\Delta}_{\alpha}(\rho,\sigma,\mathcal{N})$
when $\alpha\in(1/2,1)$. This was further improved in \cite{JRSWW15}. The
results of \cite{DW15} provide stronger lower bounds for $I_{\alpha
}(A;B|C)_{\rho}$ and $\Delta_{\alpha}(\rho,\sigma,\mathcal{N})$ when
$\alpha\in(0,1)$.

\bigskip\textbf{Acknowledgements.} We acknowledge helpful discussions with
Mario Berta, Kaushik Seshadreesan, and Marco Tomamichel and thank Perla Sousi
for a helpful discussion on Markov chains. We are especially grateful to Milan
Mosonyi for his very careful reading of our paper, pointing out a problem with
our former justification of Lemma~\ref{lem:trace-ineq-gen}\ for $\alpha
\in\left(  1,2\right)  $, and for communicating many other observations about
our paper that go beyond those stated here. We are indebted to an anonymous
referee for suggesting to use Choi's inequality to prove some of the
statements in Lemma~\ref{lem:trace-ineq-gen}. We acknowledge support from the
Peter Whittle Fund, which helped to enable this research. MMW\ is grateful to
the Statistical Laboratory in the Center for Mathematical Sciences at the
University of Cambridge for hosting him for a research visit during January
2015. MMW acknowledges support from startup funds from the Department of
Physics and Astronomy at LSU, the NSF\ under Award No.~CCF-1350397, and the
DARPA Quiness Program through US Army Research Office award W31P4Q-12-1-0019.

\appendix

\section{Alternative Proof of Non-Negativity of $\Delta_{\alpha}$ and
$\widetilde{\Delta}_{\alpha}$ quantities}

\label{app:alt-proofs}Here we provide an alternative proof of the
non-negativity of the $\Delta_{\alpha}$ and $\widetilde{\Delta}_{\alpha}$
quantities, which appeared in the original preprint version of our work:

\begin{proposition}
Let $\rho\in\mathcal{S}(\mathcal{H})_{++}$, $\sigma\in\mathcal{B}%
(\mathcal{H})_{++}$, and let $\mathcal{N}$ be a strict CPTP\ map. Then
\eqref{delta-non-neg} holds for $\alpha\in\left(  1,2\right)  $ and equality
occurs if and only if%
\begin{equation}
\left[  \mathcal{N}(\sigma)^{\frac{\alpha-1}{2}}\mathcal{N}\left(
\sigma^{\frac{1-\alpha}{2}}\rho^{\alpha}\sigma^{\frac{1-\alpha}{2}}\right)
\mathcal{N}(\sigma)^{\frac{\alpha-1}{2}}\right]  ^{\frac{1}{\alpha}%
}=\mathcal{N}(\rho). \label{eq:Delta=0-other-a}%
\end{equation}
Furthermore, for the same choices of $\rho$, $\sigma$, and $\mathcal{N}$,
\eqref{tilde-delta-non-neg} holds for $\alpha\in\left(  1,\infty\right)  $.
\end{proposition}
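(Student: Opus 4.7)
The plan is to deduce the non-negativity of $\Delta_\alpha$ for $\alpha\in(1,2)$ (together with the stated equality condition) from a Lieb-Ando-type operator inequality, rather than from the trace inequality of Lemma~\ref{lem:trace-ineq-gen}. I would first use the adjoint identity $\operatorname{Tr}\{X\,\mathcal{N}^\dagger(Y)\}=\operatorname{Tr}\{\mathcal{N}(X)\,Y\}$ and cyclicity of the trace to rewrite $(\alpha-1)\Delta_\alpha$ as $\log\operatorname{Tr}\{Z\,\mathcal{N}(\rho)^{1-\alpha}\}$, where $Z\equiv\mathcal{N}(\sigma)^{(\alpha-1)/2}\mathcal{N}(\sigma^{(1-\alpha)/2}\rho^{\alpha}\sigma^{(1-\alpha)/2})\mathcal{N}(\sigma)^{(\alpha-1)/2}$. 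Since $\alpha>1$, the desired inequality $\Delta_\alpha\geq 0$ is equivalent to the trace bound $\operatorname{Tr}\{Z\,\mathcal{N}(\rho)^{1-\alpha}\}\geq 1$.

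The substantive step is the operator inequality
\begin{equation}
\mathcal{N}\!\bigl(\sigma^{(1-\alpha)/2}\rho^{\alpha}\sigma^{(1-\alpha)/2}\bigr)\;\geq\;\mathcal{N}(\sigma)^{(1-\alpha)/2}\,\mathcal{N}(\rho)^{\alpha}\,\mathcal{N}(\sigma)^{(1-\alpha)/2},
\end{equation}
valid for $\alpha\in(1,2)$. Writing $s=\alpha-1\in(0,1)$, this is $\mathcal{N}(F(\sigma,\rho))\geq F(\mathcal{N}(\sigma),\mathcal{N}(\rho))$ for the two-variable function $F(A,B)=A^{-s/2}B^{s+1}A^{-s/2}$, which in this exponent range is jointly operator convex and homogeneous of degree one on pairs of positive definite operators (a consequence of Ando's convexity theorem). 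The inequality then follows by applying the two-variable Jensen-type inequality for a CPTP map, realized through the Stinespring dilation of $\mathcal{N}$ together with the analogous property of the partial trace. Conjugating by $\mathcal{N}(\sigma)^{(\alpha-1)/2}$ gives $Z\geq\mathcal{N}(\rho)^{\alpha}$, and pairing with the positive definite operator $\mathcal{N}(\rho)^{1-\alpha}$ under the trace yields
\begin{equation}
\operatorname{Tr}\{Z\,\mathcal{N}(\rho)^{1-\alpha}\}\;\geq\;\operatorname{Tr}\{\mathcal{N}(\rho)^{\alpha}\mathcal{N}(\rho)^{1-\alpha}\}\;=\;\operatorname{Tr}\{\mathcal{N}(\rho)\}\;=\;1,
\end{equation}
establishing $\Delta_\alpha\geq 0$. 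Since $\mathcal{N}(\rho)^{1-\alpha}>0$, equality in the trace inequality forces equality in the operator inequality, which rearranges exactly into the stated condition~\eqref{eq:Delta=0-other-a}.

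For $\widetilde{\Delta}_\alpha$ with $\alpha\in(1,\infty)$ I would follow the same template with the sandwiched exponents $(1-\alpha)/\alpha$ and $(1-\alpha)/(2\alpha)$ in place of $1-\alpha$ and $(1-\alpha)/2$, combined with an Araki--Lieb--Thirring step to pass between the $\alpha$-norm and the trace of the $\alpha$-th power. Equivalently, the substitution $\gamma=(2\alpha-1)/\alpha\in(1,2)$ of~\eqref{eq:gamma-subs} converts the sandwiched exponents to the ordinary ones and allows a direct reduction to the unsandwiched case already handled. The main obstacle is the operator inequality above: verifying joint operator convexity of $F$ over the correct exponent range, and carefully handling the Stinespring-dilation step for the non-unital CPTP map $\mathcal{N}$, is the substantive analytic ingredient. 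Once this is in hand, all other steps reduce to routine trace manipulations.
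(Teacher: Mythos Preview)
Your rewriting of $\Delta_\alpha$ as $(\alpha-1)^{-1}\log\operatorname{Tr}\{Z\,\mathcal{N}(\rho)^{1-\alpha}\}$ with $Z=\mathcal{N}(\sigma)^{(\alpha-1)/2}\mathcal{N}(\sigma^{(1-\alpha)/2}\rho^\alpha\sigma^{(1-\alpha)/2})\mathcal{N}(\sigma)^{(\alpha-1)/2}$ is fine and matches the paper's first move in the appendix. The gap is the operator inequality you invoke next. You assert that $F(A,B)=A^{-s/2}B^{1+s}A^{-s/2}$ is jointly operator convex for $s\in(0,1)$ ``as a consequence of Ando's convexity theorem,'' and then deduce $\mathcal{N}(F(\sigma,\rho))\geq F(\mathcal{N}(\sigma),\mathcal{N}(\rho))$ via a two-variable Jensen argument. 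But Ando's theorem concerns \emph{trace} functionals of Lieb type, not operator-valued maps of this form; the operator function whose joint convexity actually follows from the Effros/Ando machinery is the operator perspective $A^{1/2}(A^{-1/2}BA^{-1/2})^{1+s}A^{1/2}$, which coincides with your $F$ only when $A$ and $B$ commute. In fact $F$ is \emph{not} jointly operator convex near the upper end of the range: at $s=1$, with $A_1=I$, $A_2=\operatorname{diag}(4,1)$, $B_1=B_2=\bigl(\begin{smallmatrix}1&1\\1&1\end{smallmatrix}\bigr)$, the midpoint-convexity difference $\tfrac12[F(A_1,B)+F(A_2,B)]-F(\tfrac12(A_1+A_2),B)$ has negative determinant, and by continuity the same failure persists for $\alpha$ slightly below $2$. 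Embedding this convex combination block-diagonally and taking $\mathcal{N}$ to be the partial trace turns it into a direct counterexample to your channel inequality. So the ``substantive step'' does not go through, and with it both the non-negativity argument and the equality analysis collapse.

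The paper's appendix avoids this by working at the \emph{trace} level: it writes $\Delta_\alpha=D_\alpha(Z^{1/\alpha}\Vert\mathcal{N}(\rho))$ and proves only the scalar inequality $\operatorname{Tr}\{Z^{1/\alpha}\}\geq 1$, using the joint \emph{concavity} of $(P,Q)\mapsto\operatorname{Tr}\{(Q^{(\alpha-1)/2}PQ^{(\alpha-1)/2})^{1/\alpha}\}$ for $\alpha\in(1,2)$ (Hiai's theorem) together with a Stinespring dilation and Heisenberg--Weyl averaging. That trace bound plus monotonicity of $D_\alpha$ gives $\Delta_\alpha\geq 0$, and the equality condition $Z^{1/\alpha}=\mathcal{N}(\rho)$ drops out of the known equality case for $D_\alpha$. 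Your reduction of $\widetilde{\Delta}_\alpha$ to $\Delta_{\alpha'}$ via $\gamma=(2\alpha-1)/\alpha$ is the right idea and is essentially what the paper does (through the variational formula for $\|\cdot\|_\alpha$ and the choice $\tau=\rho$), but of course it inherits the gap in the $\Delta_\alpha$ step.
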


\begin{proof}
We first prove the non-negativity of $\Delta_{\alpha}(\rho,\sigma
,\mathcal{N})$ for $\alpha\in(1,2)$, $\rho\in\mathcal{S}(\mathcal{H})_{++}$,
$\sigma\in\mathcal{B}(\mathcal{H})_{++}$, and $\mathcal{N}$ a strict
CPTP\ map. Using the definition (\ref{eq:Renyi-rel-ent-diff}) of
$\Delta_{\alpha}(\rho,\sigma,\mathcal{N})$, cyclicity of trace, and the
definition of the adjoint map, we can express $\Delta_{\alpha}(\rho
,\sigma,\mathcal{N})$ as%
\begin{align}
\Delta_{\alpha}(\rho,\sigma,\mathcal{N})  &  =\frac{1}{\alpha-1}\log
\text{Tr}\left\{  \mathcal{N}(\sigma)^{\frac{\alpha-1}{2}}\mathcal{N}\left(
\sigma^{\frac{1-\alpha}{2}}\rho^{\alpha}\sigma^{\frac{1-\alpha}{2}}\right)
\mathcal{N}(\sigma)^{\frac{\alpha-1}{2}}\mathcal{N}(\rho)^{1-\alpha}\right\}
\label{eq:Delta-rewrite}\\
&  =D_{\alpha}\left(  \left[  \mathcal{N}(\sigma)^{\frac{\alpha-1}{2}%
}\mathcal{N}\left(  \sigma^{\frac{1-\alpha}{2}}\rho^{\alpha}\sigma
^{\frac{1-\alpha}{2}}\right)  \mathcal{N}(\sigma)^{\frac{\alpha-1}{2}}\right]
^{\frac{1}{\alpha}}\middle\Vert\mathcal{N}(\rho)\right)
\label{eq:Delta-rewrite-1}%
\end{align}
where%
\begin{equation}
D_{\alpha}(P\Vert Q)\equiv\frac{1}{\alpha-1}\log\text{Tr}\left\{  P^{\alpha
}Q^{1-\alpha}\right\}
\end{equation}
is the R\'{e}nyi relative entropy between two positive definite operators $P$
and $Q$. It is known that $D_{\alpha}(P\Vert Q)\geq0$ if $P$ and $Q$ are such
that%
\begin{equation}
\text{Tr}\left\{  P\right\}  \geq\text{Tr}\left\{  Q\right\}  =1.
\end{equation}
This is because $D_{\alpha}(P\Vert Q)$ is monotone under quantum channels for
$\alpha\in\left(  1,2\right)  $, and one such quantum channel is the trace
operation:%
\begin{align}
D_{\alpha}(P\Vert Q)  &  \geq D_{\alpha}\left(  \text{Tr}\left\{  P\right\}
\Vert\text{Tr}\left\{  Q\right\}  \right) \\
&  =\frac{1}{\alpha-1}\log\left[  \left[  \text{Tr}\left\{  P\right\}
\right]  ^{\alpha}\left[  \text{Tr}\left\{  Q\right\}  \right]  ^{1-\alpha
}\right] \\
&  \geq0. \label{eq:renyi-non-neg}%
\end{align}
If $D_{\alpha}(P\Vert Q)=0$ for some $\alpha\in\left(  1,2\right)  $ and $P$
and $Q$ such that $\operatorname{Tr}\left\{  P\right\}  \geq\operatorname{Tr}%
\left\{  Q\right\}  $, then it is known that $P=Q$ (one can deduce this, e.g.,
from the above and \cite[Theorem~5.1]{HMPB11}). Hence, to prove the
non-negativity of $\Delta_{\alpha}(\rho,\sigma,\mathcal{N})$ for $\alpha
\in\left(  1,2\right)  $, it suffices to prove that%
\begin{equation}
\text{Tr}\left\{  \left[  \mathcal{N}(\sigma)^{\frac{\alpha-1}{2}}%
\mathcal{N}\left(  \sigma^{\frac{1-\alpha}{2}}\rho^{\alpha}\sigma
^{\frac{1-\alpha}{2}}\right)  \mathcal{N}(\sigma)^{\frac{\alpha-1}{2}}\right]
^{\frac{1}{\alpha}}\right\}  \geq\text{Tr}\left\{  \mathcal{N}(\rho)\right\}
=1.
\end{equation}
Theorem~1.1 of \cite{Hiai20131568} establishes that the map%
\begin{equation}
\left(  P,Q\right)  \mapsto\text{Tr}\left\{  \left[  Q^{\frac{\alpha-1}{2}%
}PQ^{\frac{\alpha-1}{2}}\right]  ^{\frac{1}{\alpha}}\right\}
\end{equation}
is jointly concave for positive definite $P$ and $Q$ when $\alpha\in\left(
1,2\right)  $. A straightforward argument allows to conclude its joint
concavity for $\alpha\in\left(  1,2\right)  $ and positive semidefinite $P$
and $Q$. Indeed, let $\varepsilon>0$, $\left\{  P_{x}\right\}  $ and $\left\{
Q_{x}\right\}  $ be sets of positive semidefinite operators, let $p_{X}(x)$ be
a probability distribution, and let $\overline{P}\equiv\sum_{x}p_{X}(x)P_{x}$
and $\overline{Q}\equiv\sum_{x}p_{X}(x)Q_{x}$ . Consider that%
\begin{align}
&  \sum_{x}p_{X}(x)\text{Tr}\left\{  \left[  Q_{x}^{\frac{\alpha-1}{2}}%
P_{x}Q_{x}^{\frac{\alpha-1}{2}}\right]  ^{\frac{1}{\alpha}}\right\}
\nonumber\\
&  \leq\sum_{x}p_{X}(x)\text{Tr}\left\{  \left[  Q_{x}^{\frac{\alpha-1}{2}%
}\left(  P_{x}+\varepsilon I\right)  Q_{x}^{\frac{\alpha-1}{2}}\right]
^{\frac{1}{\alpha}}\right\} \nonumber\\
&  =\sum_{x}p_{X}(x)\text{Tr}\left\{  \left[  \left(  P_{x}+\varepsilon
I\right)  ^{\frac{1}{2}}Q_{x}^{\alpha-1}\left(  P_{x}+\varepsilon I\right)
^{\frac{1}{2}}\right]  ^{\frac{1}{\alpha}}\right\} \nonumber\\
&  \leq\sum_{x}p_{X}(x)\text{Tr}\left\{  \left[  \left(  P_{x}+\varepsilon
I\right)  ^{\frac{1}{2}}\left(  Q_{x}+\varepsilon I\right)  ^{\alpha-1}\left(
P_{x}+\varepsilon I\right)  ^{\frac{1}{2}}\right]  ^{\frac{1}{\alpha}}\right\}
\nonumber\\
&  =\sum_{x}p_{X}(x)\text{Tr}\left\{  \left[  \left(  Q_{x}+\varepsilon
I\right)  ^{\frac{\alpha-1}{2}}\left(  P_{x}+\varepsilon I\right)  \left(
Q_{x}+\varepsilon I\right)  ^{\frac{\alpha-1}{2}}\right]  ^{\frac{1}{\alpha}%
}\right\} \nonumber\\
&  \leq\text{Tr}\left\{  \left[  \left(  \overline{Q}+\varepsilon I\right)
^{\frac{\alpha-1}{2}}\left(  \overline{P}+\varepsilon I\right)  \left(
\overline{Q}+\varepsilon I\right)  ^{\frac{\alpha-1}{2}}\right]  ^{\frac
{1}{\alpha}}\right\}  .
\end{align}
The first inequality follows because $P_{x}\leq P_{x}+\varepsilon I$ and
because $\operatorname{Tr}\left\{  f(A)\right\}  \leq\operatorname{Tr}\left\{
f(B)\right\}  $ for $A\leq B$ and $f(x)=x^{1/\alpha}$ a monotone
non-decreasing function on $[0,\infty)$. The next inequality follows because
$x^{\alpha-1}$ is operator monotone for $\alpha\in\left(  1,2\right)  $ and
for the same reason as above. The final inequality is a consequence of
Theorem~1.1 of \cite{Hiai20131568}. By taking the limit $\varepsilon\searrow
0$, we can conclude that%
\begin{equation}
\sum_{x}p_{X}(x)\text{Tr}\left\{  \left[  Q_{x}^{\frac{\alpha-1}{2}}P_{x}%
Q_{x}^{\frac{\alpha-1}{2}}\right]  ^{\frac{1}{\alpha}}\right\}  \leq
\text{Tr}\left\{  \left[  \overline{Q}^{(\alpha-1)/2}\overline{P}%
\,\overline{Q}^{(\alpha-1)/2}\right]  ^{\frac{1}{\alpha}}\right\}  .
\label{eq:Hiai-concave}%
\end{equation}
Now consider the following chain of equalities:%
\begin{align}
1  &  =\text{Tr}\left\{  \left[  \sigma^{\frac{\alpha-1}{2}}\sigma
^{\frac{1-\alpha}{2}}\rho^{\alpha}\sigma^{\frac{1-\alpha}{2}}\sigma
^{\frac{\alpha-1}{2}}\right]  ^{\frac{1}{\alpha}}\right\} \nonumber\\
&  =\text{Tr}\left\{  \left[  \sigma^{\frac{\alpha-1}{2}}\sigma^{\frac
{1-\alpha}{2}}\rho^{\alpha}\sigma^{\frac{1-\alpha}{2}}\sigma^{\frac{\alpha
-1}{2}}\otimes|0\rangle\langle0|_{E}\right]  ^{\frac{1}{\alpha}}\right\}
\nonumber\\
&  =\text{Tr}\left\{  \left[  \left(  \sigma^{\frac{\alpha-1}{2}}%
\otimes|0\rangle\langle0|_{E}\right)  \left(  \sigma^{\frac{1-\alpha}{2}}%
\rho^{\alpha}\sigma^{\frac{1-\alpha}{2}}\otimes|0\rangle\langle0|_{E}\right)
\left(  \sigma^{\frac{\alpha-1}{2}}\otimes|0\rangle\langle0|_{E}\right)
\right]  ^{\frac{1}{\alpha}}\right\} \nonumber\\
&  =\text{Tr}\left\{  \left[  U\left(  \sigma^{\frac{\alpha-1}{2}}%
\otimes|0\rangle\langle0|_{E}\right)  U^{\dag}U\left(  \sigma^{\frac{1-\alpha
}{2}}\rho^{\alpha}\sigma^{\frac{1-\alpha}{2}}\otimes|0\rangle\langle
0|_{E}\right)  U^{\dag}U\left(  \sigma^{\frac{\alpha-1}{2}}\otimes
|0\rangle\langle0|_{E}\right)  U^{\dag}\right]  ^{\frac{1}{\alpha}}\right\}
\nonumber\\
&  =\text{Tr}\left\{  \left[  \left[  U\left(  \sigma\otimes|0\rangle
\langle0|_{E}\right)  U^{\dag}\right]  ^{\frac{\alpha-1}{2}}U\left(
\sigma^{\frac{1-\alpha}{2}}\rho^{\alpha}\sigma^{\frac{1-\alpha}{2}}%
\otimes|0\rangle\langle0|_{E}\right)  U^{\dag}\left[  U\left(  \sigma
\otimes|0\rangle\langle0|_{E}\right)  U^{\dag}\right]  ^{\frac{\alpha-1}{2}%
}\right]  ^{\frac{1}{\alpha}}\right\} \nonumber\\
&  =\frac{1}{d_{E^{\prime}}^{2}}\sum_{i}\text{Tr}\left\{  \left[  \left[
K^{i}\right]  ^{\frac{\alpha-1}{2}}U_{E^{\prime}}^{i}U\left(  \sigma
^{\frac{1-\alpha}{2}}\rho^{\alpha}\sigma^{\frac{1-\alpha}{2}}\otimes
|0\rangle\langle0|_{E}\right)  U^{\dag}U_{E^{\prime}}^{i\dag}\left[
K^{i}\right]  ^{\frac{\alpha-1}{2}}\right]  ^{\frac{1}{\alpha}}\right\}
\label{eq:16}%
\end{align}
where%
\begin{equation}
K^{i}\equiv U_{E^{\prime}}^{i}U\left(  \sigma\otimes|0\rangle\langle
0|_{E}\right)  U^{\dag}U_{E^{\prime}}^{i\dag},
\end{equation}
with $\left\{  U_{E^{\prime}}^{i}\right\}  $ a set of Heisenberg-Weyl
operators. Then by the concavity result (\ref{eq:Hiai-concave}) above for
positive semidefinite $P$ and $Q$, it follows that the right-hand side of
(\ref{eq:16}) is no larger than%
\begin{multline}
\text{Tr}\left\{  \left[  \left[  \mathcal{N}(\sigma)\otimes\pi_{E^{\prime}%
}\right]  ^{\frac{\alpha-1}{2}}\left[  \mathcal{N}\left(  \sigma
^{\frac{1-\alpha}{2}}\rho^{\alpha}\sigma^{\frac{1-\alpha}{2}}\right)
\otimes\pi_{E^{\prime}}\right]  \left[  \mathcal{N}(\sigma)\otimes
\pi_{E^{\prime}}\right]  ^{\frac{\alpha-1}{2}}\right]  ^{\frac{1}{\alpha}%
}\right\} \\
=\text{Tr}\left\{  \left[  \mathcal{N}(\sigma)^{\frac{\alpha-1}{2}}%
\mathcal{N}\left(  \sigma^{\frac{1-\alpha}{2}}\rho^{\alpha}\sigma
^{\frac{1-\alpha}{2}}\right)  \mathcal{N}(\sigma)^{\frac{\alpha-1}{2}}\right]
^{\frac{1}{\alpha}}\right\}  .
\end{multline}
Thus we obtain the inequality%
\begin{equation}
\text{Tr}\left\{  \left[  \mathcal{N}(\sigma)^{\frac{\alpha-1}{2}}%
\mathcal{N}\left(  \sigma^{\frac{1-\alpha}{2}}\rho^{\alpha}\sigma
^{\frac{1-\alpha}{2}}\right)  \mathcal{N}(\sigma)^{\frac{\alpha-1}{2}}\right]
^{\frac{1}{\alpha}}\right\}  \geq1. \label{eq:helper-ineq}%
\end{equation}
This completes the proof of (\ref{delta-non-neg}) for $\alpha\in\left(
1,2\right)  $. The equality condition in (\ref{eq:Delta=0-other-a}) follows
from the representation in (\ref{eq:Delta-rewrite-1}) and the equality
condition stated after (\ref{eq:renyi-non-neg}).

Next we prove that $\widetilde{\Delta}_{\alpha}(\rho,\sigma,\mathcal{N})\geq0$
for $\alpha\in\left(  1,\infty\right)  $. We start with the definition
(\ref{eq:Renyi-rel-ent-diff-sand}), which we repeat here for convenience:%
\begin{equation}
\widetilde{\Delta}_{\alpha}(\rho,\sigma,\mathcal{N})\equiv\frac{\alpha}%
{\alpha-1}\log\left\Vert \rho^{\frac{1}{2}}\sigma^{\frac{1-\alpha}{2\alpha}%
}\mathcal{N}^{\dag}\left(  \mathcal{N}(\sigma)^{\frac{\alpha-1}{2\alpha}%
}\mathcal{N(}\rho)^{\frac{1-\alpha}{\alpha}}\mathcal{N(}\sigma)^{\frac
{\alpha-1}{2\alpha}}\right)  \sigma^{\frac{1-\alpha}{2\alpha}}\rho^{\frac
{1}{2}}\right\Vert _{\alpha}. \label{eq-sandwich-rel-diff}%
\end{equation}
From \cite[Lemma 12]{MDSFT13}, it follows that the right-hand side of
(\ref{eq-sandwich-rel-diff}) can be written as
\begin{multline}
\left\Vert \rho^{\frac{1}{2}}\sigma^{\frac{1-\alpha}{2\alpha}}\mathcal{N}%
^{\dag}\left(  \mathcal{N}(\sigma)^{\frac{\alpha-1}{2\alpha}}\mathcal{N(}%
\rho)^{\frac{1-\alpha}{\alpha}}\mathcal{N(}\sigma)^{\frac{\alpha-1}{2\alpha}%
}\right)  \sigma^{\frac{1-\alpha}{2\alpha}}\rho^{\frac{1}{2}}\right\Vert
_{\alpha}\\
=\sup_{\tau\geq0,\text{Tr}\left\{  \tau\right\}  \leq1}\text{Tr}\left\{
\rho^{\frac{1}{2}}\sigma^{\frac{1-\alpha}{2\alpha}}\mathcal{N}^{\dag}\left(
\mathcal{N}(\sigma)^{\frac{\alpha-1}{2\alpha}}\mathcal{N(}\rho)^{\frac
{1-\alpha}{\alpha}}\mathcal{N(}\sigma)^{\frac{\alpha-1}{2\alpha}}\right)
\sigma^{\frac{1-\alpha}{2\alpha}}\rho^{\frac{1}{2}}\tau^{\frac{\alpha
-1}{\alpha}}\right\}  .
\end{multline}
Now define%
\begin{equation}
\widetilde{\Delta}_{\alpha}(\rho,\sigma,\mathcal{N};\tau)\equiv\frac{1}%
{\alpha-1}\log\text{Tr}\left\{  \rho^{\frac{1}{2}}\sigma^{\frac{1-\alpha
}{2\alpha}}\mathcal{N}^{\dag}\left(  \mathcal{N}(\sigma)^{\frac{\alpha
-1}{2\alpha}}\mathcal{N}(\rho)^{\frac{1-\alpha}{\alpha}}\mathcal{N}%
(\sigma)^{\frac{\alpha-1}{2\alpha}}\right)  \sigma^{\frac{1-\alpha}{2\alpha}%
}\rho^{\frac{1}{2}}\tau^{\frac{\alpha-1}{\alpha}}\right\}  .
\end{equation}
Let us focus on the trace functional in the above equation:%
\begin{align}
&  \text{Tr}\left\{  \rho^{\frac{1}{2}}\sigma^{\frac{1-\alpha}{2\alpha}%
}\mathcal{N}^{\dag}\left(  \mathcal{N}(\sigma)^{\frac{\alpha-1}{2\alpha}%
}\mathcal{N}(\rho)^{\frac{1-\alpha}{\alpha}}\mathcal{N}(\sigma)^{\frac
{\alpha-1}{2\alpha}}\right)  \sigma^{\frac{1-\alpha}{2\alpha}}\rho^{\frac
{1}{2}}\tau^{\frac{\alpha-1}{\alpha}}\right\} \nonumber\\
&  =\text{Tr}\left\{  \sigma^{\frac{1-\alpha}{2\alpha}}\rho^{\frac{1}{2}}%
\tau^{\frac{\alpha-1}{\alpha}}\rho^{\frac{1}{2}}\sigma^{\frac{1-\alpha
}{2\alpha}}\mathcal{N}^{\dag}\left(  \mathcal{N}(\sigma)^{\frac{\alpha
-1}{2\alpha}}\mathcal{N}(\rho)^{\frac{1-\alpha}{\alpha}}\mathcal{N}%
(\sigma)^{\frac{\alpha-1}{2\alpha}}\right)  \right\} \nonumber\\
&  =\text{Tr}\left\{  \mathcal{N}\left(  \sigma^{\frac{1-\alpha}{2\alpha}}%
\rho^{\frac{1}{2}}\tau^{\frac{\alpha-1}{\alpha}}\rho^{\frac{1}{2}}%
\sigma^{\frac{1-\alpha}{2\alpha}}\right)  \mathcal{N}(\sigma)^{\frac{\alpha
-1}{2\alpha}}\mathcal{N}(\rho)^{\frac{1-\alpha}{\alpha}}\mathcal{N}%
(\sigma)^{\frac{\alpha-1}{2\alpha}}\right\} \nonumber\\
&  =\text{Tr}\left\{  \mathcal{N}(\sigma)^{\frac{\alpha-1}{2\alpha}%
}\mathcal{N}\left(  \sigma^{\frac{1-\alpha}{2\alpha}}\rho^{\frac{1}{2}}%
\tau^{\frac{\alpha-1}{\alpha}}\rho^{\frac{1}{2}}\sigma^{\frac{1-\alpha
}{2\alpha}}\right)  \mathcal{N}(\sigma)^{\frac{\alpha-1}{2\alpha}}%
\mathcal{N}(\rho)^{\frac{1-\alpha}{\alpha}}\right\}  .
\end{align}
By making the substitution $\alpha^{\prime}\equiv\left(  2\alpha-1\right)
/\alpha$, so that $\left(  1-\alpha\right)  /\alpha=1-\alpha^{\prime}$ and
thus $\alpha^{\prime}\in\left(  1,2\right)  $ when $\alpha\in\left(
1,\infty\right)  $, we see that the last line above is equal to%
\begin{equation}
\text{Tr}\left\{  \mathcal{N}(\sigma)^{\frac{\alpha^{\prime}-1}{2}}%
\mathcal{N}\left(  \sigma^{\frac{1-\alpha^{\prime}}{2}}\rho^{\frac{1}{2}}%
\tau^{\left(  \alpha^{\prime}-1\right)  }\rho^{\frac{1}{2}}\sigma
^{\frac{1-\alpha^{\prime}}{2}}\right)  \mathcal{N}(\sigma)^{\frac
{\alpha^{\prime}-1}{2}}\mathcal{N}(\rho)^{\left(  1-\alpha^{\prime}\right)
}\right\}  .
\end{equation}
Observe that this is similar to\ (\ref{eq:Delta-rewrite}). Hence we can write
$\widetilde{\Delta}_{\alpha}(\rho,\sigma,\mathcal{N};\tau)$ as%
\begin{multline}
\frac{1}{\alpha^{\prime}-1}\log\text{Tr}\left\{  \mathcal{N}(\sigma
)^{\frac{\alpha^{\prime}-1}{2}}\mathcal{N}\left(  \sigma^{\frac{1-\alpha
^{\prime}}{2}}\rho^{\frac{1}{2}}\tau^{\left(  \alpha^{\prime}-1\right)  }%
\rho^{\frac{1}{2}}\sigma^{\frac{1-\alpha^{\prime}}{2}}\right)  \mathcal{N}%
(\sigma)^{\frac{\alpha^{\prime}-1}{2}}\mathcal{N}(\rho)^{\left(
1-\alpha^{\prime}\right)  }\right\} \\
=D_{\alpha^{\prime}}\left(  \left[  \mathcal{N}(\sigma)^{\frac{\alpha^{\prime
}-1}{2}}\mathcal{N}\left(  \sigma^{\frac{1-\alpha^{\prime}}{2}}\rho^{\frac
{1}{2}}\tau^{\left(  \alpha^{\prime}-1\right)  }\rho^{\frac{1}{2}}%
\sigma^{\frac{1-\alpha^{\prime}}{2}}\right)  \mathcal{N}(\sigma)^{\frac
{\alpha^{\prime}-1}{2}}\right]  ^{\frac{1}{\alpha^{\prime}}}\middle\Vert
\mathcal{N}(\rho)\right)  .
\end{multline}
This implies that%
\begin{align}
&  \widetilde{\Delta}_{\alpha}(\rho,\sigma,\mathcal{N})\nonumber\\
&  =\sup_{\tau\geq0,\text{Tr}\left\{  \tau\right\}  \leq1}D_{\alpha^{\prime}%
}\left(  \left[  \mathcal{N}(\sigma)^{\frac{\alpha^{\prime}-1}{2}}%
\mathcal{N}\left(  \sigma^{\frac{1-\alpha^{\prime}}{2}}\rho^{\frac{1}{2}}%
\tau^{\alpha^{\prime}-1}\rho^{\frac{1}{2}}\sigma^{\frac{1-\alpha^{\prime}}{2}%
}\right)  \mathcal{N}(\sigma)^{\frac{\alpha^{\prime}-1}{2}}\right]  ^{\frac
{1}{\alpha^{\prime}}}\middle\Vert\mathcal{N}(\rho)\right) \\
&  \geq D_{\alpha^{\prime}}\left(  \left[  \mathcal{N}(\sigma)^{\frac
{\alpha^{\prime}-1}{2}}\mathcal{N}\left(  \sigma^{\frac{1-\alpha^{\prime}}{2}%
}\rho^{\frac{1}{2}}\rho^{\left(  \alpha^{\prime}-1\right)  }\rho^{\frac{1}{2}%
}\sigma^{\frac{1-\alpha^{\prime}}{2}}\right)  \mathcal{N}(\sigma
)^{\frac{\alpha^{\prime}-1}{2}}\right]  ^{\frac{1}{\alpha^{\prime}}%
}\middle\Vert\mathcal{N}(\rho)\right) \\
&  =D_{\alpha^{\prime}}\left(  \left[  \mathcal{N}(\sigma)^{\frac
{\alpha^{\prime}-1}{2}}\mathcal{N}\left(  \sigma^{\frac{1-\alpha^{\prime}}{2}%
}\rho^{\alpha^{\prime}}\sigma^{\frac{1-\alpha^{\prime}}{2}}\right)
\mathcal{N}(\sigma)^{\frac{\alpha^{\prime}-1}{2}}\right]  ^{\frac{1}%
{\alpha^{\prime}}}\middle\Vert\mathcal{N}(\rho)\right) \\
&  =\Delta_{\alpha^{\prime}}(\rho,\sigma,\mathcal{N})\\
&  \geq0,
\end{align}
where the first inequality follows by setting $\tau=\rho$ and the last
inequality follows because we have already shown that $\Delta_{\alpha^{\prime
}}(\rho,\sigma,\mathcal{N})\geq0$ for $\alpha^{\prime}\in\left(  1,2\right)
$. The above inequality also demonstrates that%
\begin{equation}
\widetilde{\Delta}_{\alpha}(\rho,\sigma,\mathcal{N})\geq\Delta_{\alpha
^{\prime}}(\rho,\sigma,\mathcal{N}). \label{eq:Delta-tilde-Delta}%
\end{equation}

\end{proof}

\bigskip

\noindent We can use this result to give an alternative proof of the fact that
$\mathcal{N}$ is sufficient for $\rho$ and $\sigma$ if $\Delta_{\alpha}%
(\rho,\sigma,\mathcal{N})=0$ for some $\alpha\in(1,2)$ or if $\widetilde
{\Delta}_{\alpha}(\rho,\sigma,\mathcal{N})=0$ for some $\alpha\in\left(
1,\infty\right)  $:

\bigskip

\begin{proof}
If for some $\alpha\in(1,2)$ we have $\Delta_{\alpha}(\rho,\sigma
,\mathcal{N})=0$, then we know that%
\begin{align}
\left[  \mathcal{N}(\sigma)^{\frac{\alpha-1}{2}}\mathcal{N}\left(
\sigma^{\frac{1-\alpha}{2}}\rho^{\alpha}\sigma^{\frac{1-\alpha}{2}}\right)
\mathcal{N}(\sigma)^{\frac{\alpha-1}{2}}\right]  ^{\frac{1}{\alpha}}  &
=\mathcal{N}(\rho)\\
\Longleftrightarrow\mathcal{N}\left(  \sigma^{\frac{1-\alpha}{2}}\rho^{\alpha
}\sigma^{\frac{1-\alpha}{2}}\right)   &  =\mathcal{N}(\sigma)^{\frac{1-\alpha
}{2}}\mathcal{N}(\rho)^{\alpha}\mathcal{N}(\sigma)^{\frac{1-\alpha}{2}}%
\end{align}
This implies that%
\begin{equation}
\text{Tr}\left\{  \mathcal{N}\left(  \sigma^{\frac{1-\alpha}{2}}\rho^{\alpha
}\sigma^{\frac{1-\alpha}{2}}\right)  \right\}  =\text{Tr}\left\{
\mathcal{N}(\rho)^{\alpha}\mathcal{N}(\sigma)^{1-\alpha}\right\}
\end{equation}
which in turn implies that%
\begin{equation}
\text{Tr}\left\{  \rho^{\alpha}\sigma^{1-\alpha}\right\}  =\text{Tr}\left\{
\mathcal{N}(\rho)^{\alpha}\mathcal{N}(\sigma)^{1-\alpha}\right\}  ,
\end{equation}
because $\mathcal{N}$ is trace preserving. This is then an equality of
$f$-divergences, which we can use to conclude the sufficiency property as in
previous proofs.

If $\widetilde{\Delta}_{\alpha}(\rho,\sigma,\mathcal{N})=0$, for some
$\alpha\in\left(  1,\infty\right)  $, then from (\ref{eq:Delta-tilde-Delta})
we have that $\Delta_{\alpha^{\prime}}(\rho,\sigma,\mathcal{N})=0$ for some
$\alpha^{\prime}\in(1,2)$. Then we know from the above analysis that
$\mathcal{N}$ is sufficient for $\rho$ and $\sigma$.
\end{proof}

\bibliographystyle{alpha}
\bibliography{Ref}

\end{document}